      \theoremstyle{plain}
      \newtheorem{assumption}{Assumption}
\newtheorem{theorem}{Theorem}
\newtheorem{corollary}{Corollary}
\newtheorem{remark}{Remark}
\newtheorem{lemma}{Lemma}
\newtheorem{definition}{Definition}
\newtheorem{proposition}{Proposition}
\begin{document}
%
\title{
Analysis, Control, and State Estimation for the Networked Competitive Multi-Virus SIR Model
}
%
%
%

\author{Ciyuan Zhang, Sebin Gracy, Tamer Ba\c sar, and Philip E. Par\'e*
\thanks{An earlier version of part of this work was presented at the IFAC Conference on Networked Systems (Necsys22), Zurich, Switzerland~\cite{zhang2022networked}.}
\thanks{*Ciyuan Zhang and~Philip E. Par\'e are with the Elmore Family School of Electrical and Computer Engineering at Purdue University, West Lafayette, IN, 47907, USA. Emails: \{zhan3375, philpare\}@purdue.edu. Sebin Gracy is with the Electrical and Computer Engineering Department at Rice University, Houston, TX, 77005, USA. Email: sebin.gracy@rice.edu. Tamer Ba\c sar is with the Coordinated Science Laboratory at the University of Illinois at Urbana-Champaign, Urbana, IL, 61801, USA. Email: basar1@illinois.edu.
Research supported in part by the Rice Academy of Fellows, Rice University, and the National Science Foundation, grants NSF-ECCS \#2032258 and NSF-ECCS \#2032321.}}
\maketitle

\begin{abstract}
This paper proposes a novel discrete-time multi-virus  susceptible-infected-recovered (SIR) model that captures the spread of competing 
epidemics over a population network. First, we provide sufficient conditions for the infection level of all the viruses over the networked model to converge to zero in exponential time. Second, we propose an observation model which captures the summation of all the viruses' infection levels in each node, which represents the individuals who are infected by different viruses but share similar symptoms. Third, we present a sufficient condition for the model to be strongly locally observable, assuming that the network has only infected or recovered individuals. Fourth, we propose a Luenberger observer for estimating the states of our system.
We 
prove that the estimation error of our proposed estimator converges to zero asymptotically with the observer gain. 
Finally, we present a distributed feedback controller which guarantees that each virus dies out at an exponential rate.  We then show via simulations that the estimation error of the Luenberger observer converges to zero before the viruses die out. 

\end{abstract}


%
\IEEEpeerreviewmaketitle

\section{Introduction}
The history of human civilization has been a narrative of undergoing, battling, and outmatching various pandemics~\cite{benedictow2004black, cheng2007happened, johnson2002updating}. 
The devastation that epidemics leave in their wake is well-known; as examples, 
the $2009$ swine flu pandemic, which was caused by the H1N1 influenza virus resulted in $60.8$ million cases worldwide~\cite{whoswine}, and the more recent SARS-CoV-2 virus infected 642 million individuals globally~\cite{whoCoronavirus}. Given the detrimental impacts that epidemics have on society,
research 
on the spread of multiple diseases has attracted attention from several research communities. In fact, research activity 
has grown 
massively 
with the investigation of each epidemic.
%
Various infection models have been proposed and studied in the literature to capture epidemic processes, based on the characteristics of individual pathogens, some of the most basic ones 
being the susceptible-infected-susceptible (SIS), susceptible-infected-removed (SIR), and susceptible-infected-removed-susceptible (SIRS) models~\cite{mei2017dynamics}. 
The SIR model is a simple yet classic epidemic model for diseases in which hosts recover with permanent or close to permanent immunity following infection~\cite{weiss2013sir}. Diseases belonging to this category include a wide range of airborne diseases: Spanish Flu, SARS~\cite{colizza2007predictability}, MERS~\cite{chang2017estimation}, Influenza~\cite{osthus2017forecasting}, H1N1~\cite{coburn2009modeling}, Ebola~\cite{berge2017simple}, and COVID-19~\cite{calafiore2020modified, chen2020time}, to cite a few.

To make the situation more complicated (and realistic),
it is not unusual for
multiple
viruses/strains to be simultaneously 
active in
a community. 
When there are multiple viruses circulating, those could either be cooperative, in which case infection increases the likelihood of simultaneous infection with another virus \cite{gracy2022modeling}, or competitive, in which case infection with one virus precludes the possibility of simultaneous infection with another virus, \cite{liu2019analysis,ye2021convergence}. 
Inspired by the competition of different virus strains in population networks~\cite{pepin2008asymmetric, nowak1991evolution, poland1996two}, 
we investigate in this paper the spread of competitive viruses over a population network. Note that the network model can efficiently capture the spread of epidemics over large populations as compared to single-population compartmental models~\cite{van2009virus}.
An extensive amount of effort has been expended on the study of
multi-virus models~\cite{pare2017multi, prakash2012winner, pare2020analysis, sahneh2014competitive, santos2015bi, liu2016analysis, pare2021multi, janson2020networked, ye2021convergence}, all of which focus on the competing SIS networked virus model, whereas
a susceptible-infected-water-susceptible (SIWS) multi-virus model with a shared resource was investigated in \cite{janson2020networked}. 
The competing networked epidemic models can be applied to many fields other than epidemiology, such as modeling the spread of competitive opinions spread over social networks~\cite{sahneh2014competitive}, rival merchandise's sales in a market~\cite{prakash2012winner}, and competing US Department of Agriculture farm subsidy programs~\cite{pare2020analysis}.
\par The single virus SIR epidemic networked model has been studied extensively, e.g.,~\cite{zhang2021estimation, ito2022strict, she2021peak, she2022learning, smith2022convex}. 
The competing SIR epidemic model has 
been investigated in
~\cite{ignatov2022two}. 
The focus in~\cite{ignatov2022two}, however, was on a 
competing bi-virus (two viruses) SIR epidemic model; it does not account for the networked case nor does it consider 
an arbitrary but finite number of viruses. 
To the best of our knowledge, a networked multi-competitive SIR model that accounts for the presence of an arbitrary but finite number of viruses has
not yet been studied in the literature. 
Thus, in this work, 
we propose a discrete-time multi-competitive networked SIR model. 
The multi-virus model captures the presence and spread of multiple viruses/variants over a population, 
such as different variants of 
SARS-CoV-2 virus, 
and could also be utilized to represent different behaviors of strains of viruses that compete with each other
~\cite{lopez2021effectiveness}.


Beyond the modeling and analysis of 
epidemic models, 
monitoring of epidemics and estimation of infection levels in the population 
have been some of the fundamental quests in 
the research on competing contagions.
Given that the SARS-CoV-2 pandemic has provided us with an enormous amount of data, the question of how to accurately infer the infection levels 
with respect to various strains of a virus (Delta, Omicron, etc.) and/or with respect to different viruses (influenza, SARS-CoV-2) 
have been pursued by
the research community~\cite{barmparis2020estimating, russell2020estimating, meyerowitz2020systematic}.
Note that certain infectious diseases, 
in particular Delta and Omicron variants of the SARS-CoV-2 virus \cite{antonelli2022risk, rader2022use}
in spite of being competing, exhibit similar symptoms. Consequently, it is quite likely that an individual suffering from the Omicron variant could be diagnosed as having Delta, and vice-versa. Thus,  certain competing SIR epidemics can affect the measurement of the cases corresponding to each epidemic and, therefore,  pose difficulties for the estimation of the states of 
each epidemic.
When testing facilities are limited, as was witnessed at the beginning of the SARS-CoV-2 pandemic and at various peaks of different waves~\cite{perks2021covid}, it is even more difficult to, assuming an individual is infected, accurately infer which virus is it that said individual is infected with\cite{ma2021metagenomic, belongia2020covid}.
In this paper, we consider the following research question: assuming that there are multiple competing viruses spreading over a network, given measurements of individuals who exhibit symptoms corresponding to at least one (but possibly more) of the viruses, can we accurately estimate the infection level with respect to each virus?
State estimation of non-linear systems having a structure similar to SIR epidemic models has been studied in~\cite{niazi2022observer, bara2005observer, arcak2001nonlinear}. However, scenarios where the observation is an accumulation of the system states with a dimension lower than that of the system states have not been investigated in the literature. 

In \cite{mitra2018distributed}, the authors thoroughly investigated a distributed observer for discrete-time LTI networked systems, where the observation model 
collects measurements 
of the system states from a 
node and its neighbors. The paper focused on the LTI systems which 
do not capture the nonlinear feature of the epidemic systems; it did not consider the scenario when the dimension of the observation vector is lower than the dimension of the state space. 
However, 
the distributed approach for observer design in \cite{mitra2018distributed} has inspired us to design a distributed estimator which considers the measurements of the local node and the inferred systems states of the 
neighboring nodes
in the network for estimation. Different from~\cite{mitra2018distributed}, we assume that each node (corresponds to a subpopulation in our case) obtains completely accurate information regarding the values that the states of its neighbors take and that, furthermore, there is no time delay.

Note that the discussion so far has focused on the stability analysis, observability, and design of the observer for the model that we have proposed. Yet another challenging problem that policymakers are confronted with during the course of a pandemic is the design of effective eradication/mitigation strategies. 
As we will show later in the paper, all the viruses in the competing SIR networked model die out eventually regardless of the system parameters. However, the exponential decay of the viruses is essential for the decision-makers as it guarantees that fewer individuals become infected by the virus over the course of the outbreak compared to non-exponential convergence~\cite{zhang2021estimation}. 
To the best of our knowledge, distributed feedback control for competing SIR networked epidemic models has remained an open problem- the present paper aims to address this gap as well.

Earlier versions of some of the results in this paper were presented in~\cite{zhang2022networked}.
This paper substantially expands upon 
the conference version 
by:
\begin{itemize}
    \item 
    deriving a result
    guaranteeing that the estimation error converges to zero asymptotically (see Theorem~\ref{thm:Error_GAS});
    \item proposing a distributed mitigation strategy that eradicates all the viruses at an exponential rate, (see Theorem~\ref{thm:control_feedback}); and
    \item presenting additional simulations which corroborate the theoretical results of the paper and show how to choose the feasible observer gain such that the estimated system states are well-defined.
\end{itemize}

\subsection{Paper Contributions}

Through this paper, we make the following contributions:
\begin{enumerate}[label=\roman*)]
    \item We propose 
    a discrete-time competitive multi-virus networked SIR model; see Eq.~\eqref{eq:dt_SIR}.
    \item We 
    provide sufficient conditions for the infection level of each virus to converge to zero in exponential time; see Theorem~\ref{thm:GES} 
    and Proposition~\ref{prop:rho_tilde}.
    \item We specify a necessary and sufficient condition that guarantees that the system is strongly locally observable at a certain system state; see Theorem~\ref{thm:Observable_zero}.
    \item We also propose a distributed state estimator and study how to design the observer gain such that the system state estimation errors converge to zero asymptotically; see 
    Theorem~\ref{thm:Error_GAS}.
    \item We propose a distributed mitigation algorithm that ensures that the infection level of each virus converges to zero within at least exponential time; see Theorem~\ref{thm:control_feedback}.
\end{enumerate}

\subsection{Paper Outline}

The paper is organized
as follows: Section~\ref{sec:background} proposes a novel discrete-time competitive multi-virus networked SIR model, introduces the research questions addressed in the paper, and provides preliminary results. Section~\ref{sec:stability} 
provides two sufficient conditions for eradication of a virus in exponential time.
Section~\ref{sec:observation} examines the observation model, specifies a 
sufficient condition for the system to be strongly locally observable at a system state (namely, where every node is either in the infected or in the 
recovered
compartment), proposes a distributed Luenberger 
estimator
which estimates the states of the aforementioned model,
and determines 
a condition which ensures that the estimation error of the proposed observer converges to zero asymptotically.
Section~\ref{sec:mitigation} proposes a distributed feedback controller 
that guarantees that each virus dies out at an exponential rate.
In Section~\ref{sec:simulation}, we utilize simulations to illustrate the results from Sections~\ref{sec:stability},~\ref{sec:observation}, and \ref{sec:mitigation}. Finally, Section~\ref{sec:conclusion} summarizes the main 
findings
of the paper. 

\subsection{Notation}

We denote the set of real numbers and the set of non-negative integers by $\mathbb{R}$ and $\mathbb{Z}_{\geq 0}$, respectively. For any positive integer $n$, we have $[n]:=\{1,2,...,n \}$. The spectral radius and an eigenvalue of a matrix $A \in \mathbb{R}^{n\times n}$ are denoted by $ \rho(A)$ and $\lambda(A)$, respectively. A diagonal matrix is denoted by diag$(\cdot)$. The transpose of a vector $x\in \mathbb{R}^n$ is $x^\top$. The Euclidean norm is denoted by $\lVert \cdot \rVert$. We use $I$ to denote the identity matrix. We use $\mathbf{0}$ and $\mathbf{1}$ to denote the vectors whose entries are all $0$ and $1$, respectively, where the dimensions of the vectors are determined by context. Given a matrix $A$, $A \succ 0$ 
indicates that $A$ is positive definite, 
whereas $A \prec 0$ 
indicates that A is negative definite, and $A^{-1}$ represents the inverse of matrix $A$.
Let $G =(\mathbb{V},\mathbb{E})$ denote a graph or network where $\mathbb{V} = \{ v_1, v_2,..., v_n\}$ is the set of nodes, and $\mathbb{E} \subseteq \mathbb{V}\times \mathbb{V} $ is the set of edges.


 




\section{Background}\label{sec:background}
In this section, we present our system model, recall some preliminary results that will be used in the sequel, 
and 
formally state the research questions that the paper addresses. 


\subsection{System Model}

\begin{figure}
\centering
\includegraphics[width=.35\textwidth]{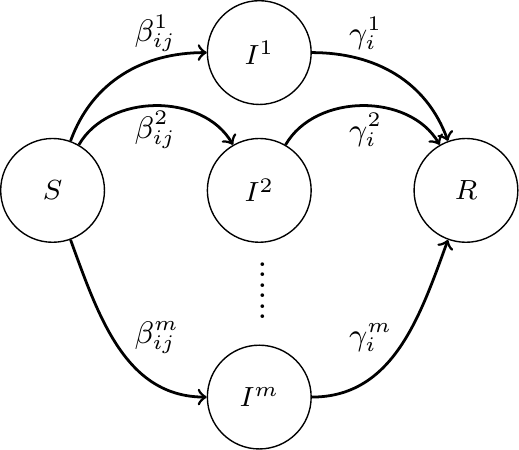}
\caption{The competing SIR networked model. Each node in the network can only be in one of $(m+2)$ states: $S$, $I^k$, $R$, where $k \in[m]$.}
\label{fig:SI^nR}
\end{figure}

Consider a network of $n \geq 2$ nodes in which $m$ ($m \geq 2$), viruses compete to infect the nodes. 
A node here represents a well-mixed population (hereafter, referred to as a subpopulation) of individuals with a large and constant size. A node could be in one of $m+2$ mutually exclusive compartments: susceptible, infected with virus~$k$, 
$k=[m]$,
or recovered. We say that a node is healthy if all individuals within it are healthy; otherwise, we say that it is infected. An individual in the susceptible compartment transitions to the "infected with virus~$k$" (for $k \in [m]$) compartment depending on its infection rate with respect to virus~$k$, namely $\beta_i^k$. An individual in the "infected with virus~$k$" compartment transitions to the recovered compartment depending on its recovery rate with respect to virus~$k$, namely $\gamma_i^k$.

We use an $m$-layer graph $G$ to capture 
the spread of $m$-competing viruses. 
The vertices of $G$ correspond to the population nodes. The contact  graph
that represents the pathways through which virus~$k$, for each $k \in [m]$, spreads in the population
 is denoted by the $k^\textrm{th}$ layer. 
In particular, there exists a directed edge from node $j$ to node $i$ in layer $k$ if, assuming an individual in population $j$ is infected with virus~$k$, then said individual can infect at least one  healthy individual in node~$i$. The edge set corresponding to the $k^\textrm{th}$ layer of $G$ is denoted by $E^k$, while $ A^k$ denotes the weighted adjacency matrix 
 (where $a_{ij}^k \geq 0$).
 Note that $(i,j)\in E^k$ if, and only if, $a_{ji}^k\neq 0$. We denote by $s_i$ and $r_i$, respectively, the susceptible and recovered proportions of subpopulation $i$. We use $x^k_i[t]$, where $k \in [m]$, to denote the fraction of individuals at node $i$ infected with virus $k$ at time instant $t$. A pictorial depiction of this
model is given in Figure~\ref{fig:SI^nR}. 
In continuous time, the dynamics of the $i$-th node can be written as follows:
\begin{subequations}
\IEEEyesnumber\label{eq:ct_SIR} 
\begin{align}
 \dot{s}_i&= -(1-x_i^1 - \cdots-x_i^m-r_i) \sum_{k=1}^m\sum_{j=1}^n \beta^k_{ij}x^k_j, \label{eq:ct_SIRsub1}\\
\dot{x}^k_i &= (1-x_i^1 - \cdots-x_i^m-r_i) \sum_{j=1}^n \beta^k_{ij}x^k_j -\gamma^k_i x^k_i,  \label{eq:ct_SIRsub2}
\\
\dot{r}_i &=  \sum_{k=1}^m\gamma^k_i x^k_i, \;\ \forall i \in [n].
\label{eq:ct_SIRsub3}
\end{align}
\end{subequations}
Note that while an epidemic process evolves in continuous time, the data regarding the evolution of an epidemic are compiled on a daily basis \cite{whoCoronavirus,snow1855mode} or on a weekly basis \cite{whoEbola}. Such a sampling of the system behavior  motivates the use of a discrete-time multi-competitive networked SIR model.
By using the Euler method~\cite{atkinson2008introduction}, we derive the discrete-time dynamics 
of the SIR networked epidemic model at node $i$ as:
\begin{subequations}
\IEEEyesnumber\label{eq:dt_SIR} 
\begin{align}
  &s_i[t+1] = s_i[t] -h\Bigg(s_i[t]\sum_{k=1}^m\sum_{j=1}^n \beta^k_{ij} x^k_j[t]\Bigg), \label{eq:dt_SIRsub1}\\
 &x^k_i[t+1] = x^k_i[t]+ h\Bigg(s_i[t]\sum_{j=1}^n \beta^k_{ij} x^k_j[t]-\gamma^k_i x^k_i[t]\Bigg),  \label{eq:dt_SIRsub2}
\\
 &r_i[t+1] = r_i[t] +h\sum_{k=1}^m \gamma^k_i x^k_i[t], \label{eq:dt_SIRsub3}
\end{align}
\end{subequations}
where $h>0$ is the sampling parameter, 
$t$ is the time index, and $k\in [m]$ indicates the $k$-th virus. Notice that $s_i[t] +x_i^1[t] + \cdots+x_i^m[t]+r_i[t]= 1$, capturing the fact that in the competing virus scenario, all the viruses are mutually exclusive. 
We now rewrite~\eqref{eq:dt_SIRsub2} in compact form as:
\begin{equation}\label{eq:SIR_dynamic}
    x^k[t+1] =x^k[t] +h \{S[t]B^k - \Gamma^k\}x^k[t],
\end{equation}
where $S[t] = $ diag$(s_i[t])$, 
$B^k$ is a matrix with ($i,j$)-th entry $\beta^k_{ij}$, and $\Gamma^k[t] =$ diag$(\gamma^k_i)$.

We now introduce the following assumptions to ensure that the model in~\eqref{eq:dt_SIR} is well-defined. 

\begin{assumption}\label{assume:one}
For all $i \in [n]$ and $k \in [m]$, we have $s_i[0], x^k_i[0], r_i[0] \in [0,1]$ and $s_i[0]+\sum_{k=1}^m x^k_i[0]+r_i[0] =1$.
\end{assumption}

\begin{assumption}\label{assume:two}
For all $i \in [n]$, and $k \in [m]$, we have $\beta^k_{ij} \geq 0, \gamma^k_i> 0$.
\end{assumption}

\begin{assumption}\label{assume:three}
For all $i \in [n]$,  and $k \in [m]$, we have $h\sum_{k=1}^m \sum_{j=1}^n \beta^k_{ij} \leq 1$ and $ h\sum_{k=1}^m \gamma^k_i\leq 1$.
\end{assumption}

Assumptions~\ref{assume:one} and \ref{assume:two} can be interpreted as the initial proportion of susceptible, infected, and recovered individuals, all 
lying
in the interval 
$[0,1]$, and 
that the healing rates are always positive, which are both reasonable~\cite{mei2017dynamics, brauer2019mathematical, she2021peak}. Assumption~\ref{assume:three} ensures that the sampling rate is frequent enough for the states of the model to remain well-defined.



Motivated by the fact that different infectious diseases can demonstrate similar symptoms over the hosts~\cite{ma2021metagenomic, belongia2020covid}, 
we build an observation model which produces the output as the aggregated proportion of individuals who show flu-like symptoms from infection of all viruses. The observation model is written as  (where we repeat \eqref{eq:dt_SIRsub2} for convenience):
\begin{subequations}
\IEEEyesnumber\label{eq:SIR_observation} 
\begin{align}
  x^k_i[t+1] &= x^k_i[t] +h\bigg\{s_i[t]\sum_{j=1}^n \beta^k_{ij} x^k_j[t]-\gamma^k_i x^k_i[t]\bigg\}, \label{eq:SIR_observationsub1}\\
y_i[t] & = \sum_{k=1}^m c_i^k x_i^k[t],\label{eq:SIR_observationsub2}
\end{align}
\end{subequations}
where $c_i^k$ is the measurement coefficient.
\begin{assumption}\label{assume:four}
The coefficient $c_i^k\in (0,1]$ for all $i\in [n], k\in[m]$.
\end{assumption}

\begin{remark}
The coefficient $c_i^k$ 
from Eq.~\eqref{eq:SIR_observationsub2} can capture the probability of showing symptoms from the $k$-th virus at subpopulation~$i$. Therefore, $1-c_i^k$ captures the probability of individuals infected by the $k$-th virus in subpopulation~$i$
being asymptomatic. 
The probability of exhibiting symptoms corresponding to different viruses has been studied in, among others,
~\cite{teunis2008norwalk, panovska2020determining}.
The coefficient $c_i^k $ can also represent how each subpopulation $i$ defines and measures the cases based on the symptoms of each virus $k$. For example, the symptoms of the SARS-CoV-2 virus can include but are not limited to fever, muscle aches, cough, runny nose, headaches, and fatigue.
\end{remark}

\begin{remark}
Given a subpopulation $i$, Eq.~\eqref{eq:SIR_observationsub2} captures the total proportion of symptomatic patients at subpopulation $i$. This value, namely $y_i[t]$, is a useful indicator for decision-makers to formulate policies regarding the adequacy of local healthcare facilities and the availability of virus-combating resources.
\end{remark}
\noindent We next present the following results for the system model under Assumptions~\ref{assume:one}-\ref{assume:three}. 

\begin{lemma}\label{lemma:one}
Under Assumptions~\ref{assume:one}, \ref{assume:two}, and \ref{assume:three}
, for all $i \in[n]$ and $t \in \mathbb{Z}_{\geq0}$,
\begin{enumerate}
    \item $s_i[t], x^k_i[t], r_i[t] \in [0,1]$, for all $k \in [m]$, and $s_i[t]+\sum_{k=1}^m x^{k}_i[t]+r_i[t] =1$, and 
    \item $s_i[t+1] \leq s_i[t]$.
\end{enumerate}
\end{lemma}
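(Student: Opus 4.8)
The plan is to prove part~1 by induction on the time index $t$, treating the conservation identity $s_i[t]+\sum_{k=1}^m x_i^k[t]+r_i[t]=1$ and the interval membership together, since they reinforce one another; part~2 will then follow as an immediate consequence. For the base case $t=0$, both the membership $s_i[0], x_i^k[0], r_i[0]\in[0,1]$ and the conservation identity are supplied directly by Assumption~\ref{assume:one}, so nothing is required there.

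For the inductive step, I would first establish the conservation law at time $t+1$. Summing \eqref{eq:dt_SIRsub1}, the sum over $k$ of \eqref{eq:dt_SIRsub2}, and \eqref{eq:dt_SIRsub3}, the transmission terms (those carrying a factor $s_i[t]\beta_{ij}^k$) cancel against each other, and the recovery terms (those carrying $\gamma_i^k x_i^k[t]$) likewise cancel, leaving $s_i[t+1]+\sum_k x_i^k[t+1]+r_i[t+1]=s_i[t]+\sum_k x_i^k[t]+r_i[t]$, which equals $1$ by the induction hypothesis. Once conservation holds at $t+1$, the upper bounds (each quantity $\le 1$) are automatic as soon as every quantity is known to be nonnegative, because $m+2$ nonnegative numbers summing to $1$ are each at most $1$. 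Hence the real content reduces to proving nonnegativity at $t+1$.

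The nonnegativity argument is where the three assumptions do the work, and I expect it to be the crux. For $r_i[t+1]$ it is immediate from \eqref{eq:dt_SIRsub3}, using $h>0$, $\gamma_i^k>0$ (Assumption~\ref{assume:two}), and $x_i^k[t],r_i[t]\ge 0$ from the induction hypothesis. For $s_i[t+1]$ I rewrite \eqref{eq:dt_SIRsub1} as $s_i[t+1]=s_i[t]\bigl(1-h\sum_k\sum_j\beta_{ij}^k x_j^k[t]\bigr)$; invoking $x_j^k[t]\le 1$ (induction) together with $h\sum_k\sum_j\beta_{ij}^k\le 1$ from Assumption~\ref{assume:three} shows the bracketed factor is nonnegative, while $s_i[t]\ge 0$ by induction. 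For $x_i^k[t+1]$ I regroup \eqref{eq:dt_SIRsub2} as $x_i^k[t](1-h\gamma_i^k)+h s_i[t]\sum_j\beta_{ij}^k x_j^k[t]$: the second summand is a product of nonnegative factors, and for the first the chain $h\gamma_i^k\le h\sum_{k'}\gamma_i^{k'}\le 1$ (Assumption~\ref{assume:three}, using positivity of the $\gamma_i^{k'}$) gives $1-h\gamma_i^k\ge 0$. This closes the induction and completes part~1.

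Part~2 then follows directly: from \eqref{eq:dt_SIRsub1}, $s_i[t+1]-s_i[t]=-h\,s_i[t]\sum_k\sum_j\beta_{ij}^k x_j^k[t]\le 0$, since every factor in the subtracted term is nonnegative by Assumption~\ref{assume:two} and the interval membership just established. The one subtlety worth flagging is to derive the upper-bound half of part~1 from the conservation identity rather than from the update equations directly; this keeps the nonnegativity step self-contained and lets each half of Assumption~\ref{assume:three}---the $\beta$-bound for $s_i$ and the $\gamma$-bound for $x_i^k$---be used exactly where it is needed.
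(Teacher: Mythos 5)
Your proof is correct and follows essentially the same route as the paper's: induction on $t$, with nonnegativity of each compartment extracted from the update equations via Assumptions~\ref{assume:two} and~\ref{assume:three}, the conservation identity obtained by summing \eqref{eq:dt_SIRsub1}--\eqref{eq:dt_SIRsub3}, and part~2 read off directly from \eqref{eq:dt_SIRsub1}. The only minor difference is bookkeeping: the paper bounds each state above by $1$ directly from the update equations (e.g., $x_i^k[t+1]\le x_i^k[t]+s_i[t]\le 1$), whereas you derive the upper bounds from conservation at $t+1$ together with nonnegativity, which is a slightly tidier packaging of the same argument.
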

\begin{proof}
Proof of statement  1): We prove this result by induction. 
  
\noindent \textit{Base Case:}
By the assumptions made, $s_i[0], x^k_i[0], r_i[0] \in [0,1]$, $s_i[0]+\sum_{k=1}^m x^k_i[0]+r_i[0] =1$ for all $k\in[m], i\in[n]$. From Assumptions~\ref{assume:one}-\ref{assume:three}, we know that $s_i[0]\geq 0$ and $1-h\sum_k^m \sum_j^n \beta_{ij}^k x_j^k[0] \geq0$, and hence $s_i[1] = s_i[0](1-h\sum_k^m \sum_j^n \beta_{ij}^k x_j^k[0])\geq0$. Since $-h(s_i[t]\sum_k^m \sum_j^n \beta_{ij}^kx_j[0])\leq 0$, we obtain that $s_i[1]\leq s_i[0]\leq 1$. 
We can also acquire that $x_i^k[1]\geq x_i^k[0](1-h\gamma_i^k)\geq 0$ and $x_i^k[1]\leq x_i^k[0]+hs_i[0]\sum_j^n \beta_{ij}^k x_j^k[0]\leq x_i^k[0]+s_i[0]\leq 1$. Ultimately, we have $r_i[1]\geq r_i[0]\geq0$ and $r_i[1]\leq r_i[0]+ \sum_k^m x_i^k[0]\leq 1$. Summing up Eqs.~\eqref{eq:dt_SIRsub1}-\eqref{eq:dt_SIRsub3}, we obtain that $s_i[1]+\sum_{k=1}^m x^{k}_i[1]+r_i[1] =s_i[0]+\sum_{k=1}^m x^{k}_i[0]+r_i[0] =1$.

\noindent \textit{Inductive Step:} We assume for some arbitrary $t$ that the following holds: $s_i[t], x^k_i[t], r_i[t] \in [0,1]$, for all $k \in [m]$ and $s_i[t]+\sum_{k=1}^m x^{k}_i[t]+r_i[t] =1$. By repeating the same steps from the \textit{Base Case} except replacing $0$ and $1$ with $k$ and $k+1$, we can write that $s_i[t+1], x^k_i[t+1], r_i[t+1] \in [0,1]$, for all $k \in [m]$ and $s_i[t+1]+\sum_{k=1}^m x^{k}_i[t+1]+r_i[t+1] =1$. Therefore, by induction, we can prove that $s_i[t], x^k_i[t], r_i[t] \in [0,1]$, for all $k \in [m]$ and $s_i[t]+\sum_{k=1}^m x^{k}_i[t]+r_i[t] =1$ for all $i \in[n]$ and $t\in  \mathbb{Z}_{\geq0}$.

Proof of statement 2): From 1) and Assumption~\ref{assume:two} we know that $-h\Big(s_i[t]\sum_{k=1}^m\sum_{j=1}^n \beta^k_{ij} x^k_j[t]\Big)\leq 0$ for all $t\in  \mathbb{Z}_{\geq0}$. Thus, we have $s_i[t+1]\leq s_t[t]$ for all $i\in [n]$ and $t\in  \mathbb{Z}_{\geq0}$. 

\end{proof}


\subsection{Preliminaries}
In this subsection, we recall certain preliminary results from  non-linear systems theory that will help in the development of the main results of the paper.
\par Consider a system described as follows:
\begin{subequations}
\IEEEyesnumber\label{eq:dt_prelim} 
\begin{align}
  x[t+1] &= f(t,x[t]), \label{eq:dt_prelimsub1}\\
 y[t] & = g(x[t]). \label{eq:dt_prelimsub2}
\end{align}
\end{subequations}

\begin{definition}\label{def:GES}
An equilibrium point of \eqref{eq:dt_prelimsub1} is GES if there exist positive constants $\alpha$ and $\omega$, with $0\leq \omega <1$, such that
\begin{equation}
    \lVert x[t] \rVert \leq \alpha \lVert x[t_0] \rVert \omega^{(t-t_0)}, \forall t\geq t_0 \geq 0, \forall x[t_0] \in \mathbb{R}^n.
\end{equation}
\end{definition}
\begin{lemma}\label{lemma:GES}
\cite[Theorem 28]{vidyasagar2002nonlinear}
Suppose that there exist a function $V: \mathbb{Z}_+ \times \mathbb{R}^n \rightarrow \mathbb{R}$, and constants $a,b,c>0$ and $p>1$ such that $a\lVert x \rVert^p \leq V(t,x) \leq b \lVert x \rVert^p$, $\Delta V(t,x):= V(x[t+1])-V(x[t]) \leq -c\lVert x \rVert^p, \forall t \in \mathbb{Z}_{\geq0}$. Then $\forall x (t_0)\in \mathbb{R}^n$, $x=\mathbf{0}$ is the globally exponential stable equilibrium of \eqref{eq:dt_prelimsub1}. 
\end{lemma}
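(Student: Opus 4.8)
The plan is to run the classical discrete-time Lyapunov argument, converting the three hypotheses on $V$ into the geometric decay estimate demanded by Definition~\ref{def:GES}. I would start from the one-step decrease condition, reading $\Delta V(t,x) = V(t+1,x[t+1]) - V(t,x[t]) \le -c\lVert x[t]\rVert^p$ (interpreting the paper's $V(x[t])$ as $V(t,x[t])$). The key move is to combine this with the upper sandwich bound $V(t,x[t]) \le b\lVert x[t]\rVert^p$, which gives $\lVert x[t]\rVert^p \ge V(t,x[t])/b$; substituting this into the decrease condition eliminates the norm and yields a purely $V$-based contraction along trajectories,
\[
V(t+1,x[t+1]) \le \Big(1 - \tfrac{c}{b}\Big)\, V(t,x[t]).
\]

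Writing $\mu := 1 - c/b$, I would next confirm $\mu \in [0,1)$: since $b,c>0$ we immediately have $\mu<1$, and for $\mu\ge 0$ I note that $V(t+1,x[t+1]) \ge a\lVert x[t+1]\rVert^p \ge 0$, so $0 \le V(t,x[t]) - c\lVert x[t]\rVert^p \le (b-c)\lVert x[t]\rVert^p$, forcing $b \ge c$ whenever $x[t]\neq\mathbf{0}$. A routine induction on $t$ then propagates the contraction to $V(t,x[t]) \le \mu^{\,t-t_0}\, V(t_0,x[t_0])$ for all $t\ge t_0$. Finally I would re-insert the sandwich bounds at both ends, the lower bound $a\lVert x[t]\rVert^p \le V(t,x[t])$ on the left and the upper bound $V(t_0,x[t_0]) \le b\lVert x[t_0]\rVert^p$ on the right, to chain into $a\lVert x[t]\rVert^p \le \mu^{\,t-t_0}\, b\,\lVert x[t_0]\rVert^p$. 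Taking $p$-th roots (valid since $p>1$ and all quantities are non-negative) gives
\[
\lVert x[t]\rVert \le \Big(\tfrac{b}{a}\Big)^{1/p}\big(\mu^{1/p}\big)^{t-t_0}\lVert x[t_0]\rVert,
\]
so the choice $\alpha = (b/a)^{1/p}$ and $\omega = \mu^{1/p} = (1-c/b)^{1/p}$ reproduces exactly the estimate in Definition~\ref{def:GES}, with $\alpha>0$ and $\omega\in[0,1)$.

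I expect the only genuinely delicate point to be establishing that $\omega\in[0,1)$, i.e.\ verifying $c\le b$ so that the contraction factor is non-negative; this is secured above by playing the lower sandwich bound $V\ge a\lVert x\rVert^p\ge 0$ against the one-step decrease. The remaining work (the induction and the algebraic manipulation of the root extraction) is bookkeeping. I would also dispose of the trivial case $x[t_0]=\mathbf{0}$ separately, where the trajectory stays at the origin and the bound holds with both sides equal to zero.
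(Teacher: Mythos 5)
Your proof is correct. Note, however, that the paper does not prove this statement at all: it is quoted verbatim as a known result (Theorem 28 of the cited Vidyasagar text), so there is no internal proof to compare against; your argument is the standard one underlying that citation. The contraction-and-root-extraction route you take is exactly the classical discrete-time Lyapunov argument, and it is fully consistent with the paper's companion statement Lemma~\ref{lemma:rate_GES}: your decay factor $\omega=(1-c/b)^{1/p}$ specializes to the quoted rate $\sqrt{1-(c/b)}$ when $p=2$, and your handling of the delicate point $c\le b$ (playing the nonnegativity of $V$ against the one-step decrease) is precisely what makes that rate well-defined.
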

\begin{lemma}\label{lemma:rate_GES}
\cite[Theorem 23.3]{rugh1996linear} Under the conditions of Lemma~\ref{lemma:GES}, convergence to 
the origin 
has an exponential rate of at least $\sqrt{1-(c/b)} \in [0,1)$, where $b$ and $c$ are as defined in Lemma~\ref{lemma:GES}.
\end{lemma}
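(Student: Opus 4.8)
The plan is to turn the one-step Lyapunov decrease into a geometric contraction of $V$ along the trajectory and then transfer that contraction to the state norm via the sandwiching bounds. First I would combine the two hypotheses of Lemma~\ref{lemma:GES}: the upper bound $V(x[t]) \leq b\lVert x[t]\rVert^p$ gives $\lVert x[t]\rVert^p \geq V(x[t])/b$, and substituting this into the decrease condition $\Delta V(t,x) \leq -c\lVert x[t]\rVert^p$ yields
\[
V(x[t+1]) = V(x[t]) + \Delta V(t,x) \leq V(x[t]) - \frac{c}{b}V(x[t]) = \left(1 - \frac{c}{b}\right)V(x[t]).
\]

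Next I would verify that the contraction factor lies in $[0,1)$. Since $V$ is nonnegative (it dominates $a\lVert x\rVert^p \geq 0$), the displayed inequality forces $1 - c/b \geq 0$, i.e., $c \leq b$; together with $c,b>0$ this places $1-c/b$ in $[0,1)$, so $\sqrt{1-c/b}$ is well-defined and strictly less than one. Iterating the contraction from $t_0$ to $t$ then gives $V(x[t]) \leq (1-c/b)^{t-t_0} V(x[t_0])$.

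Finally I would convert back to the norm, using the lower bound $a\lVert x[t]\rVert^p \leq V(x[t])$ on the left and the upper bound $V(x[t_0]) \leq b\lVert x[t_0]\rVert^p$ on the right:
\[
a\lVert x[t]\rVert^p \leq (1-c/b)^{t-t_0} b \lVert x[t_0]\rVert^p,
\]
so that
\[
\lVert x[t]\rVert \leq \left(\frac{b}{a}\right)^{1/p} \left[(1-c/b)^{1/p}\right]^{t-t_0} \lVert x[t_0]\rVert.
\]
Comparing with Definition~\ref{def:GES}, the constant is $\alpha = (b/a)^{1/p}$ and the rate is $\omega = (1-c/b)^{1/p}$; in the quadratic case $p=2$ relevant here (and in the cited linear setting), this is exactly $\omega = \sqrt{1-c/b}$, establishing the claimed rate.

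The main obstacle I anticipate is the exponent bookkeeping: $V$ contracts by the factor $(1-c/b)$ per step, but the norm inherits only its $p$-th root, so one must carefully identify the stated rate $\sqrt{1-c/b}$ with the specialization $p=2$ rather than the general $(1-c/b)^{1/p}$. A secondary, easily overlooked point is justifying $c \leq b$, so that the advertised rate is a genuine element of $[0,1)$; this follows immediately from the nonnegativity of $V$, as noted above.
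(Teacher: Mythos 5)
Your proof is correct, but note that the paper never proves this lemma itself: it is imported verbatim by citation from Rugh's Theorem 23.3, which is stated there for discrete-time \emph{linear} state equations with \emph{quadratic} Lyapunov functions. Your argument is the standard contraction argument, made self-contained and in fact more general than the cited source: you allow an arbitrary nonlinear system~\eqref{eq:dt_prelimsub1}, a general comparison function with exponent $p>1$, and you derive well-definedness of the rate ($c\le b$, hence $1-c/b\in[0,1)$) directly from nonnegativity of $V$ rather than assuming it --- this is the same inequality the paper re-derives by hand inside Corollary~\ref{coro:rate_exp} ($\sigma_3^k\le\sigma_2^k$, via positive semidefiniteness of $(M^k)^\top P^k M^k$). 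The one delicate point, which you correctly flag rather than paper over, is the exponent bookkeeping: your derivation gives the rate $\omega=(1-c/b)^{1/p}$ and constant $\alpha=(b/a)^{1/p}$ in the sense of Definition~\ref{def:GES}, which equals the advertised $\sqrt{1-c/b}$ only when $p=2$. For $1<p\le 2$ your bound is at least as strong as the stated one (since $(1-c/b)^{1/p}\le\sqrt{1-c/b}$ when $0\le 1-c/b<1$), but for $p>2$ the claimed rate would \emph{not} follow, so the lemma as stated for the general $p$ of Lemma~\ref{lemma:GES} is loose; this looseness is harmless in context because every application in the paper (Theorem~\ref{thm:GES}, Corollary~\ref{coro:rate_exp}, via $V_1^k=(x^k)^\top P^k x^k$) is quadratic, i.e.\ $p=2$, exactly the setting of the cited theorem.
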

\begin{lemma}\label{lemma:diagonal}
\cite[Proposition 2]{rantzer2011distributed}
Suppose that $M$ is a nonnegative matrix which satisfies $\rho(M)<1$. Then, there exists a diagonal matrix $P \succ 0$ such that $M^\top P M -P \prec 0$. 
\end{lemma}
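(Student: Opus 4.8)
The plan is to give a constructive proof: I would build an explicit diagonal $P$ out of two positive ``Perron-type'' vectors, one for $M$ and one for $M^\top$, and then verify the matrix inequality by a weighted Cauchy--Schwarz bound that crucially exploits the entrywise nonnegativity of $M$. The conceptual point to keep in mind is that the diagonal restriction on $P$ is the entire difficulty: since $\rho(M)<1$ only controls the eigenvalues, the spectral norm $\lVert M \rVert$ may well exceed $1$, so the naive choice $P=I$ fails, and the solution of a Stein/Lyapunov equation would not in general be diagonal. Nonnegativity of $M$ is precisely what allows one to recover a diagonal certificate.

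First I would produce a strictly subinvariant positive vector for $M$. Because $M\geq 0$ and $\rho(M)<1$, the Neumann series $\sum_{k\geq 0}M^k=(I-M)^{-1}$ converges to a nonnegative matrix, so $v:=(I-M)^{-1}\mathbf{1}$ satisfies $v\geq \mathbf{1}>0$ and $(I-M)v=\mathbf{1}$, that is $Mv=v-\mathbf{1}$, which is strictly below $v$ in every coordinate. Setting $\theta:=\max_i (Mv)_i/v_i=\max_i(1-1/v_i)$ gives a uniform contraction factor $\theta\in[0,1)$ with $(Mv)_i\leq \theta v_i$ for all $i$. Note this step sidesteps any need for irreducibility.

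Next I would produce a matching vector for $M^\top$ at the rescaled level $\theta$. Since $\rho(\theta M^\top)=\theta\,\rho(M)<1$, the same Neumann-series construction applied to $\theta M^\top$ yields $u:=(I-\theta M^\top)^{-1}\mathbf{1}>0$ with $\theta M^\top u=u-\mathbf{1}$. I then define $P:=\mathrm{diag}(u_i/v_i)$, which is diagonal and positive definite by construction, and set $p_i:=u_i/v_i$.

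The final and only computational step is verification, and I expect the Cauchy--Schwarz bound to be the one place requiring care. For any $x$ I would bound each coordinate of $Mx$ using Cauchy--Schwarz with the nonnegative weights $M_{ij}v_j$, giving $(Mx)_i^2\leq (Mv)_i\sum_j M_{ij}x_j^2/v_j\leq \theta v_i\sum_j M_{ij}x_j^2/v_j$. Summing against the weights $p_i$ and swapping the order of summation turns the inner factor into $\sum_i p_i v_i M_{ij}=(M^\top u)_j$, and the identity $\theta M^\top u=u-\mathbf 1$ collapses the estimate to $x^\top M^\top P M x\leq x^\top P x-\sum_j x_j^2/v_j$. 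Hence $x^\top(M^\top P M-P)x\leq -\sum_j x_j^2/v_j<0$ for every $x\neq 0$, which is exactly $M^\top P M-P\prec 0$. The Cauchy--Schwarz step is legitimate only because $M_{ij}\geq 0$, so the whole argument is really a device for converting the linear subinvariance inequalities $Mv\prec v$ and $\theta M^\top u\prec u$ into the desired quadratic diagonal certificate. (Alternatively, one could observe that $I-M$ is a nonsingular M-matrix and invoke the known diagonal stability of such matrices, but the construction above is self-contained.)
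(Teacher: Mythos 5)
Your proof is correct, and it is worth noting at the outset that the paper itself offers no proof of this lemma at all: it is imported verbatim, with a citation, from Rantzer's work on distributed control of positive systems, so the only meaningful comparison is with the argument in that cited source. Your construction is sound in every step: $v=(I-M)^{-1}\mathbf{1}\geq\mathbf{1}$ gives $Mv=v-\mathbf{1}$, hence the contraction factor $\theta=\max_i(1-1/v_i)\in[0,1)$; the vector $u=(I-\theta M^\top)^{-1}\mathbf{1}>0$ gives $\theta M^\top u=u-\mathbf{1}$; and the weighted Cauchy--Schwarz step $(Mx)_i^2\leq (Mv)_i\sum_j M_{ij}x_j^2/v_j$, which indeed requires $M_{ij}\geq 0$, collapses after summation against $p_i=u_i/v_i$ to $x^\top M^\top P M x\leq x^\top P x-\sum_j x_j^2/v_j$, which is the desired strict inequality (and the degenerate case $\theta=0$ forces $M=0$, where the claim is trivial). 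Structurally this is the same certificate as in the cited reference --- a diagonal $P$ whose entries are ratios of a left-type positive vector to a right-type positive vector, verified by weighted Cauchy--Schwarz --- but the standard proof obtains those vectors from Perron--Frobenius theory applied to a perturbation $M+\epsilon\mathbf{1}\mathbf{1}^\top$ (to force irreducibility and strictly positive Perron vectors), yielding $M^\top P M\preceq\lambda^2 P$ with $\lambda<1$. Your Neumann-series device is a genuine simplification in that it avoids Perron--Frobenius and perturbation arguments entirely, handles reducible $M$ with no extra care, and produces an explicit additive decrement $\sum_j x_j^2/v_j$ rather than a multiplicative contraction; the price is that the two subinvariant vectors live at different scales ($v$ for $M$ at level $\theta$, $u$ for $\theta M^\top$), a mismatch your verification handles correctly. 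Either route fully establishes the lemma.
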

 
\begin{definition}
The system in Eq.~\eqref{eq:SIR_observation} is strongly locally observable at 
$s[t]$
if we are able to recover 
$x_i^k [t]$ for all $i \in [n], k\in [m]$
through the output in the duration of $[t,t+m-1]$.
\end{definition}
 
\begin{lemma}\cite{sontag1979observability, nijmeijer1982observability}\label{lemma:locally_observable}
The system in~\eqref{eq:dt_prelim} is strongly locally 
observable at $x[t]$ 
if and only if the map $x[t] \rightarrow (g(x[t]),g(f^1(x[t])),\cdots, g(f^{n-1}(x[t])))$ 
is injective, where $n$ is the dimension of $x[t]$.
\end{lemma}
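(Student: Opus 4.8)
The plan is to recognize that the tuple appearing in the statement is nothing but the stacked output of the system over the observation window, and then to reduce the claimed equivalence to the elementary fact that a map admits a left inverse on its range precisely when it is injective. So the proof is essentially a definitional unwinding, with one genuinely nonlinear ingredient hidden in the choice of horizon.

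First I would iterate the state recursion \eqref{eq:dt_prelimsub1}. A one-line induction on $j$ gives $x[t+j] = f^{j}(x[t])$ for every $j \in \{0,1,\ldots,n-1\}$, where $f^{j}$ denotes the $j$-fold composition of the (forward) transition map and $f^{0}=\mathrm{id}$. Substituting this into the output map \eqref{eq:dt_prelimsub2} yields $y[t+j] = g(x[t+j]) = g(f^{j}(x[t]))$. Hence the vector of measurements collected over the interval $[t,t+n-1]$ is exactly
\begin{equation}
\big(y[t],y[t+1],\ldots,y[t+n-1]\big) = \mathcal{O}(x[t]),
\end{equation}
where $\mathcal{O}(x) := \big(g(x),g(f^{1}(x)),\ldots,g(f^{n-1}(x))\big)$ is the observability map of the statement.

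With this identification the two directions are immediate. For the ``if'' direction I would assume $\mathcal{O}$ is injective: the true state $x[t]$ produces the observed sequence $\mathcal{O}(x[t])$, and injectivity makes $x[t]$ the unique preimage of that sequence, so it can be recovered from the outputs over $[t,t+n-1]$, which is exactly strong local observability at $x[t]$. For the ``only if'' direction I would argue by contraposition: if $\mathcal{O}$ fails to be injective there are states $x \neq x'$ with $\mathcal{O}(x) = \mathcal{O}(x')$, and initializing the system at $x$ or at $x'$ generates identical output sequences on the window, so no decoder can tell them apart and the state cannot be recovered.

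The only genuinely nontrivial point --- and the reason the result is attributed to \cite{sontag1979observability,nijmeijer1982observability} rather than proved from scratch --- is the choice of horizon: why $n-1$ forward steps already exhaust the observability information, so that no longer window can separate states that $\mathcal{O}$ identifies. In the definition adopted here the window length is fixed at $n$ (respectively $m$ for the epidemic model), so this is built in; to make the claim self-contained one would pass to the differential $d\mathcal{O}$, whose rows stack the gradients $d(g\circ f^{j})$, and invoke the inverse function theorem to tie local injectivity of $\mathcal{O}$ to the rank of this Jacobian --- the standard nonlinear analogue of the Cayley--Hamilton truncation of the linear observability matrix. I expect this horizon/rank argument, rather than the definitional unwinding, to be the main obstacle in a fully rigorous treatment.
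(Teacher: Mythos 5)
The paper does not prove this lemma at all: it is imported as a preliminary, with the proof delegated to the cited references \cite{sontag1979observability, nijmeijer1982observability}, so there is no in-paper argument to compare against. Your reconstruction is correct relative to the definitions the paper actually uses. Because Definition~2 builds the observation window into the very notion of strong local observability (``able to recover the state through the output over the window''), identifying the stacked output vector with the map $\mathcal{O}(x) = \big(g(x), g(f^{1}(x)), \ldots, g(f^{n-1}(x))\big)$ and invoking the elementary fact that a map can be inverted on its range precisely when it is injective does yield both directions, exactly as you argue; and you are right that the genuinely nontrivial mathematics --- why a horizon tied to the state dimension exhausts the observability information, via the rank of $d\mathcal{O}$ and the inverse function theorem --- is precisely what lives in the cited references and is bypassed by the paper's definitional setup. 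Two small cautions. First, the dynamics \eqref{eq:dt_prelimsub1} are time-varying, $x[t+1] = f(t,x[t])$, so $f^{j}$ must be read as the composition $f(t+j-1,\cdot) \circ \cdots \circ f(t,\cdot)$ rather than the $j$-fold iterate of a single map; the lemma's own notation is equally loose here, but your induction should be phrased accordingly. Second, ``locally'' in the usage of Sontag and Nijmeijer refers to distinguishability from states in a neighborhood (local injectivity), whereas both the lemma as stated and your proof work with global injectivity of $\mathcal{O}$; this discrepancy is inherited from the paper rather than introduced by you, but it is worth flagging if you intend your argument to stand in for the cited results.
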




 

\subsection{Problem Formulation}

With the system model set up as above, we now introduce the problems 
investigated in this paper. 


\begin{enumerate}[label=(\roman*)]
    \item \label{itm:first} For the system with dynamics given in~\eqref{eq:SIR_dynamic}, provide a sufficient condition which ensures that $x^k[t]$ for some $k\in[m]$ converges to the eradicated state, namely $x^k=\mathbf{0}$, in exponential time.
    \item What is the rate of convergence for the sequence $x^k[t], k\in[m]$ (converging to $\mathbf{0}$ exponentially)?
    \item \label{itm:third} 
    Provided that the infection rate matrix $B^k$, healing rate matrix $\Gamma^k$, the observation $y_i[t]$, true susceptible states $s_i[t]$, and the measurement coefficient $c_i^k$, are known, for all $i\in [n], k\in [m]$, under what conditions are the infection levels of each virus $x_i^k[t]$, for all $i\in [n], k \in [m] $ strongly locally observable, at $s_i[t]=0, \forall i \in [n]$?
    \item Given the system parameters $\beta^k_{ij}, \gamma_i^k$ for all $i,j \in[n], k \in [m]$, and the local aggregated observation, $y_i[t]$ for all $i \in [n]$, 
construct a distributed Luenberger observer 
    which delivers
    the system states $\hat{s}_i[t], \hat{x}^k_{i}[t], \hat{r}_i[t]$ for all $i\in [n], k \in[m]$?
    \item Given the system parameters $\beta^k_{ij}, \gamma_i^k$ for all $i,j \in[n], k \in [m]$, how do we find the gain of the distributed observer
    such that the estimation error $x^k[t]-\hat{x}^k[t]$ converges to zero asymptotically for all $k \in[m]$?
    \item 
Design a distributed feedback controller which eradicates all viruses at an exponential rate.
\end{enumerate}

\section{Healthy State Analysis} \label{sec:stability}

In this section, we identify multiple sufficient conditions which guarantee
that each virus $k$ converges to zero exponentially fast, and provide the associated rates of convergence for each virus. Similar to the 
standard single virus 
SIR 
networked 
model, the multi-competitive SIR networked model also converges to a healthy state regardless of i) the values that the system parameters take, and ii) initial conditions. However, it is important to study the exponential convergence since it guarantees that the viruses die out at a faster rate and, as  a consequence, fewer individuals become infected over the course of the outbreak~\cite{zhang2021estimation}.


Let
\begin{align}
     M^k &:= I-h\Gamma^k +hB^k, \label{eq:M} \\
     \Tilde{M}^k[t]  &:= I+ h\{S[t]B^k - \Gamma^k\},
    \label{eq:Mhat}
\end{align}
and note that $\Tilde{M}^k $ is the state 
matrix of system~\eqref{eq:SIR_dynamic}:
\begin{equation}\label{eq:M_M_hat}
    \Tilde{M}^k[t] = M^k-h(I-S[t])B^k.
\end{equation}
We first present a sufficient condition, in terms of $M^k$, for the infection level with respect to each of the viruses to converge to zero exponentially.


\begin{theorem}\label{thm:GES}
Consider system~\eqref{eq:dt_SIR} under Assumptions \ref{assume:one}-\ref{assume:three}. If $\rho(M^k)<1$, then the $k$-th virus of the system in \eqref{eq:dt_SIR} converges to zero in exponential time.
\end{theorem}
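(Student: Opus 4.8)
The plan is to build a common quadratic Lyapunov function for the time-varying linear recursion $x^k[t+1]=\tilde{M}^k[t]\,x^k[t]$ and then invoke Lemma~\ref{lemma:GES}. The starting point is to observe that $M^k=I-h\Gamma^k+hB^k$ is a nonnegative matrix: Assumption~\ref{assume:two} gives $B^k\geq 0$ entrywise, and Assumption~\ref{assume:three} yields $h\gamma_i^k\leq 1$, so the diagonal matrix $I-h\Gamma^k$ has nonnegative entries. Since $M^k$ is nonnegative and $\rho(M^k)<1$ by hypothesis, Lemma~\ref{lemma:diagonal} furnishes a \emph{diagonal} matrix $P\succ 0$ with $(M^k)^\top P M^k-P\prec 0$; the diagonal structure of $P$ will be essential in what follows.

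Next I would relate the true state matrix $\tilde{M}^k[t]$ to $M^k$. From Eq.~\eqref{eq:M_M_hat}, $\tilde{M}^k[t]=M^k-h(I-S[t])B^k$. By Lemma~\ref{lemma:one}, $s_i[t]\in[0,1]$, so $I-S[t]$ is a nonnegative diagonal matrix and $h(I-S[t])B^k\geq 0$ entrywise; hence $\tilde{M}^k[t]\leq M^k$ entrywise. Moreover, inspecting $\tilde{M}^k[t]=I+h(S[t]B^k-\Gamma^k)$ shows each off-diagonal entry $h\,s_i[t]\beta_{ij}^k$ is nonnegative and each diagonal entry $1+h\,s_i[t]\beta_{ii}^k-h\gamma_i^k$ is nonnegative (using $1-h\gamma_i^k\geq 0$). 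Thus $0\leq \tilde{M}^k[t]\leq M^k$ entrywise for every $t$.

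The heart of the argument is to take $V(x)=x^\top P x$ and bound $\Delta V$ along trajectories. Because Lemma~\ref{lemma:one} guarantees $x^k[t]\geq 0$, the entrywise domination gives $0\leq \tilde{M}^k[t]x^k[t]\leq M^k x^k[t]$ componentwise. Since $P=\mathrm{diag}(p_i)$ with $p_i>0$, the map $z\mapsto z^\top P z=\sum_i p_i z_i^2$ is monotone on the nonnegative orthant, so
\begin{equation*}
(\tilde{M}^k[t]x^k[t])^\top P(\tilde{M}^k[t]x^k[t])\leq (M^k x^k[t])^\top P(M^k x^k[t]).
\end{equation*}
Therefore $\Delta V\leq (x^k[t])^\top[(M^k)^\top P M^k-P]\,x^k[t]\leq -c\lVert x^k[t]\rVert^2$, where $c:=-\lambda_{\max}\!\big((M^k)^\top P M^k-P\big)>0$. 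Combined with $\lambda_{\min}(P)\lVert x\rVert^2\leq V(x)\leq \lambda_{\max}(P)\lVert x\rVert^2$, Lemma~\ref{lemma:GES} (with $p=2$) yields that $x^k=\mathbf{0}$ is globally exponentially stable, i.e.\ the $k$-th virus converges to zero in exponential time.

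The step I expect to be most delicate is the monotonicity of the quadratic form: it holds \emph{only} because $P$ is diagonal and the state $x^k[t]$ remains in the nonnegative orthant, so the decrease condition of Lemma~\ref{lemma:GES} is verified along trajectories rather than for arbitrary $x\in\mathbb{R}^n$. I would therefore stress that the invariance of the nonnegative orthant (Lemma~\ref{lemma:one}) is precisely what makes the orthant-restricted Lyapunov inequality sufficient, and that a non-diagonal Lyapunov certificate would break the entrywise-to-quadratic monotonicity—this is exactly why Lemma~\ref{lemma:diagonal} is invoked instead of a generic discrete Lyapunov equation.
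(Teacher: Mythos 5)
Your proposal is correct and follows essentially the same route as the paper's proof: a diagonal Lyapunov certificate $P$ from Lemma~\ref{lemma:diagonal}, the quadratic function $V(x)=x^\top P x$, the decomposition $\tilde{M}^k[t]=M^k-h(I-S[t])B^k$, invariance of the nonnegative orthant (Lemma~\ref{lemma:one}), and the conclusion via Lemma~\ref{lemma:GES}. The only difference is in the mechanics of the central inequality: where the paper expands $\Delta V_1^k$ and verifies that the cross and quadratic correction terms are nonpositive via an entrywise sign check, you reach the same bound $\Delta V \le (x^k)^\top\big[(M^k)^\top P M^k-P\big]x^k$ from the entrywise domination $0\le \tilde{M}^k[t]\le M^k$ together with monotonicity of the diagonal quadratic form on the nonnegative orthant---a cleaner packaging of exactly the same structural facts (diagonality of $P$ and $x^k[t]\ge 0$), which the paper's proof also uses, only implicitly.
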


The proof of Theorem~\ref{thm:GES} is inspired by the proof of \cite[Theorem 1]{gracy2020analysis}.

\textit{Proof:}
Consider an arbitrary virus $k \in [m]$. 
By Assumptions \ref{assume:two} and \ref{assume:three}, $M^k$, defined by Eq.~\eqref{eq:M}, is nonnegative for all $k\in [m]$, and from the condition we know that $\rho(M^k)< 1$. Therefore, according to Lemma~\ref{lemma:diagonal}, for each $k\in [m]$, there exists a positive definite diagonal matrix $P^k$ such that $(M^k)^\top P^k M^k - P^k$ is negative definite. 

Consider the candidate Lyapunov function: $V_1^k(t,x^k) = (x^k)^{\top} P^kx^k$. Since $P^k$ is diagonal and positive definite, $(x^k)^\top P^k x^k>0$, for all $x^k \neq \mathbf{0}$. 
Therefore, $V_1^k(t,x^k)>0$ for all $k\in[m], t \in \mathbb{Z}_{\geq 0}$, $x^k \neq \mathbf{0}$. 
Since $P^k$ is positive definite, 
\begin{equation}
  \lambda_{\text{min}}(P^k)I \leq P^k \leq  \lambda_{\text{max}}(P^k)I,    
\end{equation}
which implies that
\begin{equation}\label{eq:sigma2}
    \sigma_1^k\| x^k\|^2 \leq V_1^k(t,x^k) \leq  \sigma_2^k \| x^k\|^2,
\end{equation}
where 
$\sigma_1^k = \lambda_{\text{min}}(P^k)$ and $\sigma_2^k=  \lambda_{\text{max}}(P^k)$, with $\sigma_1^k, \sigma_2^k>0$.

Now we turn to computing $\Delta V_1^k(t,x^k)$. For $x^k\neq 0$ and for all $k \in[m]$, using \eqref{eq:SIR_dynamic} and \eqref{eq:M}-\eqref{eq:Mhat}, we have
\begin{align}
    &\Delta V_1^k(t,x^k) \nonumber \\ &
    = 
    (x^k)^\top \Tilde{M}^k [t]^\top P^k\Tilde{M}^k[t]x^k- (x^k)^\top P^k x^k \nonumber \\ 
    &= 
    (x^k)^\top [(M^k)^\top  P^kM^k- P^k]x^k \nonumber \\ & \ \  \ \ -2h (x^k)^\top (B^k)^\top (I-S[t])P^k M^k x^k \nonumber \\ 
    & \ \  \ \  +h^2 (x^k)^\top (B^k)^\top (I-S[t])P^k(I-S[t])B^k x^k.\label{eq:deltaV}
\end{align}
Note that the second and third terms of \eqref{eq:deltaV} can be reorganized as 
\begin{align}
    &(x^k)^\top [-2h (B^k)^\top (I-S[t])P^k M^k \nonumber \\&\ \   \ \   \ \ 
    +h^2 (B^k)^\top (I-S[t])P^k(I-S[t])B^k ] x^k \nonumber \\
    &= (x^k)^\top \{h (B^k)^\top (I-S[t])P^k\nonumber \\
    & \ \  \ \ [-2M^k+ h(I-S[t]) B^k]\}x^k\nonumber \\
    &= (x^k)^\top \{h (B^k)^\top (I-S[t])P^k\nonumber \\ \label{eq:second_third_term}
    & \ \  \ \ [-2(I -h\Gamma^k[t])-h(I+S[t])B^k ]\}x^k\leq 0, 
\end{align}
where the last equality follows from \eqref{eq:M}, and the inequality follows from Assumptions \ref{assume:two}-\ref{assume:three} and Lemma~\ref{lemma:one}. 
Thus, by plugging~\eqref{eq:second_third_term} into~\eqref{eq:deltaV}, we obtain
\begin{equation}\label{eq:lyap_thirdterm}
    \Delta V_1^k(t,x^k) \leq (x^k)^\top [(M^k)^\top  P^k M^k- P^k]x^k.
\end{equation}
Since $[(M^k)^\top  P^kM^k- P^k]$ is negative definite, 
we have, from Eq.~\eqref{eq:lyap_thirdterm},
\begin{equation}\label{eq:sigma3}
    \Delta V_1^k(t,x^k) \leq - \sigma_3^k \| x^k\|^2,
\end{equation}
where $\sigma_3^k =  \lambda_{\text{min}}[P^k-(M^k)^\top  P^kM^k]$, with $\sigma_3^k>0$.

Therefore, from~\eqref{eq:sigma2} and \eqref{eq:sigma3}, $V_1^k(t,x^k)$ is a Lyapunov function with exponential decay, and, hence, $x^k$ converges to zero at an exponential rate.~\qed




\begin{corollary}\label{coro:rate_exp}
Suppose that the conditions in Theorem~\ref{thm:GES} are fulfilled, and that 
$P^k$ is as defined in the proof of Theorem 1. Then the convergence of $x^k[t]$ 
has an exponential rate of at least $\sqrt{1-\frac{\sigma_3^k}{\sigma_2^k}}$, where $\sigma_2^k= \lambda_{\text{max}}(P^k)$, $\sigma_3^k = \lambda_{\text{min}}[P^k-(M^k)^\top  P^k M^k]$ for each $k \in[m]$.
\end{corollary}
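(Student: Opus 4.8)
The plan is to recognize that this corollary is an almost immediate consequence of Lemma~\ref{lemma:rate_GES}, specialized to the Lyapunov function already constructed in the proof of Theorem~\ref{thm:GES}. First I would recall the two sandwich estimates established there for an arbitrary $k\in[m]$: from Eq.~\eqref{eq:sigma2} we have $\sigma_1^k\|x^k\|^2 \leq V_1^k(t,x^k) \leq \sigma_2^k\|x^k\|^2$, and from Eq.~\eqref{eq:sigma3} we have $\Delta V_1^k(t,x^k) \leq -\sigma_3^k\|x^k\|^2$. These are exactly the hypotheses of Lemma~\ref{lemma:GES} under the identifications $p=2$, $a=\sigma_1^k$, $b=\sigma_2^k=\lambda_{\text{max}}(P^k)$, and $c=\sigma_3^k=\lambda_{\text{min}}[P^k-(M^k)^\top P^k M^k]$, so the setting of Lemma~\ref{lemma:rate_GES} applies.

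Second, I would invoke Lemma~\ref{lemma:rate_GES} directly: under the conditions of Lemma~\ref{lemma:GES}, convergence to the origin occurs at an exponential rate of at least $\sqrt{1-(c/b)}$. Substituting the identifications $b=\sigma_2^k$ and $c=\sigma_3^k$ yields the claimed rate $\sqrt{1-\sigma_3^k/\sigma_2^k}$, and since $k\in[m]$ was arbitrary the bound holds for every virus.

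The only point requiring more than bookkeeping is to confirm that $\sigma_3^k/\sigma_2^k$ lies in the range that makes the stated rate real and contained in $[0,1)$, consistent with Lemma~\ref{lemma:rate_GES}. I would argue this through an eigenvalue comparison: since $P^k\succ 0$, the matrix $(M^k)^\top P^k M^k$ is positive semidefinite, hence $P^k-(M^k)^\top P^k M^k \preceq P^k$. By the monotonicity of eigenvalues under the Loewner order (Weyl's inequality), this gives $\sigma_3^k=\lambda_{\text{min}}[P^k-(M^k)^\top P^k M^k] \leq \lambda_{\text{min}}(P^k) \leq \lambda_{\text{max}}(P^k)=\sigma_2^k$. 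Combined with $\sigma_3^k>0$, which was already shown in the proof of Theorem~\ref{thm:GES}, this establishes $0<\sigma_3^k/\sigma_2^k\leq 1$, so that $1-\sigma_3^k/\sigma_2^k\in[0,1)$ and the square root is well-defined. Because the substitution into Lemma~\ref{lemma:rate_GES} is otherwise verbatim, I expect no genuine difficulty; the main care is simply to match the constants between the two lemmas and to verify this range condition so that the expression is a legitimate contraction rate.
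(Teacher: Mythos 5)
Your proposal is correct and follows essentially the same route as the paper: invoke Lemma~\ref{lemma:rate_GES} with the constants $b=\sigma_2^k$ and $c=\sigma_3^k$ from the Lyapunov bounds \eqref{eq:sigma2}--\eqref{eq:sigma3}, then verify $\sigma_2^k \geq \sigma_3^k$ so the rate lies in $[0,1)$. Your eigenvalue-monotonicity argument for that last step is exactly the paper's chain $\sigma_3^k I \leq P^k-(M^k)^\top P^k M^k \leq P^k \leq \sigma_2^k I$, which likewise rests on the positive semidefiniteness of $(M^k)^\top P^k M^k$.
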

\textit{Proof:}
From Lemma~\ref{lemma:rate_GES}, \eqref{eq:sigma2}, and \eqref{eq:sigma3}, the rate of convergence of virus $k$ is upper-bounded by $\sqrt{1-\frac{\sigma_3^k}{\sigma_2^k}}$.  We then need to show that the rate is well-defined, which is $\sqrt{1-\frac{\sigma_3^k}{\sigma_2^k}}  \in [0,1)$. Since $\sigma_2^k>0$ and $\sigma_3^k>0$, it will be sufficient to  show that $\sigma_2^k \geq \sigma_3^k$.

Since $P^k$ is positive definite and $(M^k)^\top P^k M^k $ is nonnegative definite, we have
\begin{equation}
    \sigma_3^k I \leq P^k - (M^k)^\top P^k M^k \leq P^k \leq \sigma_2^k I,
\end{equation}
from which $\sigma_2^k\geq \sigma_3^k$ and the rate of convergence $\sqrt{1-\frac{\sigma_3^k}{\sigma_2^k}}$ is well-defined.~\qed
\par In words, Theorem~\ref{thm:GES} says that if the state matrix corresponding to system~\eqref{eq:SIR_dynamic}, linearized around $s_i=1$ for $i \in [n]$, is Schur, then virus~k, irrespective of its initial infection levels in the population,  becomes extinct exponentially fast. 

We now present another sufficient condition, which depends on the spectral radius of $\tilde{M}^k[t]$,
for the infection level with respect to each of the viruses
to converge to zero at an exponential rate.
\begin{proposition}\label{prop:rho_tilde}
    Consider the system in~\eqref{eq:dt_SIR} under Assumptions \ref{assume:one}-\ref{assume:three}. If $\rho(\tilde{M}^k[t])<1$ for all $t \in \mathbb{Z}_{\geq 0}$, then the $k$-th virus of the system in \eqref{eq:dt_SIR} converges to zero in exponential time, with at least an exponential rate of $\min_{\forall t \in \mathbb{Z}_{\geq 0}}\Big[1-\rho(\tilde{M}^k[t])\Big]$.
\end{proposition}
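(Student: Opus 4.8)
The plan is to read the infection sub-dynamics~\eqref{eq:SIR_dynamic} as a time-varying \emph{linear} recursion $x^k[t+1]=\tilde M^k[t]\,x^k[t]$ with state matrix~\eqref{eq:Mhat}, and to exploit two structural features that a generic time-varying system lacks: the matrices $\tilde M^k[t]$ are entrywise nonnegative, and they are monotonically non-increasing in $t$. I stress this because the hypothesis $\rho(\tilde M^k[t])<1$ holding \emph{pointwise} in $t$ is, on its own, not enough to conclude stability of a time-varying system; the nonnegativity together with the monotonicity is precisely what lets me collapse the problem onto a single time-invariant comparison matrix, namely $\tilde M^k[0]$.

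First I would verify that $\tilde M^k[t]=I+h(S[t]B^k-\Gamma^k)$ is nonnegative: its off-diagonal entries are $h\,s_i[t]\beta_{ij}^k\geq 0$ by Assumption~\ref{assume:two} and Lemma~\ref{lemma:one}, while each diagonal entry equals $1+h(s_i[t]\beta_{ii}^k-\gamma_i^k)\geq 1-h\gamma_i^k\geq 0$ by Assumption~\ref{assume:three}. Next, since $s_i[t+1]\leq s_i[t]$ (statement~2 of Lemma~\ref{lemma:one}) and $B^k\geq 0$, the product $S[t]B^k$ is non-increasing in $t$, so $\tilde M^k[t+1]\leq \tilde M^k[t]\leq \tilde M^k[0]$ entrywise. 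Monotonicity of the spectral radius on the nonnegative cone then yields $\rho(\tilde M^k[t])\leq \rho(\tilde M^k[0])$, so $\max_t \rho(\tilde M^k[t])=\rho(\tilde M^k[0])$ and the pointwise hypothesis reduces to the single condition $\rho(\tilde M^k[0])<1$.

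With the dominant matrix $\tilde M^k[0]$ identified, I would finish via a Perron--Frobenius comparison. Let $w\geq 0$ be a left eigenvector of $\tilde M^k[0]$ for the eigenvalue $\rho(\tilde M^k[0])$, and recall from Lemma~\ref{lemma:one} that $x^k[t]\geq 0$. The scalar $V[t]:=w^\top x^k[t]\geq 0$ then satisfies
\begin{equation*}
 V[t+1]=w^\top \tilde M^k[t]\,x^k[t]\leq w^\top \tilde M^k[0]\,x^k[t]=\rho(\tilde M^k[0])\,V[t],
\end{equation*}
where the inequality uses $w\geq 0$, $\tilde M^k[0]-\tilde M^k[t]\geq 0$, and $x^k[t]\geq 0$. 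Iterating gives $V[t]\leq \rho(\tilde M^k[0])^t V[0]$, i.e.\ geometric decay with contraction factor $\rho(\tilde M^k[0])=\max_t\rho(\tilde M^k[t])$, which is exactly the stated rate $\min_t[1-\rho(\tilde M^k[t])]$. Turning this weighted estimate into a bound of the form in Definition~\ref{def:GES} is routine once $w$ has strictly positive entries, since $V[t]$ is then equivalent to $\lVert x^k[t]\rVert$.

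I expect the rate to be the delicate part, for two reasons. First, pinning down the \emph{exact} contraction factor $\rho(\tilde M^k[0])$---rather than the coarser Lyapunov rate $\sqrt{1-\sigma_3^k/\sigma_2^k}$ that a diagonal-$P$ argument à la Theorem~\ref{thm:GES} would give---is the reason I favor the Perron--Frobenius route here over the Lyapunov route. Second, if $\tilde M^k[0]$ is reducible its left eigenvector $w$ may have zero entries, so $V[t]$ need no longer dominate $\lVert x^k[t]\rVert$; I would resolve this by perturbing to $\tilde M^k[0]+\varepsilon\mathbf{1}\mathbf{1}^\top$, which is irreducible with a strictly positive left eigenvector and spectral radius $\rho_\varepsilon\to\rho(\tilde M^k[0])$ as $\varepsilon\to 0^+$, recovering the claimed rate arbitrarily closely. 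For completeness I note that a fully Lyapunov-based argument mirroring Theorem~\ref{thm:GES}---apply Lemma~\ref{lemma:diagonal} to $\tilde M^k[0]$ to obtain diagonal $P^k\succ 0$, then use the diagonal structure of $P^k$ with $x^k[t]\geq 0$ and $\tilde M^k[t]\leq \tilde M^k[0]$ to transfer the decrement $\Delta V\leq -\sigma_3^k\lVert x^k\rVert^2$ from $\tilde M^k[0]$ to every $\tilde M^k[t]$---also establishes exponential convergence, though it does not reproduce the sharper $\rho$-based rate.
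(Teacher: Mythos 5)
Your proposal is correct in substance, but it takes a genuinely different route from the paper's proof. The paper argues step by step: from $x^k[t+1]=\tilde{M}^k[t]x^k[t]$ and $x^k[t]\geq\mathbf{0}$ it asserts
\begin{equation*}
\frac{\lVert x^k[t+1]\rVert}{\lVert x^k[t]\rVert}=\lvert\lambda(\tilde{M}^k[t])\rvert\leq\rho(\tilde{M}^k[t]),
\end{equation*}
iterates to get $\lVert x^k[t]\rVert\leq\bigl[1-\min_t\epsilon^k[t]\bigr]^t\lVert x^k[0]\rVert$ with $\epsilon^k[t]:=1-\rho(\tilde{M}^k[t])$, and finishes with Bernoulli's inequality. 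You instead use the monotone structure ($\tilde{M}^k[t+1]\leq\tilde{M}^k[t]\leq\tilde{M}^k[0]$ entrywise, via Lemma~\ref{lemma:one} and nonnegativity) to collapse the time-varying recursion onto the single comparison matrix $\tilde{M}^k[0]$, and then run a Perron--Frobenius argument with a nonnegative left eigenvector $w$, using $V[t]=w^\top x^k[t]$ as a linear copositive Lyapunov function. Your route buys two things. First, it shows the stated rate is well defined: under domination, $\min_t[1-\rho(\tilde{M}^k[t])]$ is attained at $t=0$ and equals $1-\rho(\tilde{M}^k[0])$, a point the paper leaves implicit. Second, and more importantly, it is rigorous where the paper's pivotal equality is not: $\lVert\tilde{M}^k[t]x^k[t]\rVert/\lVert x^k[t]\rVert$ equals an eigenvalue modulus only if $x^k[t]$ happens to be an eigenvector; in general this ratio is bounded by the induced norm, which can strictly exceed the spectral radius (take $M=\left[\begin{smallmatrix}0&1\\0&0\end{smallmatrix}\right]$ and $x=(0,1)^\top$: the ratio is $1$ while $\rho(M)=0$). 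Your weighted-sum comparison, which needs only $w\geq\mathbf{0}$, $x^k[t]\geq\mathbf{0}$, and entrywise domination, sidesteps this issue, and your opening caution---that a pointwise-in-$t$ spectral radius condition is not by itself sufficient for time-varying stability---identifies exactly why the monotone structure matters here.

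The one loose end is your reducible case. The perturbation $\tilde{M}^k[0]+\varepsilon\mathbf{1}\mathbf{1}^\top$ delivers, for each $\varepsilon>0$, only the rate $1-\rho_\varepsilon<1-\rho(\tilde{M}^k[0])$, with constants that may blow up as $\varepsilon\to 0^+$; ``rate $r-\varepsilon$ for every $\varepsilon>0$'' does not formally imply ``rate $r$'' (consider $a_t=t e^{-rt}$). You can close this within your own framework: since $\mathbf{0}\leq\tilde{M}^k[t]\leq\tilde{M}^k[0]$ and $x^k[t]\geq\mathbf{0}$, induction gives the entrywise bound $\mathbf{0}\leq x^k[t]\leq(\tilde{M}^k[0])^t x^k[0]$, hence
\begin{equation*}
\lVert x^k[t]\rVert\leq\lVert(\tilde{M}^k[0])^t\rVert\,\lVert x^k[0]\rVert\leq c\,t^{\,n-1}\rho^t\lVert x^k[0]\rVert,\qquad \rho:=\rho(\tilde{M}^k[0]),
\end{equation*}
by the standard Jordan-form bound (the nilpotent case $\rho=0$ is immediate). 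Since $\rho e^{1-\rho}<1$ for all $\rho\in[0,1)$, the polynomial factor is absorbed, i.e.\ $t^{\,n-1}\rho^t\leq c'e^{-(1-\rho)t}$, which is exactly the claimed rate $\min_t[1-\rho(\tilde{M}^k[t])]=1-\rho$, Jordan blocks notwithstanding. In other words, the claimed exponential rate is strictly weaker than genuine geometric decay at factor $\rho$ (this is the same slack that Bernoulli's inequality introduces in the paper's proof), and that slack is precisely what makes the exact rate attainable without irreducibility.
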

\begin{proof}
    Define $\epsilon^k[t] := 1-\rho(\tilde{M}^k[t])$. Hence, we obtain that $\epsilon^k[t]>0$ for all $k\in [m]$ and $t \in \mathbb{Z}_{\geq 0}$. Since we have $x^k[t+1] = \tilde{M}^k[t] x^k[t]$ and $x^k[t] \geq 0$ for all $t \in \mathbb{Z}_{\geq 0}$, we can write that 
    \begin{equation}
    \frac{\lVert x^k[t+1] \rVert}{\lVert x^k[t] \rVert} = \frac{\lVert \tilde{M}^k[t]x^k[t] \rVert}{\lVert  x^k[t] \rVert} = |\lambda(\tilde{M}^k[t])|\leq \rho(\tilde{M}^k[t]),
\end{equation}
which results in
    $\lVert x^k[t+1] \rVert \leq(1-\epsilon^k[t]) \lVert x^k[t] \rVert$ for all $t$. Recall that, $\epsilon^k[t]>0$ for all $k\in[m]$ and $t \in \mathbb{Z}_{\geq 0}$. Then, we obtain that, for all $x_i^k[0]\in [0,1]^n$,
\begin{equation}\label{eq:converge_prop1}
    \lVert x^k[t] \rVert \leq [1- \min_{\forall t \in \mathbb{Z}_{\geq 0}}\epsilon^k[t]]^t \lVert x^k[0] \rVert  \leq e^{-t \min_{\forall t \in \mathbb{Z}_{\geq 0}}\epsilon^k[t]} \lVert x^k[0] \rVert, \nonumber 
\end{equation}
where the second inequality 
is due to
Bernoulli's inequality~\cite{carothers2000real}.
Hence, $x^k[t]$ converges to $\mathbf{0}$ with at least an exponential rate of $\min_{\forall t \in \mathbb{Z}_{\geq 0}}\epsilon^k[t]$.
\end{proof}
Note that $\rho(M^k)$ is the basic reproduction number of virus $k$,
the average number of infections produced by an infected individual in a population where everyone is susceptible,
and $\rho(\tilde{M}^k[t])$ is the effective reproduction number of virus $k$ over the network,
the average number of infected cases caused by 
an infected individual 
in a population made up of both susceptible and non-susceptible individuals.
%
It can be seen that Theorem~\ref{thm:GES} is a stronger global exponential stability condition than Proposition~\ref{prop:rho_tilde}, as $M^k$ is the linearized state transition matrix for virus $k$ which does not depend on the susceptible state. 
However, 
Proposition~\ref{prop:rho_tilde} is 
important 
in its own right, 
since it provides insights into the design of a feedback controller; 
see Section~\ref{sec:mitigation}.

\section{State Observation Model}\label{sec:observation}

In this section, we analyze the observation model introduced in Eq.~\eqref{eq:SIR_observationsub1} and Eq.~\eqref{eq:SIR_observationsub2}. In particular, we focus on identifying conditions guaranteeing strong local observability of our proposed system model around $s_i[t]=0$, and on estimating the system states with respect to each virus.
\par As a first step, we 
construct the observability matrix for the system 
by
writing Eq.~\eqref{eq:SIR_observationsub2} as: 
\begin{equation}\label{eq:observer_y_full}
    \mathbf{y}[t] = \mathbf{C} \mathbf{X}[t],
\end{equation}
where $\mathbf{y}[t] = \begin{bmatrix}
y_1[t] &
y_2[t] &
\cdots &
y_n[t]
\end{bmatrix}^\top \in \mathbb{R}^{n
}$;   the measurement matrix $\mathbf{C} \in \mathbb{R}^{n\times mn}$ is defined as: 

\begin{equation}
    \mathbf{C} = \begin{bmatrix}
C^1  & C^2& \cdots & C^m  
\end{bmatrix} \nonumber 
\end{equation}
with $C^k = \text{diag}([c_1^k, c_2^k, \cdots, c_n^k])$ for all $k \in [m]$; and
\begin{equation}
    \mathbf{X}[t] = \begin{bmatrix}
x^1[t]^\top  &
x^2 [t]^\top &
\cdots  &
x^m[t]^\top 
\end{bmatrix}^\top . \nonumber 
\end{equation}
Therefore, the measurement $\mathbf{y}[t] $ can be reorganized as:
\begin{align}
     \mathbf{y}[t]  
= C^1x^1[t]+C^2x^2[t]+\cdots+ C^mx^m[t].
\end{align}
The measurements, corresponding to each time step, over a time horizon $[t,t+m-1]$ can be concatenated in a vector as follows:
\begin{align}
    & \begin{bmatrix}
\mathbf{y}[t] \\
\mathbf{y}[t+1] \\
\mathbf{y}[t+2]\\
\vdots \\
\mathbf{y}[t+m-1]
\end{bmatrix}\nonumber \\ = &
\begin{bmatrix}
C^1x^1[t]+\cdots+ C^mx^m[t] \\
C^1x^1[t+1]+\cdots+ C^mx^m[t+1]  \\
C^1x^1[t+2]+\cdots+ C^mx^m[t+2] \\
\vdots \\
C^1x^1[t+m-1]+\cdots+ C^mx^m[t+m-1] 
\end{bmatrix}\nonumber 
\\ 
=& 
\begin{bmatrix}
C^1 \\
C^1\Tilde{M}^1[t]  \\
C^1\Tilde{M}^1[t]\Tilde{M}^1[t+1] \\
\vdots \\
C^1\Tilde{M}^1[t]\cdots \Tilde{M}^1[t+m-2]
\end{bmatrix}x^1[t] \nonumber  \\&+ \begin{bmatrix}
C^2  \\
C^2\Tilde{M}^2[t]  \\
C^2\Tilde{M}^2[t]\Tilde{M}^2[t+1] \\
\vdots \\
C^2\Tilde{M}^2[t]\cdots \Tilde{M}^2[t+m-2]
\end{bmatrix}x^2[t]+ \cdots \nonumber 
\\
&  + \begin{bmatrix}
C^m  \\
C^m\Tilde{M}^m[t]  \\
C^m\Tilde{M}^m[t]\Tilde{M}^m[t+1] \\
\vdots \\
C^m\Tilde{M}^m[t]\cdots \Tilde{M}^m[t+m-2]
\end{bmatrix}x^m[t]\nonumber  
\\  =& \mathcal{O}^1[t] x^1[t]+ \mathcal{O}^2[t] x^2[t]+ \cdots + \mathcal{O}^m[t] x^m[t]\nonumber \\
= &
\begin{bmatrix}
\mathcal{O}^1[t] &
\mathcal{O}^2[t] &
\cdots &
\mathcal{O}^m[t]
\end{bmatrix}\mathbf{X}[t]
,\label{eq:Ax=b}
\end{align}
where the matrix $\Tilde{M}^k[t]$ is as defined in Eq.~\eqref{eq:Mhat}, and
\begin{equation}
    \mathcal{O}^k[t] =  \begin{bmatrix}
C^k  \\
C^k\Tilde{M}^k[t]  \\
C^k\Tilde{M}^k[t]\Tilde{M}^k[t+1] \\
\vdots \\
C^k\Tilde{M}^m[t]\cdots \Tilde{M}^k[t+m-2]
\end{bmatrix}, \nonumber 
\end{equation}
with $\mathcal{O}^k[t]  \in \mathbb{R}^{mn\times n}$ for all $k \in [m]$.
We define the observability matrix of the system in Eq.~\eqref{eq:SIR_observation} as:
\begin{equation}
    \mathbb{O}[t] = \begin{bmatrix}
\mathcal{O}^1[t] &
\mathcal{O}^2[t] &
\cdots &
\mathcal{O}^m[t]
\end{bmatrix}, \label{eq:observability_matrix}
\end{equation}
where $\mathbb{O}[t] \in \mathbb{R}^{mn\times mn}$.



We are 
interested
in identifying a sufficient condition for strong local observability of our model when
the network consists only of infected and/or recovered individuals; in other words,
there are no susceptible individuals in any of the population nodes. Hence, we consider
the case when $s_i[t] =0, \forall i \in [n]$. 
Then the observability matrix in Eq.~\eqref{eq:observability_matrix} becomes 
\begin{equation}
    \mathbb{O}_0[t] = \begin{bmatrix}
\mathcal{O}_0^1[t] &
\mathcal{O}_0^2[t] &
\cdots &
\mathcal{O}_0^m[t]
\end{bmatrix}, \label{eq:observability_matrix_zero}
\end{equation}
where
\begin{equation}
    \mathcal{O}_0^k[t] =  \begin{bmatrix}
C^k  \\
C^k(I-h\Gamma^k)  \\
C^k(I-h\Gamma^k)^2 \\
\vdots \\
C^k (I-h\Gamma^k)^{m-1} 
\end{bmatrix},\label{eq:observability_matrix_block_zero}
\end{equation}
for all $k \in [m]$.
\par We have the following result.
\begin{theorem}\label{thm:Observable_zero}
Consider the system in Eq.~\eqref{eq:SIR_observation} 
under Assumptions~\ref{assume:one}-\ref{assume:four}. Suppose that $y_i(t)$, for all $i \in [n]$ and $t \in [t, t+m-1]$ is known. If, for each $i \in [n], \gamma_i^k \neq \gamma_i^{k_0}$ for all $k, k_0 \in [m]$ and $k \neq k_0$, then the system in Eq.~\eqref{eq:SIR_observation}  is strongly locally observable at $s_i(t)=0$ for all $i \in [n]$.
\end{theorem}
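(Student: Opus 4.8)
The plan is to exploit the fact that the condition $s_i[t]=0$ renders the infection dynamics linear and decoupled across viruses, so that strong local observability collapses to the invertibility of the structured matrix $\mathbb{O}_0[t]$, whose determinant factors into node-wise Vandermonde determinants.

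First I would note that if $s_i[t]=0$ for all $i$, then by~\eqref{eq:dt_SIRsub1} we have $s_i[t+1]=0$, so the susceptible fraction stays frozen at zero throughout the horizon $[t,t+m-1]$. Consequently~\eqref{eq:SIR_observationsub1} reduces to the autonomous linear recursion $x^k[t+\tau]=(I-h\Gamma^k)^\tau x^k[t]$, and the stacked outputs over the horizon become exactly $\mathbb{O}_0[t]\mathbf{X}[t]$, with $\mathbb{O}_0[t]$ as in~\eqref{eq:observability_matrix_zero}--\eqref{eq:observability_matrix_block_zero}. By Lemma~\ref{lemma:locally_observable}, strong local observability at $s_i[t]=0$ is equivalent to injectivity of the map $\mathbf{X}[t]\mapsto(\mathbf{y}[t],\dots,\mathbf{y}[t+m-1])$; since this map is now linear and $\mathbb{O}_0[t]$ is square ($mn\times mn$), injectivity is equivalent to $\det\mathbb{O}_0[t]\neq 0$.

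Next I would block-diagonalize $\mathbb{O}_0[t]$. Because every block $C^k(I-h\Gamma^k)^\tau$ is diagonal, a simultaneous permutation of rows and columns that regroups the indices by node $i$ rather than by virus $k$ turns $\mathbb{O}_0[t]$ into $\bigoplus_{i=1}^n V_i$, where $V_i\in\mathbb{R}^{m\times m}$ has $(\tau,k)$-entry $c_i^k(1-h\gamma_i^k)^{\tau-1}$ for $\tau=1,\dots,m$. Since permutations preserve the determinant up to sign, $\det\mathbb{O}_0[t]\neq 0$ iff $\det V_i\neq 0$ for every $i$. Factoring the columns gives $V_i=\tilde V_i\,\mathrm{diag}(c_i^1,\dots,c_i^m)$ with $\tilde V_i$ the Vandermonde matrix in the nodes $\mu_i^k:=1-h\gamma_i^k$, so that
\[
\det V_i=\Big(\prod_{k=1}^m c_i^k\Big)\prod_{1\le k<k_0\le m}\big(h\gamma_i^k-h\gamma_i^{k_0}\big).
\]
By Assumption~\ref{assume:four} each $c_i^k\in(0,1]$ is nonzero, and the hypothesis $\gamma_i^k\neq\gamma_i^{k_0}$ for $k\neq k_0$ makes the Vandermonde factor nonzero; hence $\det V_i\neq 0$ for all $i$, so $\mathbb{O}_0[t]$ is invertible and the system is strongly locally observable at $s_i[t]=0$.

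I expect the main obstacle to lie in the first step rather than in the algebra: one must carefully justify that at $s=0$ the nonlinear observability criterion of Lemma~\ref{lemma:locally_observable} genuinely reduces to linear injectivity of $\mathbb{O}_0[t]$ over the chosen $m$-step horizon. In particular, one must confirm that the susceptible component is frozen at zero, so that $x^k[t+\tau]=(I-h\Gamma^k)^\tau x^k[t]$ holds \emph{exactly} rather than as a linearization, and that $m$ outputs suffice thanks to the node-wise decoupling (each node's $m$-dimensional infection vector is recovered from its own scalar output over $m$ steps). Once this reduction is secured, the permutation-based block-diagonalization and the Vandermonde determinant identity are routine.
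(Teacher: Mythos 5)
Your proposal is correct, and it follows the paper's overall strategy: freeze the susceptibles at zero, reduce strong local observability via Lemma~\ref{lemma:locally_observable} to injectivity of the linear map given by the square matrix $\mathbb{O}_0[t]$ in Eq.~\eqref{eq:observability_matrix_zero}, and conclude nonsingularity from the diagonal block structure together with distinctness of the recovery rates. Where you differ is in \emph{how} nonsingularity is established, and your route is the more complete one. The paper argues by rows: it states that the $i$-th row of $\mathbb{O}_0[t]$ is linearly independent of the $(i+ln)$-th row for each $l\in[m-1]$, and from this pairwise statement concludes full rank. Pairwise independence of the $m$ rows attached to node $i$ suffices only when $m=2$; for $m\geq 3$ one needs their \emph{joint} independence, which the paper asserts but does not actually prove. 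Your permutation into node-wise $m\times m$ blocks $V_i$ and the factorization $\det V_i=\bigl(\prod_{k=1}^m c_i^k\bigr)\prod_{k<k_0}h\bigl(\gamma_i^k-\gamma_i^{k_0}\bigr)$ supplies exactly this missing joint-independence argument (the Vandermonde determinant), makes transparent how Assumption~\ref{assume:four} and the distinct-$\gamma$ hypothesis enter, and even quantifies how the matrix degenerates as two recovery rates approach each other. You also explicitly verify a point the paper uses silently: that $s_i[t]=0$ is invariant under~\eqref{eq:dt_SIRsub1}, so $\tilde{M}^k[t+\tau]=I-h\Gamma^k$ holds exactly (not as a linearization) over the whole horizon $[t,t+m-1]$, which is what legitimizes the constant blocks in Eq.~\eqref{eq:observability_matrix_block_zero}. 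In short: same skeleton as the paper, but your determinant-based execution closes two gaps in its exposition.
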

\begin{proof}
From the assumptions $h\sum_{k=1}^m \gamma^k_i\leq 1$, $h>0$, and $\gamma_i^k>0$ for all $i\in [n], k \in [m]$, 
we obtain that 
$1-h\gamma_i^k \in (0,1)$ 
for all $i\in [n], k \in [m]$. In addition, 
by Assumption~\ref{assume:four}, $c_i^k \in (0,1]$
for all $i\in [n], k \in [m]$, we can conclude that the entries of Eq.~\eqref{eq:observability_matrix_zero}: $c_i^k(1-h\gamma_i^k)\in (0,1)$ for all $i\in [n], k \in [m]$.

We let $\textbf{0}^{n} := \begin{bmatrix}
0 & 0& \cdots & 0
\end{bmatrix}^{1\times n}$ and $\textbf{0}^{0} := \emptyset$.
Consider Eq.~\eqref{eq:observability_matrix_block_zero} and recall that every 
block matrix 
is diagonal; hence, Eq.~\eqref{eq:observability_matrix_zero} is the concatenation of a set of block diagonal matrices. 
For all $i\in[n]$, the $i$-th row 
of the observability matrix~\eqref{eq:observability_matrix_zero} can be written as: 
\begin{equation}
    \begin{bmatrix}
\textbf{0}^{i-1}& c_i^1 &\textbf{0}^{n-i} &\textbf{0}^{i-1} &c_i^2& \textbf{0}^{n-i}& \cdots & \textbf{0}^{i-1} &c_i^m  &\textbf{0}^{n-i}
\end{bmatrix}\nonumber 
\end{equation}
which is linearly independent with the $(i+ln)$-th row of~\eqref{eq:observability_matrix_zero} for all $l \in [m-1]$:
\small
\begin{equation}
    \begin{bmatrix}
 \textbf{0}^{i-1}& c_i^1(1-h\gamma_i^1)^{l} &\textbf{0}^{n-i}&  
 \cdots & \textbf{0}^{i-1}  & c_i^m(1-h\gamma_i^m)^{l} &\textbf{0}^{n-i}
\end{bmatrix} \nonumber 
\end{equation}

\normalsize

\noindent
under our assumption that, for each $i \in [n]$, 
$\gamma_i^k \neq \gamma_{i}^{k_0}$ for all $k,k_0 \in[m]$ and $k \neq k_0$.
Thus, the observability matrix in Eq.~\eqref{eq:observability_matrix_zero} has full row rank. Since the observability matrix is a square matrix, we conclude that the observability matrix, $\mathbb{O}_0[t]$, 
is full rank.
Therefore, the mapping in Eq.~\eqref{eq:Ax=b} when $s_i[t]=0, 
\forall i \in [n]$ is injective. Note that $mn$ is the dimension of $\mathbf{X}[t]$. Hence, by Lemma \ref{lemma:locally_observable},
the competing virus model in~\eqref{eq:SIR_observation} 
is strongly locally observable at $s_i[t]=0, 
\forall i \in [n]$.
\end{proof}

Notice that whenever we add another virus to 
our model~\eqref{eq:SIR_observation},
we increase the dimension of~\eqref{eq:observability_matrix_zero} from $mn\times mn$ to $(m+1)n \times (m+1)n$, 
and, due to the same reasons as in the proof of Theorem~\ref{thm:Observable_zero}, the rank of the observability matrix will 
be $(m+1)n$. 

\begin{remark}
The assumption in Theorem~\ref{thm:Observable_zero}, namely that, for each $i \in [n]$, $\gamma_i^k$
is a distinct value across every $k \in[m]$, 
can be interpreted as 
each node's recovery rates with respect to every virus are distinct.
This assumption is 
reasonable 
as the recovery rate 
represents the inverse of the average duration of an infected individual to be sick, and the average amount of time for an individual to recover from different types/strains of viruses varies drastically\cite{whitley2001herpes}.
\end{remark}

Theorem~\ref{thm:Observable_zero} provides a sufficient condition for strong local observability when the fraction of susceptible population in each node is zero. Note that this condition 
identifies a scenario that admits the design of an observer for estimating the system states, but it does not say \textit{how} the states of the system can be estimated. Hence, in the sequel, we focus on the estimation of the system states.\\
The 
dynamics of the estimated states are 
\begin{align}
    \hat{x}^k_i[t+1] &= \hat{x}^k_i[t]  +h\bigg\{ \hat{s}_i[t]\sum_{j=1}^n \beta^k_{ij} \hat{x}^k_j[t]-\gamma^k_i \hat{x}^k_i[t]\bigg\}, \label{eq:infection_est} \\
    \hat{y}_i[t+1] & = \sum_{k=1}^m c_i^k \hat{x}_i^k[t+1],  \label{eq:observer}
\end{align}
where
\begin{align}\label{eq:hat:s}
    \hat{s}_i[t] &= 1-\sum_{k=1}^m\hat{x}^k_i[t]-\hat{r}_i[t] 
\end{align}
and
the recovered level is estimated by:
\begin{equation}\label{eq:hat:r}
    \hat{r}_i[t] = h\sum_{q=0}^t \sum_{k=1}^m \gamma_i^k \hat{x}_i^k[q]
\end{equation}
at each time step, recursively. 
Notice that in Eq.~\eqref{eq:infection_est}, in order to acquire the estimated infection level at node $i$, we 
need some knowledge of the
infection levels from all the neighbors
of node $i$, namely $\hat{x}^k_j[t]$ for all $\beta_{ij}^k \neq 0$. 
Hence, we have the following definition and assumption.
\begin{definition}\label{definition:report}
For node $j$, which is a neighbor node of node~$i$, namely $\beta_{ij}^k > 0$, we define the estimated infection level at node $j$ acquired by node $i$ at time $t$ as $\hat{x}^k_j[t-\mathcal{T}_j]+\mathcal{E}_j$, where $\mathcal{T}_j \in \mathbb{Z}_{\geq 0}$ is the time delay between nodes $i$ and $j$, and $\mathcal{E}_j\in \mathbb{R}$ is the reporting error at node $j$.
\end{definition}
\begin{assumption}\label{assume:cooperative}
For the estimation algorithm, 
the nodes 
share their estimated infection levels with no time delay or error, namely $\mathcal{T}_j=0, \mathcal{E}_j=0$ for all $j \in [n]$ in Definition~\ref{definition:report}.
\end{assumption}

Through Assumption~\ref{assume:cooperative}, we have that every node in the network is completely cooperative and honest to its neighboring nodes. Hence, under
Assumption~\ref{assume:cooperative}, our proposed distributed Luenberger observer is:
\begin{align}\label{eq:SIR_luenberger_observation}
    \hat{x}^k_i[t+1] =& \hat{x}^k_i[t] +h\bigg\{\hat{s}_i[t]\sum_{j=1}^n \beta^k_{ij} \hat{x}^k_j[t]-\gamma^k_i \hat{x}^k_i[t]\bigg\} \nonumber  \\& \ \   \ \ +L_i^k(y_i[t]-\hat{y}_i[t]),
\end{align}
where $L_i^k$ is the observer gain which, given a node $i$, can be chosen for each $k \in [m]$. 
We can write $e_i^k[t]=x_i^k[t]-\hat{x}_i^k[t]$ as the error of the observer. Hence, the dynamics of the estimation error are written as:
\begin{align}
    e_i^k[t+1] &=x_i^k[t+1]-\hat{x}_i^k[t+1]  
    \nonumber  \\
     &= (1-h\gamma_i^k)e_i^k[t]+h\sum_{j=1}^n \beta^k_{ij} e^k_j[t]
     -L_i^k\sum_{k=1}^m c_i^k e_i^k[t] \nonumber \\  &-h \bigg(\sum_{k=1}^m x_i^k[t] \sum_{j=1}^n \beta_{ij}^k x_j^k[t]- \sum_{k=1}^m \hat{x}_i^k[t] \sum_{j=1}^n \beta_{ij}^k \hat{x}_j^k[t]\bigg)\nonumber \\
     & - h\bigg(  \sum_{q}^t \sum_{k=1}^m \gamma_i^k x_i^k[q] \sum_{j=1}^n \beta_{ij}^k x_j^k[t] \nonumber  \\ &
     - \sum_{q}^t \sum_{k=1}^m \gamma_i^k \hat{x}_i^k[q] \sum_{j=1}^n \beta_{ij}^k \hat{x}_j^k[t] \bigg). \label{eq:error_eq}
\end{align}
We then rewrite the error dynamics~\eqref{eq:error_eq} as:
\begin{align}\label{eq:estimation_vector}
    e^k[t+1] =(M^k-L^k C^k) e^k[t] + w^k[t]
\end{align}
where recall that
\begin{equation}
    M^k = I-h\Gamma^k +hB^k \nonumber 
\end{equation}
and
\begin{align}\label{eq:nonlinear}
    &w^k[t] = -L^k \sum_{p\neq k}^m c^p e^p[t] \nonumber \\ &
    +h\bigg\{ \Big(\sum_{k=1}^m x^k[t]\Big)B^k x^k[t]-\Big(\sum_{k=1}^m\hat{x}^k[t]\Big)B^k \hat{x}^k[t] \bigg\} \nonumber \\
    &+h\bigg\{ \sum_{q}^t \sum_{k=1}^m \Gamma^k x^k[q]  B^k x^k[t] - \sum_{q}^t \sum_{k=1}^m \gamma_i^k \hat{x}_i^k[q]  B^k \hat{x}^k[t] \bigg\}
\end{align}
with $L^k = \text{diag}(L_i^k)$. 

Inspired by~\cite{niazi2022observer}, we 
aim to show that the estimation error of our Luenberger observer~\eqref{eq:SIR_luenberger_observation} converges to zero asymptotically. We make the following assumption.

\begin{assumption}\label{assume:lipschitz}
There exists a constant $l^k$ for each virus $k\in [m]$ such that:
\begin{equation}
    ||w^k[t]|| \leq l^k ||x^k[t]-\hat{x}^k[t]||
    \label{eq:lipschitz}
\end{equation}
for all 
$t \in \mathbb{Z}_{\geq 0}$.
\end{assumption}

\begin{corollary}\label{coro:lipschitz_const}
For each $k\in [m]$, 
inequality~\eqref{eq:lipschitz} 
can be rewritten as:
\begin{equation}
    l^k \geq \frac{||w^k[t]|| }{||x^k[t]-\hat{x}^k[t]||}
\end{equation}
for all $t \in \mathbb{Z}_{\geq 0}$.
\end{corollary}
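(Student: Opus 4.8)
The plan is to treat this as a direct algebraic consequence of Assumption~\ref{assume:lipschitz}, since the corollary is merely a restatement of the Lipschitz-type bound~\eqref{eq:lipschitz} in ratio form. The starting point is the inequality $\|w^k[t]\| \leq l^k \|x^k[t]-\hat{x}^k[t]\|$, which holds for every $k \in [m]$ and every $t \in \mathbb{Z}_{\geq 0}$ by hypothesis. No dynamics, no spectral-radius conditions, and no Lyapunov machinery are needed; the content is elementary manipulation of a scalar inequality.

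First I would fix an arbitrary $k \in [m]$ and an arbitrary $t \in \mathbb{Z}_{\geq 0}$, and distinguish two cases according to whether the estimation error $e^k[t] = x^k[t]-\hat{x}^k[t]$ vanishes. In the principal case $\|x^k[t]-\hat{x}^k[t]\| > 0$, the quantity $\|x^k[t]-\hat{x}^k[t]\|$ is a strictly positive real number, so dividing both sides of the Lipschitz inequality by it preserves the direction of the inequality and yields $\frac{\|w^k[t]\|}{\|x^k[t]-\hat{x}^k[t]\|} \leq l^k$, which is precisely the claimed bound $l^k \geq \frac{\|w^k[t]\|}{\|x^k[t]-\hat{x}^k[t]\|}$. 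Since $k$ and $t$ were arbitrary, this holds for all $k \in [m]$ and all $t \in \mathbb{Z}_{\geq 0}$ for which the denominator is nonzero.

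The only point requiring care, and the closest thing to an obstacle, is the degenerate case $\|x^k[t]-\hat{x}^k[t]\| = 0$, in which the ratio appearing in the statement is undefined. Here I would simply observe that the Lipschitz bound forces $\|w^k[t]\| \leq l^k \cdot 0 = 0$, hence $w^k[t] = \mathbf{0}$, so no nontrivial constraint on $l^k$ is imposed and the rewritten inequality is understood to be asserted on the set of time indices where the error does not vanish. I do not expect any genuine difficulty: the corollary is a cosmetic reformulation whose purpose is to make explicit how the constant $l^k$ may be chosen, namely as any upper bound (over $t$) of the displayed ratios, so the substantive assumption is carried entirely by Assumption~\ref{assume:lipschitz} and the rearrangement itself is immediate.
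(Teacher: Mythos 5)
Your proposal is correct and matches the paper's treatment: the paper offers no proof at all for this corollary, regarding it as an immediate rearrangement of Assumption~\ref{assume:lipschitz} obtained by dividing through by $\|x^k[t]-\hat{x}^k[t]\|$, which is exactly your principal case. Your explicit handling of the degenerate case $\|x^k[t]-\hat{x}^k[t]\|=0$ is a point of care the paper skips, but it changes nothing of substance.
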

%
Note that,
in many cases, we are able to tune the observer gain $L^k$ 
to satisfy the inequality~\eqref{eq:lipschitz} in Assumption~\ref{assume:lipschitz}.
We explore the feasibility of Assumption~\ref{assume:lipschitz} 
via simulations in Section~\ref{sec:simulation}.
We now identify a sufficient condition with respect to the observer gain, which guarantees that the estimation errors converge to zero asymptotically.



\begin{theorem}\label{thm:Error_GAS}
Under Assumption~\ref{assume:lipschitz}
, the estimation error of the Luenberger observer~\eqref{eq:SIR_luenberger_observation} for virus $k$ converges to zero asymptotically if there exist a symmetric matrix $Q^k\succ 0$ 
and $\tau^k\in (0,1]$ such that the following 
inequalities are
satisfied:
\begin{align}
     &(M^k-L^kC^k)^\top[Q^k-(Q^k)^\top (Q^k-\tau^k I)^{-1}Q^k](M^k-L^kC^k) \nonumber  \\&-Q^k+\tau^k(l^k)^2I\prec0 \label{eq:lmi_assumption} 
     \end{align}
     and
     \begin{equation}
 Q^k-\tau^k I \prec0.\label{eq:lmi_assumption_II}
\end{equation}
where $L^k$ is 
the 
observer gain for virus $k$. 
\end{theorem}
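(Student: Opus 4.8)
The plan is to build a quadratic Lyapunov function for the error dynamics \eqref{eq:estimation_vector}, use the S-procedure to absorb the troublesome coupling/nonlinear term $w^k[t]$ into the analysis, and then recognize the two stated inequalities \eqref{eq:lmi_assumption}--\eqref{eq:lmi_assumption_II} as precisely a Schur-complement factorization of a single negative-definite condition. Concretely, I would take $V^k(e^k)=(e^k)^\top Q^k e^k$ as the candidate Lyapunov function. Since $Q^k\succ0$, it satisfies $\lambda_{\text{min}}(Q^k)\|e^k\|^2\le V^k(e^k)\le \lambda_{\text{max}}(Q^k)\|e^k\|^2$, which already supplies the two-sided quadratic bound (with $p=2$) demanded by Lemma~\ref{lemma:GES}.

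Next I would compute the one-step difference along \eqref{eq:estimation_vector}. Writing $A^k:=M^k-L^kC^k$ and substituting $e^k[t+1]=A^ke^k[t]+w^k[t]$, and using that $Q^k$ is symmetric to combine the two scalar cross terms, one obtains
\[
\Delta V^k[t]=(e^k)^\top\big[(A^k)^\top Q^k A^k-Q^k\big]e^k+2(e^k)^\top (A^k)^\top Q^k w^k+(w^k)^\top Q^k w^k .
\]
The obstacle here is the explicit dependence on $w^k[t]$, which is nonlinear and, through \eqref{eq:nonlinear}, couples the errors of the different viruses and so cannot be sign-controlled pointwise. This is exactly the role of Assumption~\ref{assume:lipschitz}: the bound \eqref{eq:lipschitz} is equivalent to the quadratic constraint $(l^k)^2\|e^k\|^2-\|w^k\|^2\ge0$, i.e.\ $w^k$ lives in a cone governed by $e^k$.

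I would then apply the S-procedure, introducing the multiplier $\tau^k>0$ and forming $\Delta V^k[t]+\tau^k\big[(l^k)^2\|e^k\|^2-\|w^k\|^2\big]$. Collecting the $e^k$-quadratic, cross, and $w^k$-quadratic terms, this expression equals $\zeta^\top\mathcal{M}^k\zeta$ for the stacked vector $\zeta=\big[(e^k)^\top,\ (w^k)^\top\big]^\top$, where
\[
\mathcal{M}^k=\begin{bmatrix}(A^k)^\top Q^k A^k-Q^k+\tau^k(l^k)^2 I & (A^k)^\top Q^k\\[2pt] Q^k A^k & Q^k-\tau^k I\end{bmatrix},
\]
the symmetry $Q^k=(Q^k)^\top$ being used to identify the off-diagonal blocks. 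Taking the Schur complement with respect to the $(2,2)$ block, $\mathcal{M}^k\prec0$ holds if and only if $Q^k-\tau^k I\prec0$, which is precisely \eqref{eq:lmi_assumption_II} (and simultaneously guarantees that the inverse appearing in the correction term is well-defined and negative definite), together with $(A^k)^\top\big[Q^k-Q^k(Q^k-\tau^k I)^{-1}Q^k\big]A^k-Q^k+\tau^k(l^k)^2 I\prec0$, which is exactly \eqref{eq:lmi_assumption}.

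Finally, under the two LMIs we have $\mathcal{M}^k\prec0$, so there exists $c>0$ with $\zeta^\top\mathcal{M}^k\zeta\le-c\|e^k\|^2$ for all $\zeta$. On the admissible set where \eqref{eq:lipschitz} holds, the added term $\tau^k\big[(l^k)^2\|e^k\|^2-\|w^k\|^2\big]$ is nonnegative, hence $\Delta V^k[t]\le-c\|e^k[t]\|^2$. Combining this strict decrease with the quadratic bounds on $V^k$, Lemma~\ref{lemma:GES} yields $e^k[t]\to\mathbf{0}$, establishing the claimed convergence of the estimation error. I expect the main difficulty to be the bookkeeping in the S-procedure/Schur-complement step---in particular, verifying that \eqref{eq:lmi_assumption_II} is exactly what validates the direction of the complement and the sign of $(Q^k-\tau^k I)^{-1}$---rather than any deep analytic issue, since Assumption~\ref{assume:lipschitz} has already tamed the effect of the term $w^k$.
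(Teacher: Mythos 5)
Your proposal is correct and follows essentially the same route as the paper: the same Lyapunov function $(e^k)^\top Q^k e^k$, the same quadratic-constraint (S-procedure) treatment of $w^k$ via Assumption~\ref{assume:lipschitz} with multiplier $\tau^k$, and the same Schur-complement identification of \eqref{eq:lmi_assumption}--\eqref{eq:lmi_assumption_II} with negative definiteness of the stacked $2\times 2$ block matrix in $(e^k, w^k)$. The only (harmless) difference is at the end: the paper stops at $\Delta V < 0$ and invokes Lyapunov's direct method for asymptotic stability, whereas you extract the bound $\Delta V \leq -c\lVert e^k\rVert^2$ and apply Lemma~\ref{lemma:GES}, which actually yields the slightly stronger exponential conclusion.
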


\begin{proof}
From Assumption~\ref{assume:lipschitz}, we can write that
\begin{equation}
    (w^k[t])^\top w^k[t] \leq (l^k)^2  (e^k[t])^\top e^k[t] \nonumber
\end{equation}
which can be rewritten as 
\begin{equation}
        \begin{bmatrix}
    (e^k[t])^\top & (w^k[t])^\top
    \end{bmatrix}
    \Phi^k
    \begin{bmatrix}
    e^k[t] \\ w^k[t]
    \end{bmatrix}\leq 0, \label{eq:Phi_zero}
\end{equation}
where \begin{equation}
    \Phi^k = \begin{bmatrix}
    -(l^k)^2 I & 0\\ 0 & I
    \end{bmatrix}. \nonumber
\end{equation}
By utilizing the Schur complement~\cite{zhang2006schur}
, 
and defining $A^k = (M^k-L^kC^k)^\top Q^k(M^k-L^kC^k)$, where $Q^k$ is a symmetric positive definite matrix by assumption,
we can reorganize Eqs.~\eqref{eq:lmi_assumption} and~\eqref{eq:lmi_assumption_II} as 
\begin{align}
     \begin{bmatrix}
A^k-Q^k+\tau^k (l^k)^2 I & (M^k-L^kC^k)^\top Q^k \\
Q^k(M^k-L^kC^k) & Q^k -\tau^k I
\end{bmatrix}
\prec 0,    \label{eq:lmi_assumption_2x2}
\end{align}
which yields the following:
\begin{equation}
\Omega^k - \tau^k \Phi^k
   \prec0, \label{eq:Omega_zero}
   \end{equation}
   where 
\begin{align}
    &\Omega^k =  \begin{bmatrix}
    A^k-Q^k & (M^k-L^kC^k)^\top Q^k \\
    Q^k(M^k-L^kC^k) & Q^k
    \end{bmatrix}. \nonumber 
\end{align}

We now consider the candidate Lyapunov function $V_2^k(e^k,[t]) = (e^k[t])^\top Q^k (e^k[t])$, where $Q^k$ is the positive definite matrix in~\eqref{eq:lmi_assumption_2x2}. 
We can write that
\begin{equation}
    V_2^k(e^k,[t])>0, \text{ for all } e^k[t] \neq 0 \nonumber 
\end{equation}
and
\begin{align}
    \Delta &V_2^k \nonumber  \\ =& V_2^k(e^k[t+1],t)- V_2^k(e^k[t],t) \nonumber  \\
    =& \Big[(M^k-L^kC^k) e^k[t]+w^k[t]\Big]^\top  Q^k \nonumber  \\ & \Big[(M^k-L^kC^k) e^k[t]+w^k[t]\Big] 
    - (e^k[t])^\top Q^k e^k[t] \nonumber \\
     =&(e^k[t])^\top 
     A^k
     e^k[t] \nonumber  \\
     &+(w^k[t])^\top Q^k (M^k-L^kC^k) e^k[t] \nonumber  \\ &  +(e^k[t])^\top (M^k-L^kC^k)^\top Q^kw^k[t]+ (w^k[t])^\top  Q^k w^k[t] \nonumber \\ &-(e^k[t])^\top Q^k e^k[t] \nonumber \\
    =&(e^k[t])^\top  
    \Big(A^k
    -Q^k\Big) e^k[t] \nonumber \\
    &  +(w^k[t])^\top Q^k (M^k-L^kC^k) e^k[t] \nonumber  \\
    &+(e^k[t])^\top (M^k-L^kC^k)^\top Q^k w^k[t] \nonumber  \\ &+(w^k[t])^\top  Q^k w^k[t] \nonumber 
    \\ =& \begin{bmatrix}
    (e^k[t])^\top & (w^k[t])^\top
    \end{bmatrix}\Omega^k \begin{bmatrix}
    e^k[t] \\ w^k[t]
    \end{bmatrix} \nonumber 
    \\ <& \begin{bmatrix}
    (e^k[t])^\top & (w^k[t])^\top
    \end{bmatrix} \tau^k \Phi^k  \begin{bmatrix}
    e^k[t] \\ w^k[t]
    \end{bmatrix} \label{eq:Omega_Phi}
    \\ \leq  & 0, \label{eq:Lyap_zero}
\end{align}
where 
inequality~\eqref{eq:Omega_Phi} follows from 
~\eqref{eq:Omega_zero},
and 
inequality~\eqref{eq:Lyap_zero} 
follows from
inequality~\eqref{eq:Phi_zero}.
Therefore, by Lyapunov's direct method~\cite{vidyasagar2002nonlinear}, the estimation error of 
the Luenberger observer~\eqref{eq:SIR_luenberger_observation} for virus $k$ is globally asymptotically stable.
\end{proof}

\begin{remark}
Note that we need the pair $(M^k,C^k)$ to be detectable for inequalities~\eqref{eq:lmi_assumption} and~\eqref{eq:lmi_assumption_II} to hold~\cite{bara2005observer, niazi2022observer}. Under Assumption~\ref{assume:four}, the pair $(M^k,C^k)$ is observable, and, thus, detectable.
\end{remark}


\begin{corollary}\label{coro:lmi}
The inequalities~\eqref{eq:lmi_assumption} and~\eqref{eq:lmi_assumption_II} combined together in Theorem~\ref{thm:Error_GAS} 
are equivalent to
the following LMI:
\begin{align}
\tiny \begin{bmatrix} 
-Q^k+\tau^k  (l^k)^2I & (M^k)^\top Q^k -(C^k)^\top R^k & (M^k)^\top Q^k -(C^k)^\top R^k \\
[(M^k)^\top Q^k -(C^k)^\top R^k]^\top & Q^k - \tau^k I & \mathbf{0} \\
[(M^k)^\top Q^k -(C^k)^\top R^k]^\top & \mathbf{0} & -Q^k
\end{bmatrix}& \nonumber \\ \prec0&
\label{eq:lmi}
\end{align}
where $(R^k)^\top = Q^k L^k $.
\end{corollary}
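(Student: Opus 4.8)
The plan is to prove the equivalence by a single Schur-complement reduction of the $3\times 3$ block matrix in~\eqref{eq:lmi}, after first linearizing it in the decision variables. First I would rewrite the off-diagonal blocks: since $(R^k)^\top = Q^k L^k$ and $Q^k$ is symmetric, one has $R^k = (L^k)^\top Q^k$, hence $(C^k)^\top R^k = (L^k C^k)^\top Q^k$ and therefore $(M^k)^\top Q^k - (C^k)^\top R^k = (M^k - L^k C^k)^\top Q^k$. Writing $N^k := M^k - L^k C^k$ for brevity, the left-hand side of~\eqref{eq:lmi} becomes the symmetric block matrix with $(1,1)$ block $-Q^k + \tau^k (l^k)^2 I$, identical $(1,2)$ and $(1,3)$ blocks equal to $(N^k)^\top Q^k$, diagonal blocks $Q^k - \tau^k I$ and $-Q^k$, and zero $(2,3)$ block.

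Next I would partition this matrix as $\begin{bmatrix} A & B \\ B^\top & D\end{bmatrix}$ with $A = -Q^k + \tau^k (l^k)^2 I$, $B = \begin{bmatrix}(N^k)^\top Q^k & (N^k)^\top Q^k\end{bmatrix}$, and the block-diagonal lower-right corner $D = \begin{bmatrix} Q^k - \tau^k I & \mathbf{0} \\ \mathbf{0} & -Q^k\end{bmatrix}$, and then invoke the strict Schur complement lemma~\cite{zhang2006schur}: the full matrix is negative definite if and only if $D \prec 0$ and $A - BD^{-1}B^\top \prec 0$. The first condition, $D\prec 0$, splits (because $D$ is block diagonal) into $Q^k - \tau^k I \prec 0$, which is exactly~\eqref{eq:lmi_assumption_II}, together with $-Q^k \prec 0$, i.e. $Q^k \succ 0$, which holds by hypothesis.

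It then remains to identify the Schur complement with~\eqref{eq:lmi_assumption}. Using $D^{-1} = \begin{bmatrix}(Q^k - \tau^k I)^{-1} & \mathbf{0}\\ \mathbf{0} & -(Q^k)^{-1}\end{bmatrix}$ together with $Q^k (Q^k)^{-1} Q^k = Q^k$, a direct computation would give $B D^{-1} B^\top = (N^k)^\top Q^k (Q^k - \tau^k I)^{-1} Q^k N^k - (N^k)^\top Q^k N^k$, so that $A - B D^{-1} B^\top = (N^k)^\top\big[Q^k - Q^k(Q^k - \tau^k I)^{-1}Q^k\big]N^k - Q^k + \tau^k (l^k)^2 I$. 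Since $Q^k$ is symmetric, $Q^k = (Q^k)^\top$, and this expression coincides term-for-term with the left-hand side of~\eqref{eq:lmi_assumption}. Combining the two Schur-complement conditions therefore yields precisely~\eqref{eq:lmi_assumption} and~\eqref{eq:lmi_assumption_II}, establishing the claimed equivalence in both directions.

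I expect the only real subtlety — the step I would check most carefully — to be the bookkeeping in the Schur complement caused by the repeated off-diagonal block $(N^k)^\top Q^k$ and the two different lower-right blocks $Q^k - \tau^k I$ and $-Q^k$: it is exactly the cancellation arising from the $-(N^k)^\top Q^k N^k$ contribution of the $-Q^k$ block that reconstructs the quadratic term $Q^k - (Q^k)^\top(Q^k - \tau^k I)^{-1}Q^k$ appearing in~\eqref{eq:lmi_assumption}. One should also confirm that strictness is preserved throughout, which is automatic: the Schur complement lemma for strict definiteness is itself an equivalence, and $Q^k \succ 0$ guarantees that $D$ is invertible so the reduction is well defined.
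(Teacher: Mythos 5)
Your proof is correct and follows essentially the same route as the paper: the substitution $(R^k)^\top = Q^k L^k$ to rewrite the off-diagonal blocks as $(M^k - L^k C^k)^\top Q^k$, followed by a Schur-complement reduction, with the extra condition $-Q^k \prec 0$ absorbed by the standing hypothesis $Q^k \succ 0$. The only cosmetic difference is that you take a single Schur complement with respect to the block-diagonal lower-right $2\times 2$ corner, whereas the paper first reduces LMI~\eqref{eq:lmi} to the intermediate matrix~\eqref{eq:lmi_assumption_2x2} plus $-Q^k \prec 0$, and then reuses the Schur-complement equivalence between~\eqref{eq:lmi_assumption_2x2} and the pair \eqref{eq:lmi_assumption}--\eqref{eq:lmi_assumption_II} already established in the proof of Theorem~\ref{thm:Error_GAS}.
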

By applying Schur complement, LMI~\eqref{eq:lmi} can be shown to be equivalent to Eq.~\eqref{eq:lmi_assumption_2x2} and $-Q^k \prec 0$. Note that LMI~\eqref{eq:lmi} can be
solved through the $cvx$ solver in MATLAB for simulations.

\section{Distributed Feedback Control}\label{sec:mitigation}

In this section, we present a distributed feedback mitigation strategy for ensuring that
all viruses are eradicated. We establish that virus $k$ can be eradicated in exponential time by boosting the healing rate associated with virus $k$.
Applying 
such eradication strategy for
all $m$ viruses, the system converges
to a healthy state.

When battling against the spread of an epidemic, boosting the healing rate with respect to the virus is a common approach~\cite{zhang2021estimation, gracy2020analysis}. 
Boosting the healing rates could be implemented by means of
providing effective medication, medical supplies, and/or healthcare workers to each subpopulation.
\par The key tool behind devising the aforementioned mitigation strategy is Proposition~\ref{prop:rho_tilde}, which
says that
if the spectral radius of the state transition matrix of virus  $1$ is less than one, i.e.,  $\rho(\tilde{M}^k[t])<1$,  then the infection level of the $k$-th virus converges to zero within at least exponential time. 
Accordingly, we formally state our 
distributed feedback control strategy as follows: 
\begin{equation}\label{eq:control_scheme_feedback}
    \widetilde{\gamma}_i^k[t] = \gamma_i^k -u^k_i[t],  \;\ i \in [n],
\end{equation}
where $u^k_i[t]$ is a state feedback controller, with
\begin{align}
    u^k_i[t] &= -s_i[t]\sum_{j=1}^n \beta_{ij}^k \nonumber \\
    & = -\bigg(1-\sum_{k=1}^m x_i^k[t]-r_i[t]\bigg)\sum_{j=1}^n \beta_{ij}^k.\label{eq:control_u}
\end{align}
We have the following result.
\begin{theorem}\label{thm:control_feedback}
Consider the system in~\eqref{eq:dt_SIR}, under Assumptions~\ref{assume:two} and~\ref{assume:three}, and assume further that $h\widetilde{\gamma}_i^k[t] <1$, 
    $\forall i \in [n], t \in \mathbb{Z}_{\geq 0}$.
Then the 
feedback controller in Eq.~\eqref{eq:control_scheme_feedback} 
guarantees that virus $k$ is eradicated with at least an exponential rate of $h\min_{i \in [n]} \{\gamma_i^k\}$.
\end{theorem}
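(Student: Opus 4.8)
The plan is to apply Proposition~\ref{prop:rho_tilde} to the closed-loop system, so the core of the argument is to show that the feedback law~\eqref{eq:control_scheme_feedback}--\eqref{eq:control_u} forces the effective state matrix of virus~$k$ to satisfy the spectral-radius condition $\rho(\tilde M^k[t])<1$ with an explicit margin. First I would substitute the controlled healing rate $\widetilde\gamma_i^k[t] = \gamma_i^k - u_i^k[t]$ into the dynamics~\eqref{eq:SIR_dynamic}, replacing $\Gamma^k$ by $\widetilde\Gamma^k[t] := \operatorname{diag}(\widetilde\gamma_i^k[t])$. This yields a closed-loop state matrix
\begin{equation}\label{eq:Mcl}
    \tilde M_{\mathrm{cl}}^k[t] = I + h\{S[t]B^k - \widetilde\Gamma^k[t]\}.
\end{equation}
Writing out the $i$-th diagonal contribution, the term $-h\,u_i^k[t] = h\,s_i[t]\sum_{j=1}^n \beta_{ij}^k$ is designed precisely to cancel the off-diagonal infection inflow $h\,s_i[t]\sum_j \beta_{ij}^k x_j^k$ at the level of row sums. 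The key algebraic observation I would verify is that, after this cancellation, $\tilde M_{\mathrm{cl}}^k[t]$ reduces to the diagonal matrix $\operatorname{diag}(1 - h\gamma_i^k)$, or is dominated in an appropriate sense by it.

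Second, given that $h\widetilde\gamma_i^k[t] < 1$ by hypothesis and $\gamma_i^k > 0$ with $h>0$ by Assumption~\ref{assume:two}, I would establish that each diagonal entry $1 - h\gamma_i^k$ lies in $[0,1)$; here I would also invoke Assumption~\ref{assume:three} ($h\sum_k\gamma_i^k \le 1$) to guarantee nonnegativity. The spectral radius of the resulting (effectively diagonal) closed-loop matrix is then $\max_{i\in[n]}(1-h\gamma_i^k) = 1 - h\min_{i\in[n]}\gamma_i^k < 1$, which is exactly the margin needed. With $\rho(\tilde M_{\mathrm{cl}}^k[t]) < 1$ holding uniformly in $t$, Proposition~\ref{prop:rho_tilde} applies directly and certifies eradication of virus~$k$ at an exponential rate of $\min_{\forall t}[1 - \rho(\tilde M_{\mathrm{cl}}^k[t])] = h\min_{i\in[n]}\{\gamma_i^k\}$, matching the claimed rate.

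The main obstacle I anticipate is the off-diagonal structure: the feedback~\eqref{eq:control_u} cancels the \emph{row sum} of the infection term but $B^k$ need not be diagonal, so the matrix $S[t]B^k$ does not vanish entrywise. The delicate step is therefore arguing about $\rho(\tilde M_{\mathrm{cl}}^k[t])$ rather than about its entries. I would handle this by noting that $\tilde M_{\mathrm{cl}}^k[t]$ is nonnegative (under Assumptions~\ref{assume:two}--\ref{assume:three} and $h\widetilde\gamma_i^k[t]<1$) and that the controller makes its $i$-th row sum equal to $1 - h\gamma_i^k + h s_i[t]\sum_j\beta_{ij}^k - h u_i^k[t]\cdot(\text{bookkeeping})$; by the standard bound $\rho(A) \le \max_i (\text{row-sum}_i)$ for nonnegative matrices, the spectral radius is controlled by $\max_i(1-h\gamma_i^k)$. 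I would make this row-sum computation precise, confirming the feedback term exactly offsets the susceptible-driven transmission so that each row sum collapses to $1 - h\gamma_i^k$.

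One caveat I would flag is that Proposition~\ref{prop:rho_tilde} was stated for the open-loop $\tilde M^k[t]$; I would note that its proof uses only nonnegativity of the state matrix and the spectral-radius bound, both of which survive verbatim for $\tilde M_{\mathrm{cl}}^k[t]$, so the conclusion transfers without modification. Finally, since the argument holds for arbitrary $k\in[m]$, applying the controller to every virus drives the full system to the healthy state, as claimed in the surrounding discussion.
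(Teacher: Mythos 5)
Your proposal matches the paper's proof in all essentials: substitute the controller into the dynamics, observe that the closed-loop state matrix is nonnegative (this is where $h\widetilde{\gamma}_i^k[t]<1$ is needed) with each row sum collapsing exactly to $1-h\gamma_i^k$, bound the spectral radius by the maximal row sum (the paper phrases this via the Gershgorin circle theorem), and conclude via the Proposition~\ref{prop:rho_tilde}-style norm argument with Bernoulli's inequality. Your passing suggestion that the closed-loop matrix "reduces to $\operatorname{diag}(1-h\gamma_i^k)$" is false entrywise, but you correctly identify and resolve this yourself with the row-sum argument, so the plan is sound and is essentially the paper's own.
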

\begin{proof}
By substituting Eq.~\eqref{eq:control_scheme_feedback} and Eq.~\eqref{eq:control_u} into \eqref{eq:dt_SIRsub2}, we obtain
\begin{align}\label{eq:infected_control}
    x_i^k[t+1] =& x_i^k[t] + \nonumber \\ & h\Bigg\{s_i[t]\sum_{j=1}^n \beta_{ij}^k x_j^k[t] - \Big[ s_i[t]\sum_{j=1}^n \beta_{ij}^k +\gamma_i^k \Big]x_i^k[t]\Bigg\}. 
\end{align}
The state transition matrix of \eqref{eq:infected_control} can be written as 
\begin{equation}
    \widetilde{M}^k[t] = I+h\Big[S[t]B^k -\big(S[t]\text{diag}(B^k \mathbf{1}_{n\times 1}) + \text{diag}(\gamma_i^k)\big)\Big]. \nonumber 
\end{equation}
The entries of the $i$-th row of $\widetilde{M}^k[t]$, therefore, are
\begin{equation}
    \widetilde{m}_{ii}^k[t] = 1+h\Bigg[s_i[t]\beta_{ii}^k-s_i[t]\sum\limits_{j=1}^n \beta_{ij}^k-\gamma_i^k\Bigg], \nonumber 
\end{equation}
\begin{equation}
    \widetilde{m}_{ij}^k[t] = h s_i[t]\beta_{ij}^k, \nonumber 
\end{equation}
which satisfies the following inequality
\begin{equation}
    \widetilde{m}_{ii}^k[t]+ \sum\limits_{j\neq i}^n\widetilde{m}_{ij}^k [t] \leq 1- 
    \gamma_i^k, 
    \forall i \in n. \nonumber
\end{equation}
Therefore, by Gershgorin circle theorem, the spectral radius of $\widetilde{M}^k [t] $ is upper bounded by $1- h\min_{i \in [n]} \{\gamma_i^k\}$, that is, 
\begin{equation}
    \rho(\widetilde{M}^k [t]) \leq 1- h\min_{i \in [n]} \{\gamma_i^k\}.  \nonumber 
\end{equation}
Since we have $x^k[t+1] = \widetilde{M}^k[t] x^k[t]$ and $x^k[t] \geq 0$ for all $t \in \mathbb{Z}_{\geq 0}$, we can write that 
\begin{equation}
    \frac{\lVert x^k[t+1] \rVert}{\lVert x^k[t] \rVert} = \frac{\lVert \widetilde{M}^k[t]x^k[t] \rVert}{\lVert  x^k[t] \rVert} = |\lambda(\widetilde{M}^k[t])|\leq \rho(\widetilde{M}^k[t]),
\end{equation}
which results in
$\lVert x^k[t+1] \rVert \leq[1- h\min_{i \in [n]} \{\gamma_i^k\}] \lVert x^k[t] \rVert$ for all $t$. Since, $\gamma_i^k>0$ for all $i \in [n], k\in[m]$, from Assumption~\ref{assume:two}, we obtain that, for all $x_i^k[0]\in [0,1]^n$,
\begin{equation}\label{eq:converge}
    \lVert x^k[t] \rVert \leq [1- h\min_{i \in [n]} \{\gamma_i^k\}]^t \lVert x^k[0] \rVert  \leq e^{-t h\min_{i \in [n]} \{\gamma_i^k\}} \lVert x^k[0] \rVert,
\end{equation}
where the second inequality holds by
Bernoulli's inequality~\cite{carothers2000real}.
Hence, $x^k[t]$ converges to $\mathbf{0}$ with at least an exponential rate of $h\min_{i \in [n]} \{\gamma_i^k\}$, according to Proposition~\ref{prop:rho_tilde}.
\end{proof}

\begin{corollary}
If we apply the mitigation strategy in Theorem~\ref{thm:control_feedback} to all viruses $k \in[m]$, then the system in~\eqref{eq:dt_SIR} converges to the healthy state at an exponential rate.
\end{corollary}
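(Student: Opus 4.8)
The plan is to reduce this multi-virus claim to $m$ simultaneous applications of Theorem~\ref{thm:control_feedback}, one per virus, and then to aggregate the resulting per-channel estimates. First I would observe that the feedback law \eqref{eq:control_scheme_feedback}--\eqref{eq:control_u}, applied to every $k \in [m]$, leaves the susceptible update \eqref{eq:dt_SIRsub1} structurally unchanged and turns each infected channel into the recursion $x^k[t+1] = \widetilde{M}^k[t]\,x^k[t]$ of \eqref{eq:infected_control}. The essential point is that the bound $\rho(\widetilde{M}^k[t]) \leq 1 - h\min_{i\in[n]}\{\gamma_i^k\}$ obtained in the proof of Theorem~\ref{thm:control_feedback} via the Gershgorin circle theorem enters the trajectory only through the factor $s_i[t]$; it therefore holds uniformly in $t$ regardless of how the other $m-1$ viruses evolve.

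I expect this uniformity to be the one step deserving genuine care, and it is what legitimizes applying the single-virus result to all viruses at once. Although the $m$ infected subsystems are coupled through the shared susceptible state $s_i[t]$, that coupling affects each per-virus decay estimate only through $s_i[t] \in [0,1]$, a membership that continues to hold for the controlled system since the conservation identity and the relevant sign conditions are preserved by the control (so the induction behind Lemma~\ref{lemma:one} carries over). Consequently the estimate $\lVert x^k[t] \rVert \leq e^{-t\,h\min_{i}\{\gamma_i^k\}}\lVert x^k[0] \rVert$ from \eqref{eq:converge} is valid for every $k$ simultaneously.

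Next I would aggregate. Setting $\eta := h\min_{k\in[m]}\min_{i\in[n]}\{\gamma_i^k\}$, which is strictly positive by Assumption~\ref{assume:two}, each channel obeys $\lVert x^k[t] \rVert \leq e^{-\eta t}\lVert x^k[0] \rVert$, so stacking into $\mathbf{X}[t]$ gives $\lVert \mathbf{X}[t] \rVert \leq e^{-\eta t}\lVert \mathbf{X}[0] \rVert$. Hence the total infection across all viruses and all nodes vanishes at the exponential rate $\eta$.

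Finally I would show that this forces the entire state to a healthy configuration. From \eqref{eq:dt_SIRsub1} together with Lemma~\ref{lemma:one}, each $s_i[t]$ is non-increasing and bounded below by $0$, hence convergent; from \eqref{eq:dt_SIRsub3} the increments of $r_i[t]$ equal $h\sum_{k=1}^m \gamma_i^k x_i^k[t]$, which are summable because the $x_i^k[t]$ decay geometrically, so $r_i[t]$ converges as well. Passing to the limit in the conservation identity $s_i[t]+\sum_{k=1}^m x_i^k[t]+r_i[t]=1$ and using $x^k[t]\to\mathbf{0}$ yields a limit with no infected individuals, i.e.\ the healthy state, attained at an exponential rate. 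Once the uniformity argument of the first two paragraphs is secured, the remainder is routine bookkeeping.
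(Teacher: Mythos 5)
Your proposal is correct and follows essentially the route the paper intends: the corollary is stated without proof as an immediate consequence of applying Theorem~\ref{thm:control_feedback} to every virus, and your argument is exactly that, with the supporting details made explicit. Your added care about the points the paper leaves tacit---that the Gershgorin bound $\rho(\widetilde{M}^k[t])\leq 1-h\min_i\{\gamma_i^k\}$ is unaffected by the coupling through $s_i[t]\in[0,1]$, that the well-posedness of Lemma~\ref{lemma:one} survives the control, and that $s_i[t]$ and $r_i[t]$ also converge so the limit is genuinely a healthy state---is sound and strengthens, rather than departs from, the paper's reasoning.
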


\begin{remark}
The control strategy proposed in Theorem~\ref{thm:control_feedback} can be interpreted as follows: if the healing rate with respect to virus $k$ of each subpopulation is appropriately increased according to the susceptible level, 
for example by distributing effective medication, medical supplies, and healthcare workers to each subpopulation, 
then the epidemic will be eradicated at an exponential rate.
This theorem provides decision-makers
insight into, given sufficient resources, how to allocate
medical supplies and healthcare workers
to different subpopulations
so that the epidemic can be eradicated quickly. Moreover, notice that the convergence rate of virus $k$ depends on the minimum healing rate at each node of the network corresponding to the $k$-th virus, which encourages the decision-makers to elevate the 
lowest
healthcare level of the subpopulation within the community.
\end{remark}

Our feedback controller~\eqref{eq:control_scheme_feedback} is an improvement with respect to similar \textit{open-loop} control schemes for networked SIS models in \cite{ye2021applications,gracy2020analysis,liu2019analysis}, 
since it does not require 
full information of the system states. 
More specifically, for the feedback gain design,
because the recovered state can be calculated
using  $r_i[t]=h\sum_{q=0}^t \sum_{k=1}^m \gamma_i^k x_i^k[q]$,
we only need 
knowledge of the susceptible level $s_i[t]$ or 
 knowledge of the infected 
level, $x_i^k[t]$ for all $k\in [m], i\in [n]$.
Furthermore, compared to the control strategies proposed in~\cite{liu2019analysis, gracy2020analysis, ye2021applications} whose boosted healing rates maintain constant values, our distributed feedback controller, since it updates the healing rates in response to the infection level in a subpopulation, is capable of allocating the medical resource more efficiently.
In the next section, among other results, we will explore the performance of our distributed feedback controller by relying only on the estimates of the system states in addition to the actual system states.

\section{Simulations}\label{sec:simulation}

In this section, we consider the special case of 
two competing variants of the SARS-CoV-2 virus,
(i.e. 
$m=2$): 
Delta and Omicron
spreading over the network depicted in Figure~\ref{fig:Graph_Europe}~\cite{zipfel2021missing}.
We choose the two variants of the SARS-CoV-2 virus 
because they cause patients to display similar symptoms such as fever, coughing, 
and 
headache
\cite{antonelli2022risk, rader2022use}.
We consider a network of $5$ nodes, where
each node represents a country in Europe: 
France, Italy, Switzerland, Austria, and Germany; there exists an edge between two nodes if the countries that 
the nodes represent 
share a border with each other. 
The system parameters corresponding to the two variants of SARS-CoV-2 virus 
are listed in Table~\ref{table:parameters_COVID} and Table~\ref{table:parameters_Flu}, respectively.
This section includes no real data; however, the viral spreading parameters are inspired by the behavior of the viruses~\cite{shrestha2022evolution},
that is, the model parameters are chosen so that the 
Omicron variant
is more contagious than 
Delta.
We also acknowledge that these two variants
are not necessarily competitive; there are cases where people have been infected with both variants.
However, the samples of co-infection of both variants are very uncommon~\cite{bolze2022evidence}, thus can be disregarded in the population and the time scales being considered in these simulations.

\begin{table}[h!]
\centering
\begin{tabular}{|c | c c c c c|} 
 \hline  
 $\beta_{ij}^1$  
 & FR & IT & CH & AT & DE  \\ [0.5ex] 
 \hline
 FR   & 0.08& 0.15& 0.24& 0& 0.06 \\
 IT & 0.15& 0.12& 0.13& 0.11& 0  \\
 CH       & 0.24& 0.13& 0.25& 0.05& 0.04  \\
 AT        & 0& 0.09& 0.05& 0.11& 0.15   \\
 DE       &  0.06& 0& 0.04& 0.14& 0.09  \\ \hline\hline
 $\gamma_i^1$       &0.15& 0.23& 0.17& 0.25& 0.2  \\
 $x^1[0]$       & 0.005       & 0.01          & 0.0075   & 0.0025  & 0.0075 
  \\$c^1$       & 0.4       &  0.4          &  0.4   &  0.4  & 0.4   \\ 
 \hline
\end{tabular}
\caption{For the network in Figure~\ref{fig:Graph_Europe}, disease parameters corresponding to the Omicron variant.}
\label{table:parameters_COVID}
\end{table}

\begin{table}[h!]
\centering
\begin{tabular}{|c | c c c c c|} 
 \hline
 $\beta_{ij}^2$   & FR & IT & CH & AT & DE  \\ [0.5ex] 
 \hline
 FR   & 0.02& 0.05& 0.04& 0& 0.01 \\
 IT & 0.05& 0.06& 0.07& 0.02& 0  \\
 CH       & 0.04& 0.07& 0.04& 0.03& 0.05  \\
 AT        & 0& 0.03& 0.04& 0.09& 0.07   \\
 DE       &  0.01& 0& 0.05& 0.07& 0.06 \\ \hline\hline
 $\gamma_i^2$       &0.095& 0.12& 0.1& 0.15& 0.13  \\
 $x^2[0]$       & 0.001       & 0.002          & 0.0035   & 0.002  & 0.001   \\$c^2$       & 0.3       &  0.3          &  0.3   &  0.3  & 0.3\\ 
 \hline
\end{tabular}
\caption{For the network in Figure~\ref{fig:Graph_Europe}, disease parameters corresponding to the Delta variant.}
\label{table:parameters_Flu}
\end{table}

\begin{figure}
\centering
\includegraphics[width=.3\textwidth]{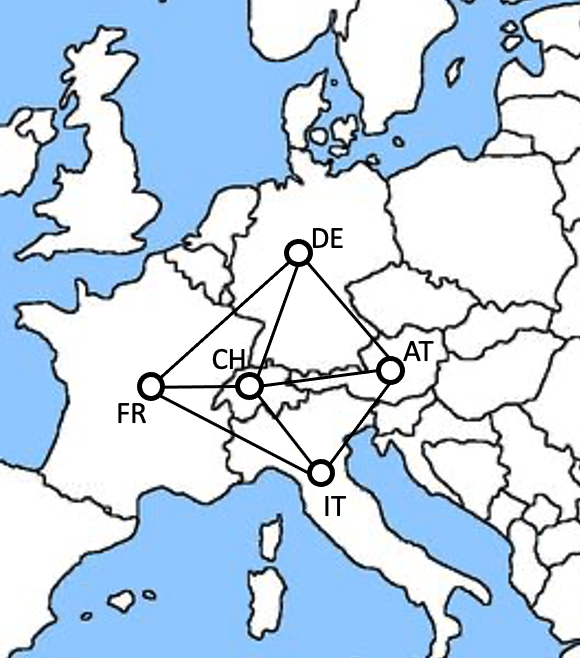}
\caption{Contact graph for the spread of the
Omicron and Delta variants of
SARS-CoV-2 virus.}
\label{fig:Graph_Europe}
\end{figure}


\begin{figure}
\centering
\begin{overpic}[width = 0.48\columnwidth]{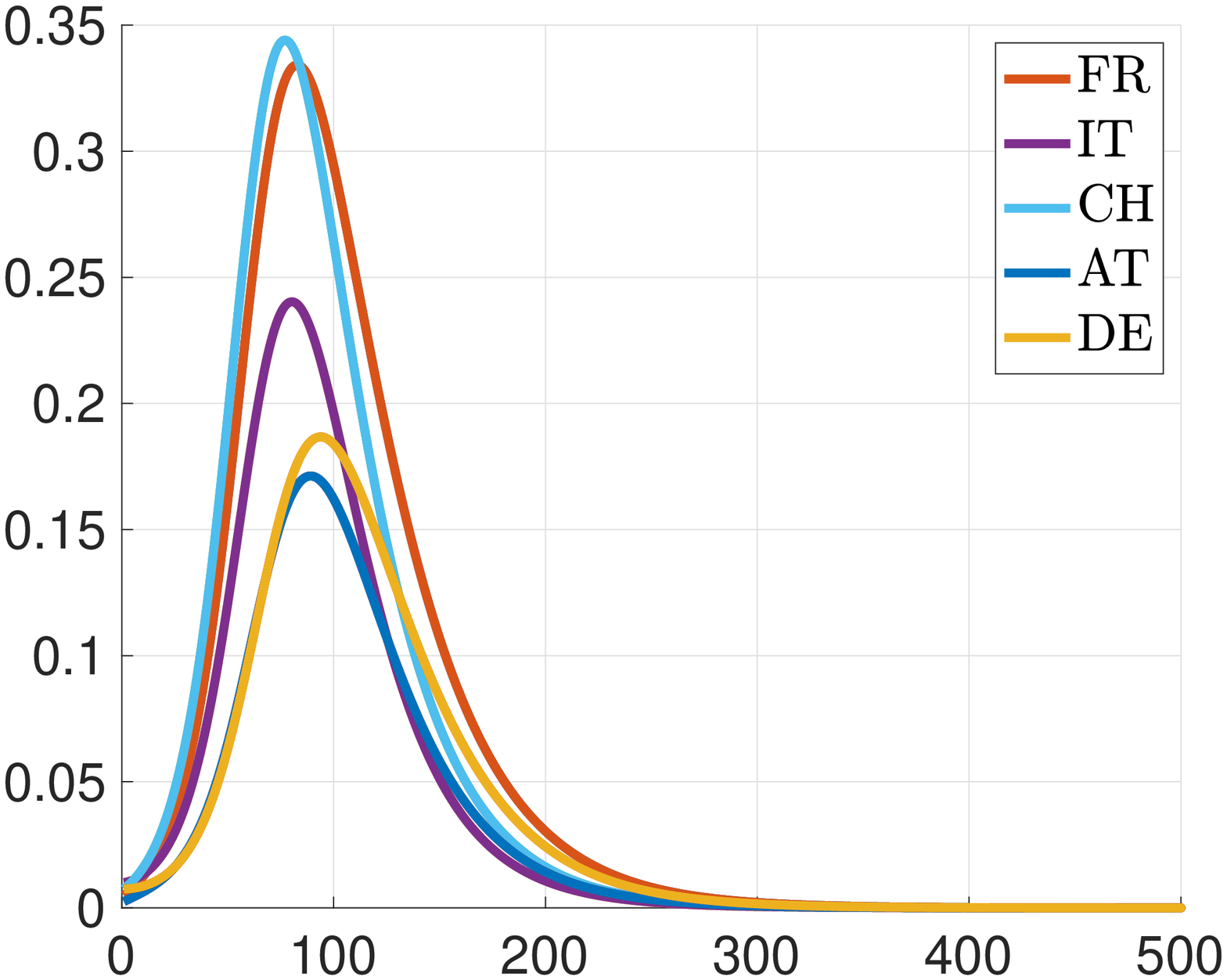}
     \put(-2,35){{\parbox{0.75\linewidth}\footnotesize \rotatebox{90}{\footnotesize$x^1$}
     }}
     \put(50,-1){\footnotesize{\parbox{0.75\linewidth}\footnotesize $t$
     }}\normalsize
   \end{overpic}
\begin{overpic}[width =  0.48\columnwidth]{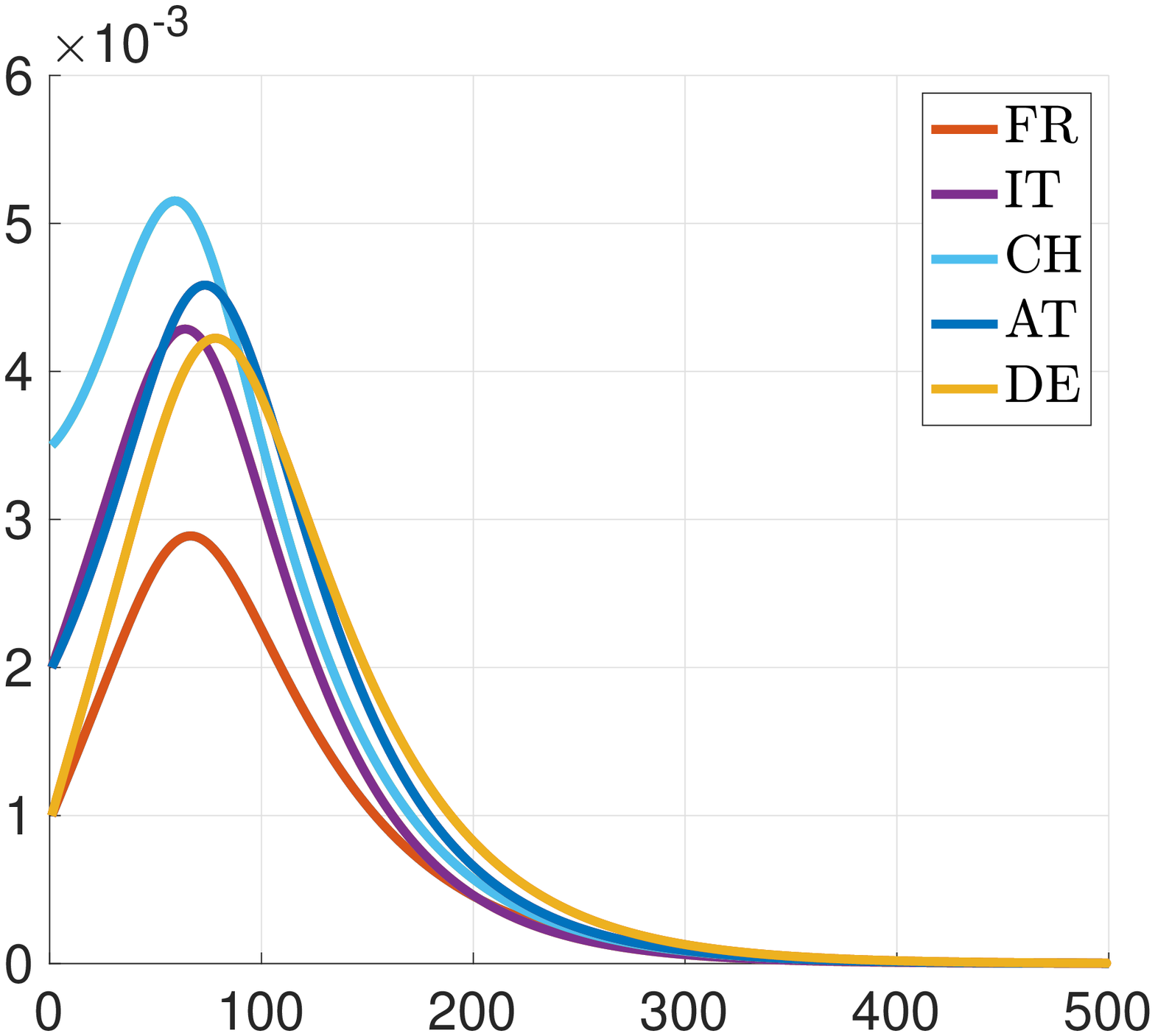}
     \put(-0,35){{\parbox{0.75\linewidth}\footnotesize \rotatebox{90}{\footnotesize$x^2 $}
     }}\normalsize
     \put(50,-1){\footnotesize 
    $t$
     }
   \end{overpic}
\caption{Evolution of infection level of the
Omicron variant
in each country (left); Evolution of infection level of the Delta variant in each country (right).}
\label{fig:Evolution}
\end{figure}

\begin{figure}
\centering
\begin{overpic}[width = 0.48\columnwidth]{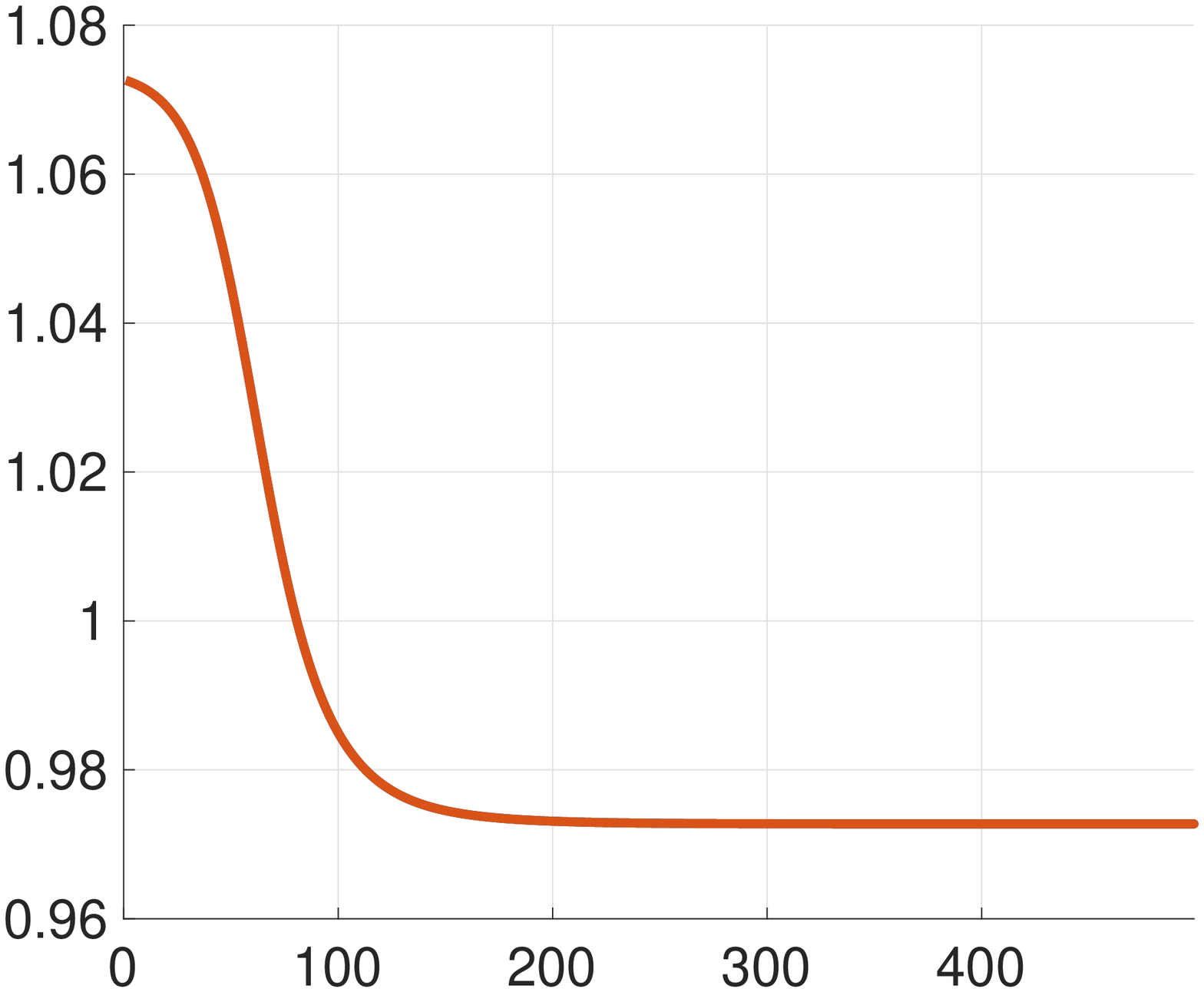}
     \put(-5,25){{\parbox{0.75\linewidth}\footnotesize \rotatebox{90}{\footnotesize$\rho(\Tilde{M}^1[t])$}
     }}
     \put(50,0){\footnotesize{\parbox{0.75\linewidth}\footnotesize $t$
     }}\normalsize
  \end{overpic}
\begin{overpic}[width =  0.48\columnwidth]{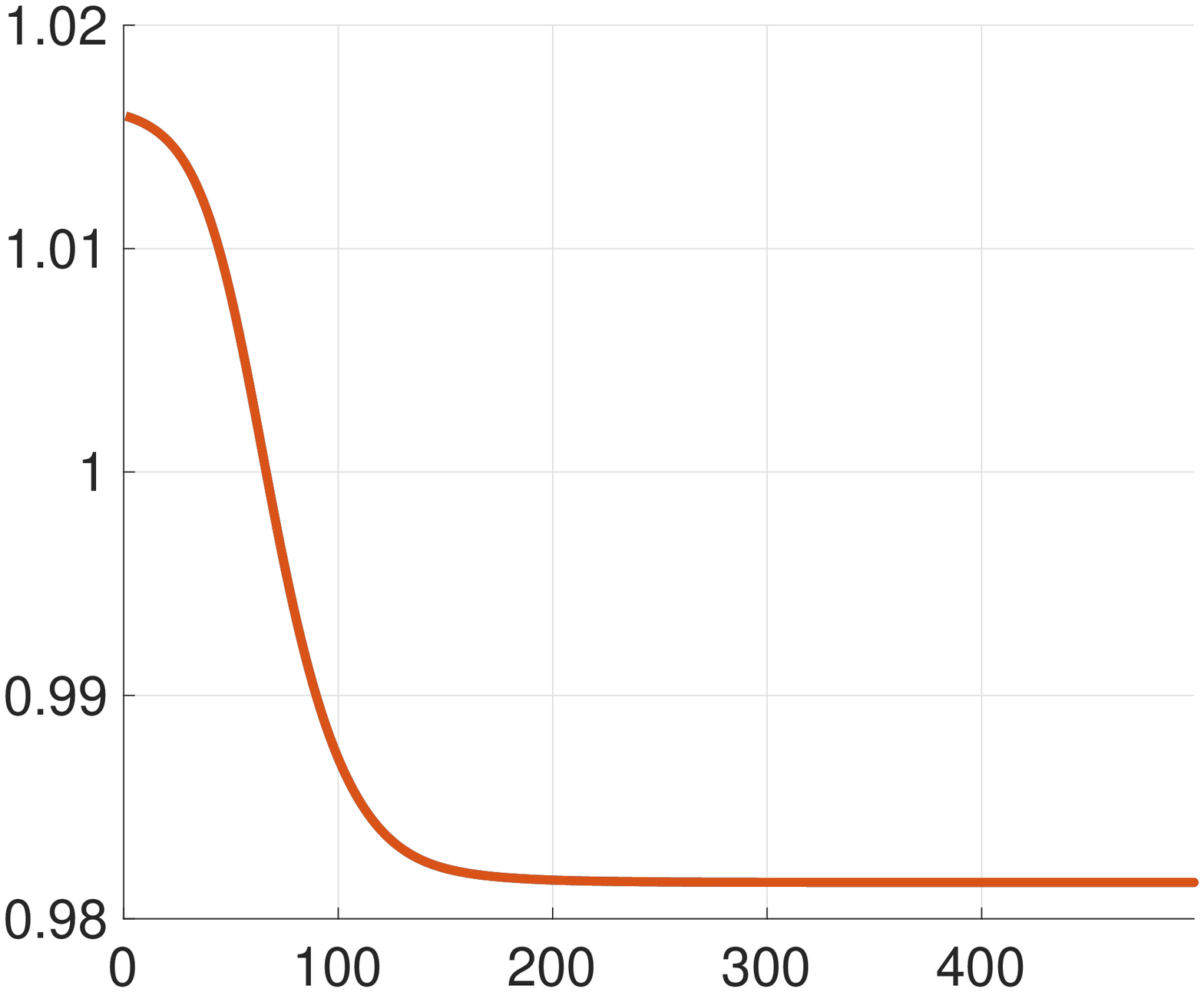}
     \put(-7,25){{\parbox{0.75\linewidth}\footnotesize \rotatebox{90}{\footnotesize$\rho(\Tilde{M}^2[t])$}
     }}\normalsize
     \put(50,0){\footnotesize 
    $t$
     }
  \end{overpic}
\caption{Spectral Radius of the state transition matrix of the Omicron variant in each country (left); Spectral Radius of the state transition matrix of the Delta variant in each country (right).}
\label{fig:Spectral_Radius}
\end{figure}

\begin{figure}
\centering
\begin{overpic}[width =  0.948\columnwidth]{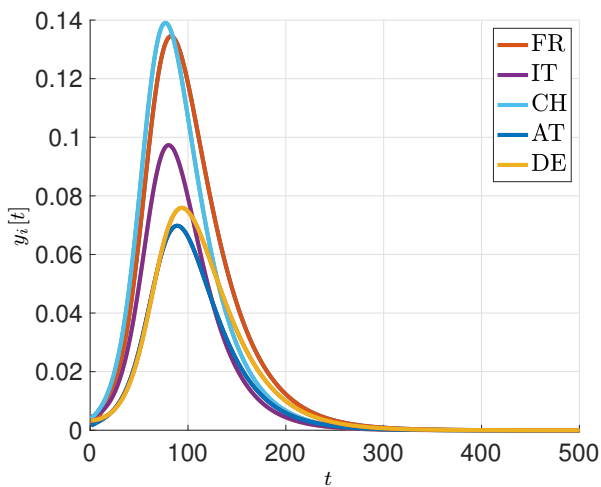}
     \put(0,38){{\parbox{0.75\linewidth}\footnotesize \rotatebox{90}{\footnotesize$y_i[t]$}
     }}\normalsize
     \put(50,0){\footnotesize 
    $t$
     }
   \end{overpic}
\caption{Total proportion of individuals who show similar symptoms from both variants in each country.}
\label{fig:y}
\end{figure}

The 
evolution of the infection levels
of both viruses are illustrated in Figure~\ref{fig:Evolution}, and the 
effective reproduction number, namely
$\rho(\Tilde{M}^k[t])$, of both viruses are plotted in Figure~\ref{fig:Spectral_Radius}. 
The infection level of each virus 
attains its 
peak at the same time when $\rho(\Tilde{M}^k[t])$ 
drops down to 1, consistent with our analysis in Section~\ref{sec:stability}. 
We
plot the total 
fraction
of individuals who 
exhibit
similar symptoms that are caused due to either of the viruses, namely $y_i[t]$, with $i=[5]$, from the observation model in Eq.~\eqref{eq:SIR_observation}; see
Figure~\ref{fig:y}.


We then estimate the infection level 
by using the 
Luenberger observer of Eq.~\eqref{eq:SIR_luenberger_observation}, with $\hat{s}[t]$ and $\hat{r}[t]$ 
calculated via
Eq.~\eqref{eq:hat:s} and Eq.~\eqref{eq:hat:r}, respectively. 
The initial conditions are provided in Table~\ref{table:initial_conditions}.
To investigate the impact of the 
observer gain $L_i^k$ on the estimation error, we simulate the Luenberger observer's performance with a scaled observer gain $\eta L_i^k$, where 
$L_i^k=1$, for all $i\in[5], k \in [2]$.
Let $t^*$ denote the time instant such that
the aggregated estimation error $\frac{1}{mn}\sum_{k=1}^m\sum_{i=1}^n |x_i^k[t]-\hat{x}_i^k[t]|<0.01$, for all $t>t^*$. 
In Figure~\ref{fig:Error_Gain}, we illustrate how 
the point in time at which the estimation error converges
($t^*$) depends on the gain of the observer ($\eta$) for this system.
We can see that as the observer gain increases, the convergence time first decreases and then increases; eventually, $t^*$ no longer exists 
because a large observer gain can cause the estimated infection level to exceed $1$, 
that is,
the estimated system 
states cease to be well-defined. 
\begin{figure}
\centering
\begin{overpic}[width = 0.7\columnwidth]{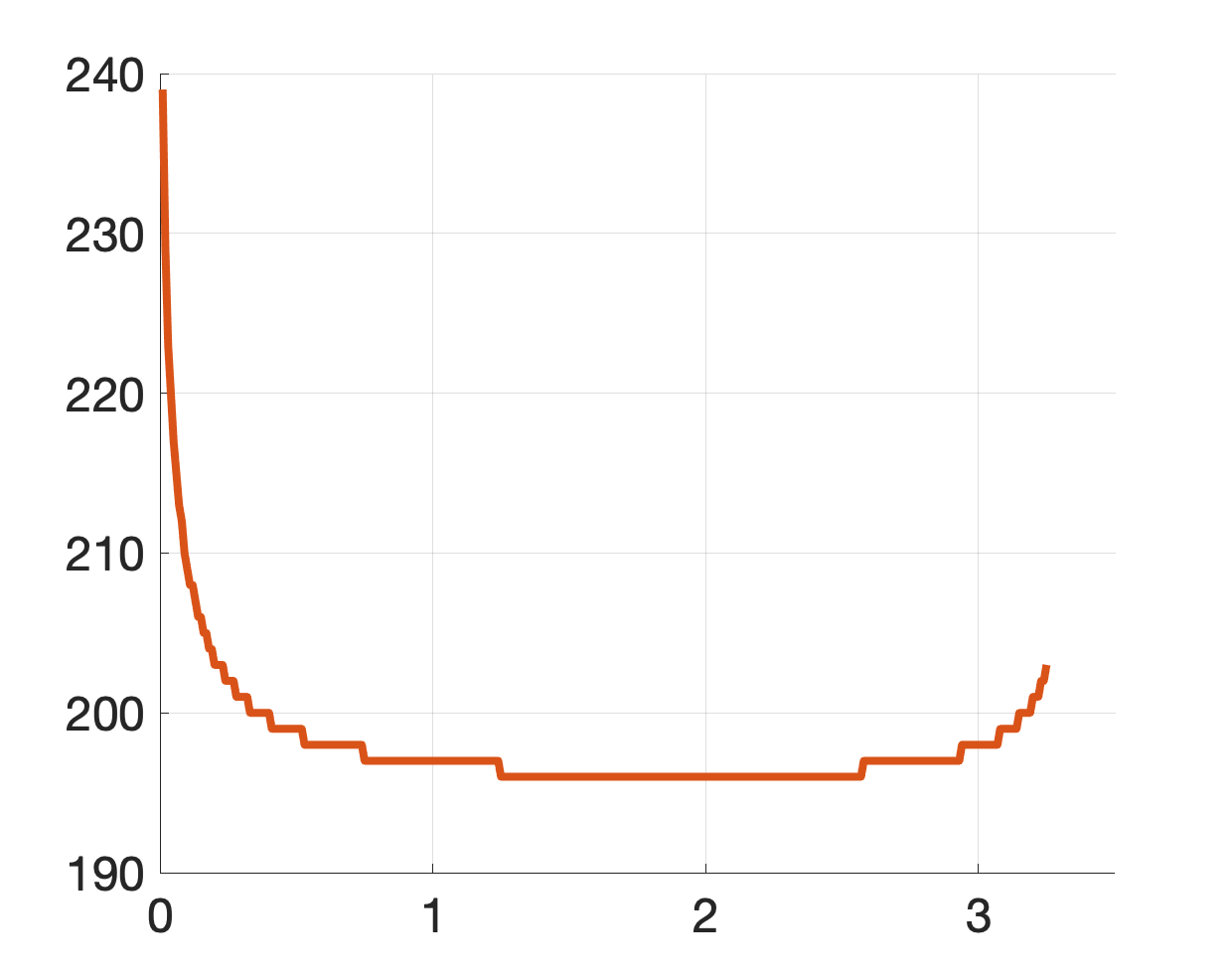}
\put(-1,38){{\parbox{0.75\linewidth}\small \rotatebox{90}{\small$t^*$}
     }}\normalsize
     \put(50,-1){\footnotesize{\parbox{0.75\linewidth}\footnotesize $\eta$
     }}\normalsize
  \end{overpic}
\caption{The scale of the observer gain vs the time for the estimation error to converge to zero.}
\label{fig:Error_Gain}
\end{figure}
In Figure~\ref{fig:Error_Gain}, when the scale of the observer gain $\eta>3.3$, 
the estimated infection level $\hat{x}_i^k[t] \rightarrow \infty$ as $t$ increases. 

We next 
focus on
finding the feasible observer gain $L^k$ such that the systems states are well-defined.
We 
set
$\tau^1=0.1$ and $\tau^2 = 0.3$, where $\tau^k$, for each $k \in [2]$, is as defined in Theorem~\ref{thm:Error_GAS}.
For different $l^k$ values, we calculate the corresponding $L^k$ values using LMI~\eqref{eq:lmi}. 
It turns out that setting $l^k  =10^{10}$ for both viruses  ensures that all the estimated system states are well-defined. Note that the feasible value of $l^k$ is not unique. For our subsequent simulations, we 
set  $l^k  =10^{10}$, and, by solving the LMI~\eqref{eq:lmi} in 
Corollary~\ref{coro:lmi}, we  obtain the observer gain $L^k$:
\begin{equation} \small
    L^1 = \begin{bmatrix}
    0.101223398722677\\
0.0928658303375023\\
0.112524328507691\\
0.0860241631907317\\
0.0843783515296357 
    \end{bmatrix},
        L^2 = \begin{bmatrix}
    0.0853417070451051\\
0.0879525432030855\\
0.0898592088737154\\
0.0885900881539504\\
0.0897704274804431 
    \end{bmatrix}. \label{eq:observer_gain_simulation}
\end{equation}
Notice that $\|L^2\|<\|L^1\|$. 
Thus, we need less compensation from the observer gain for the system state estimation of 
Delta than for that of Omicron.


We now explore whether we are able to retrieve the observer gains for all viruses through LMI~\eqref{eq:lmi}, when the assumptions in Theorem~\ref{thm:Observable_zero} are not satisfied. If we set $\gamma_{DE}^2 = \gamma_{DE}^1=0.2$ 
in Table~\ref{table:parameters_Flu}, then 
the LMI~\eqref{eq:lmi} becomes infeasible for the Delta variant, hence we are not able to accurately estimate the infection level for the Delta variant.
Therefore, when 
the healing rates of different viruses are identical at a node, we are unable to find the appropriate observer gain, which can be interpreted as saying that we are not able to estimate the system states of each virus if the viruses are indistinguishable (i.e., the recovery rates for the two viruses are the same) from each other at any subpopulation, consistent with Theorem~\ref{thm:Observable_zero}. 

\begin{table}[h!]
\centering
\begin{tabular}{|c | c c c c c|} 
 \hline
    & FR & IT & CH & AT & DE  \\ [0.5ex] 
 \hline
 $\hat{x}^1[0]$   & 0.0037& 0.0075& 0.0056& 0.0019& 0.0056 \\
 $\hat{x}^2[0]$ & 0.0005& 0.001& 0.002& 0.001& 0.0005  \\
 \hline
\end{tabular}
\caption{Initial conditions assumed for the Luenberger observer~\eqref{eq:SIR_luenberger_observation}.}
\label{table:initial_conditions}
\end{table}

\begin{figure}
\centering
\begin{overpic}[width = 0.948\columnwidth]{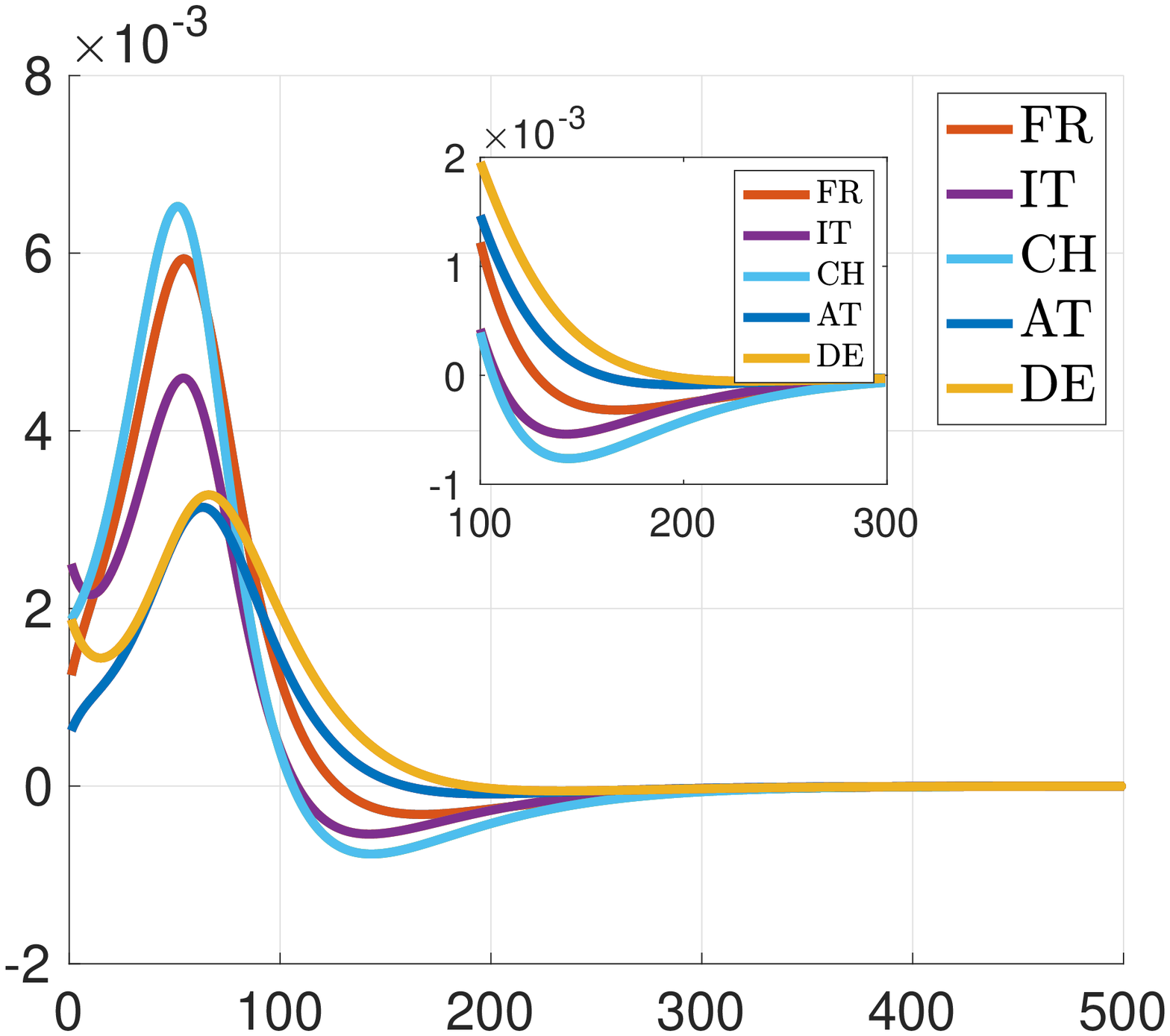}
     \put(2,35){{\parbox{0.75\linewidth}\footnotesize \rotatebox{90}{\footnotesize$x^1 -\hat{x}^1$}
     }}
     \put(50,1){\footnotesize{\parbox{0.75\linewidth}\footnotesize $t$
     }}\normalsize
   \end{overpic}
\begin{overpic}[width =  0.948\columnwidth]{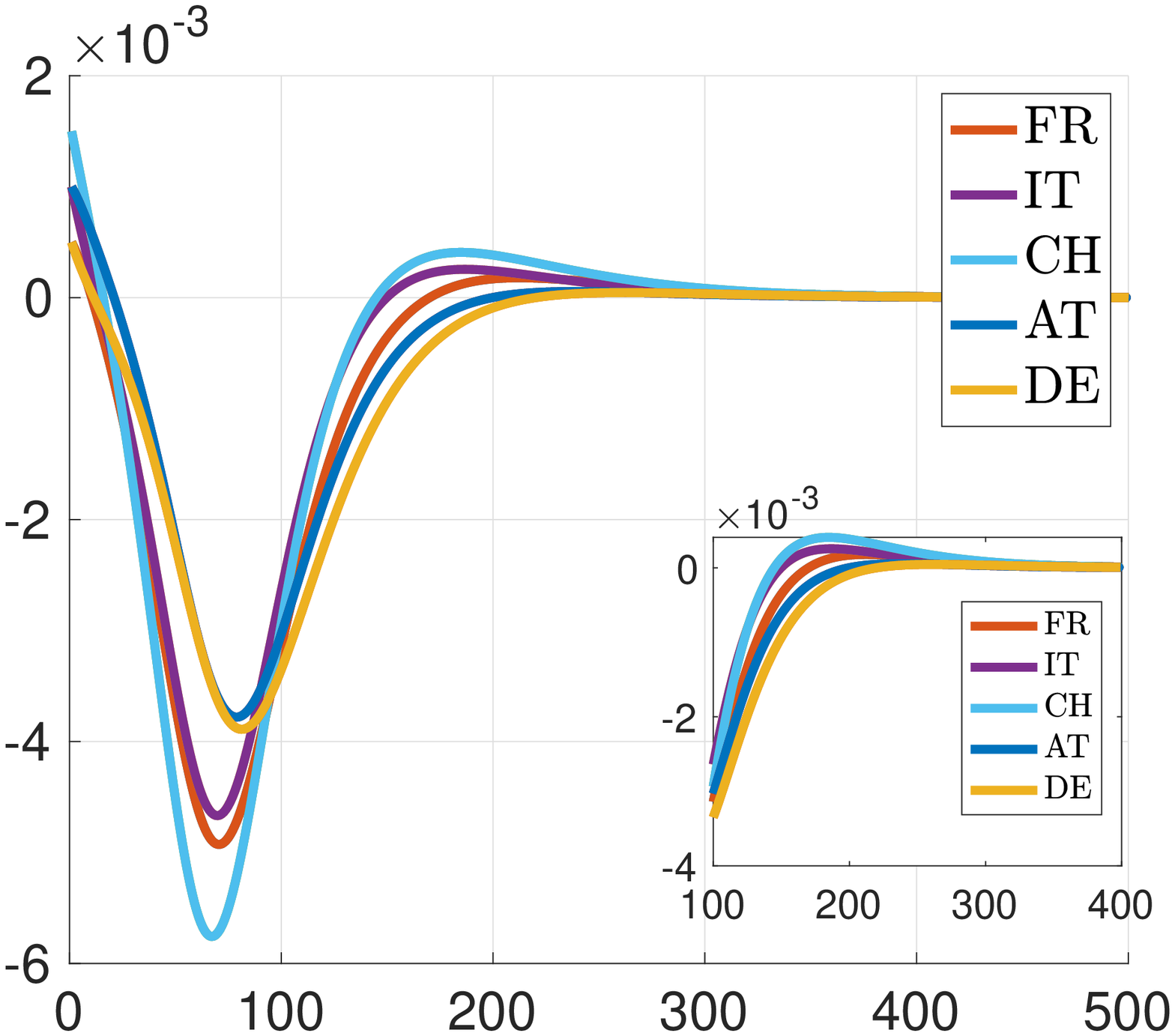}
     \put(2,35){{\parbox{0.75\linewidth}\footnotesize \rotatebox{90}{\footnotesize$x^2 -\hat{x}^2$}
     }}\normalsize
     \put(50,1){\footnotesize 
    $t$
     }
   \end{overpic}
\caption{Estimation error of infection level of the Omicron variant in each country (top); Estimation error of infection level of the Delta variant in each country (bottom).}
\label{fig:Error_nonoise}
\end{figure}

To evaluate the performance of our proposed estimation algorithm, we simulate the state estimation errors, with observer gains from~\eqref{eq:observer_gain_simulation} and initial conditions from Table~\ref{table:initial_conditions}, in Figure~\ref{fig:Error_nonoise}. 
We can see that,
in Figure~\ref{fig:Error_nonoise} (top),
the estimation error of the Omicron variant is negligible compared to its infection level, and converges to zero before its infection level does.
In Figure~\ref{fig:Error_nonoise} (bottom), the estimation error of the Delta variant starts to die out around $t=250$;
while the infection levels converge to zero when $t>300$, as seen in Figure~\ref{fig:Evolution} (right). 
Hence, the simulation results in Figure~\ref{fig:Error_nonoise} are consistent with 
Theorem~\ref{thm:Error_GAS}.

Inspired by Corollary~\ref{coro:lipschitz_const}, we now explore 
how to properly choose
the 
Lipschitz-like constant $l^k$ for our model by denoting:
\begin{equation}
    l^k_*[t] = \frac{||w^k[t]|| }{||x^k[t]-\hat{x}^k[t]||}
\end{equation}
for all $k \in [2]$. In Figure~\ref{fig:Lip_const}, we plot the Lipschitz-like constant $l^k_*[t]$ of our network (see Figure~\ref{fig:Graph_Europe}) for both viruses over time. 
Recall from Corollary~\ref{coro:lipschitz_const} that we must choose $l^k$ such that, for each $k  \in [2]$: \begin{equation}
    l^k \geq \frac{||w^k[t]|| }{||x^k[t]-\hat{x}^k[t]||}
\end{equation}
for all $t \in \mathbb{Z}_{\geq 0}$. 
Our choice of $l^k=10^{10}>l^k_*[t]$ for all $t \in \mathbb{Z}_{\geq 0}$, as seen in Figure~\ref{fig:Lip_const}, satisfies the requirement of Corollary~\ref{coro:lipschitz_const}.


\begin{figure}
\centering
\begin{overpic}[width = 0.48\columnwidth]{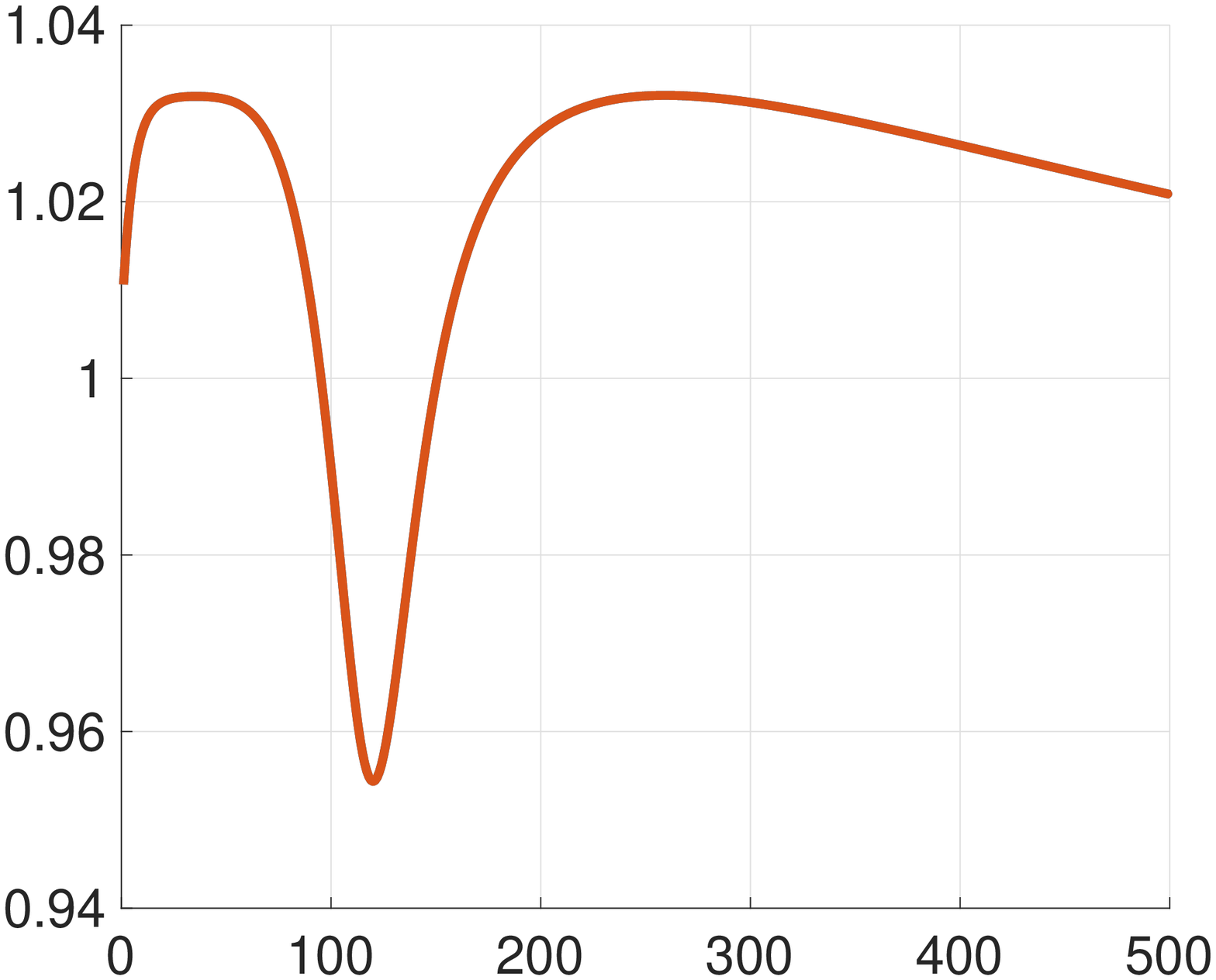}
     \put(-3,35){{\parbox{0.75\linewidth}\footnotesize \rotatebox{90}{\footnotesize$l^1_*$}
     }}
     \put(50,0){\footnotesize{\parbox{0.75\linewidth}\footnotesize $t$
     }}\normalsize
  \end{overpic}
\begin{overpic}[width =  0.48\columnwidth]{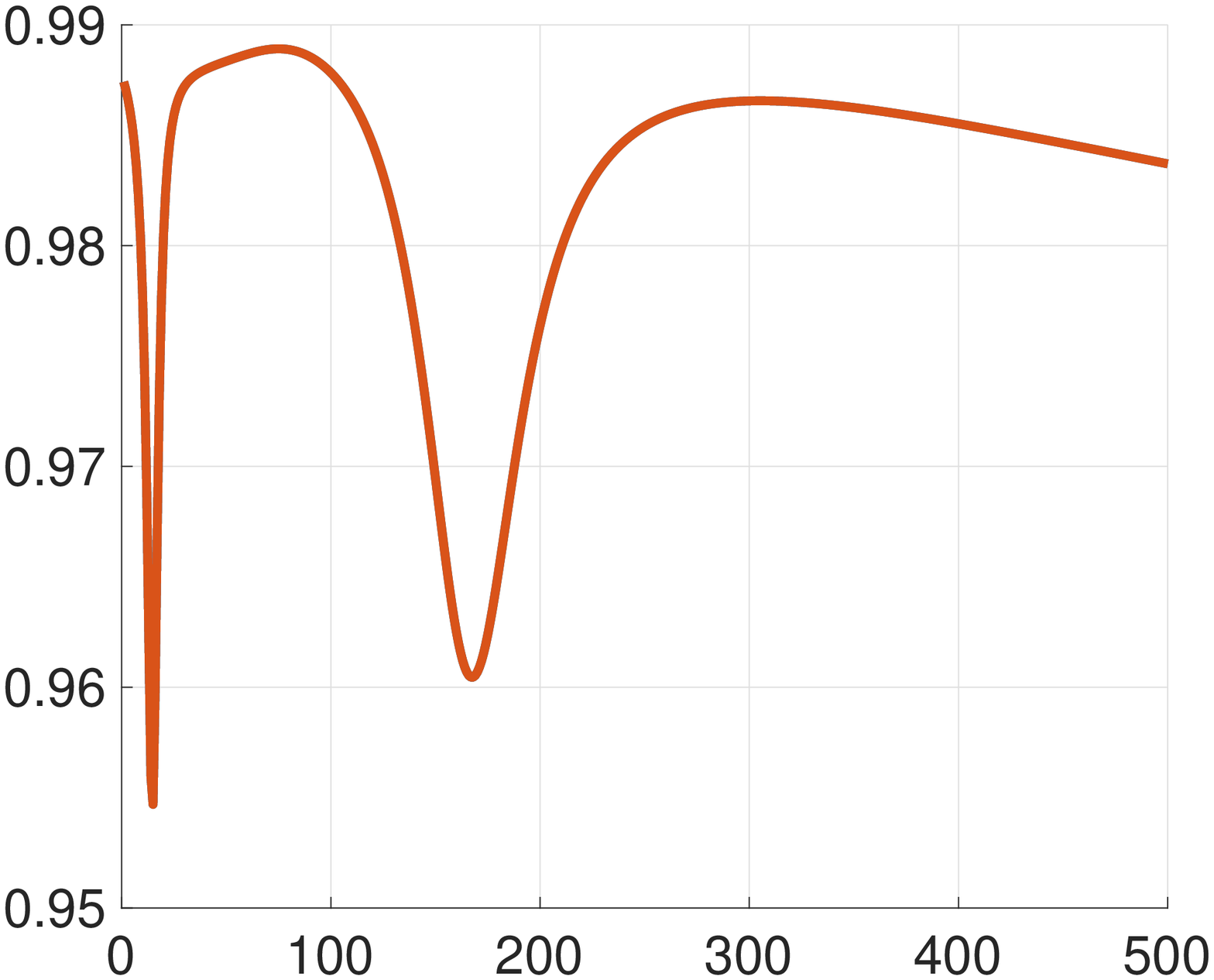}
     \put(-7,35){{\parbox{0.75\linewidth}\footnotesize \rotatebox{90}{\footnotesize$l^2_*$}
     }}\normalsize
     \put(50,0){\footnotesize 
    $t$
     }
  \end{overpic}
\caption{Lipschitz-like constant $l_*^k$ of the Omicron variant in each country (left); Lipschitz-like constant $l^k$ of the Delta variant in each country (right).}
\label{fig:Lip_const}
\end{figure}

\begin{figure}
\centering
\begin{overpic}[width = 0.48\columnwidth]{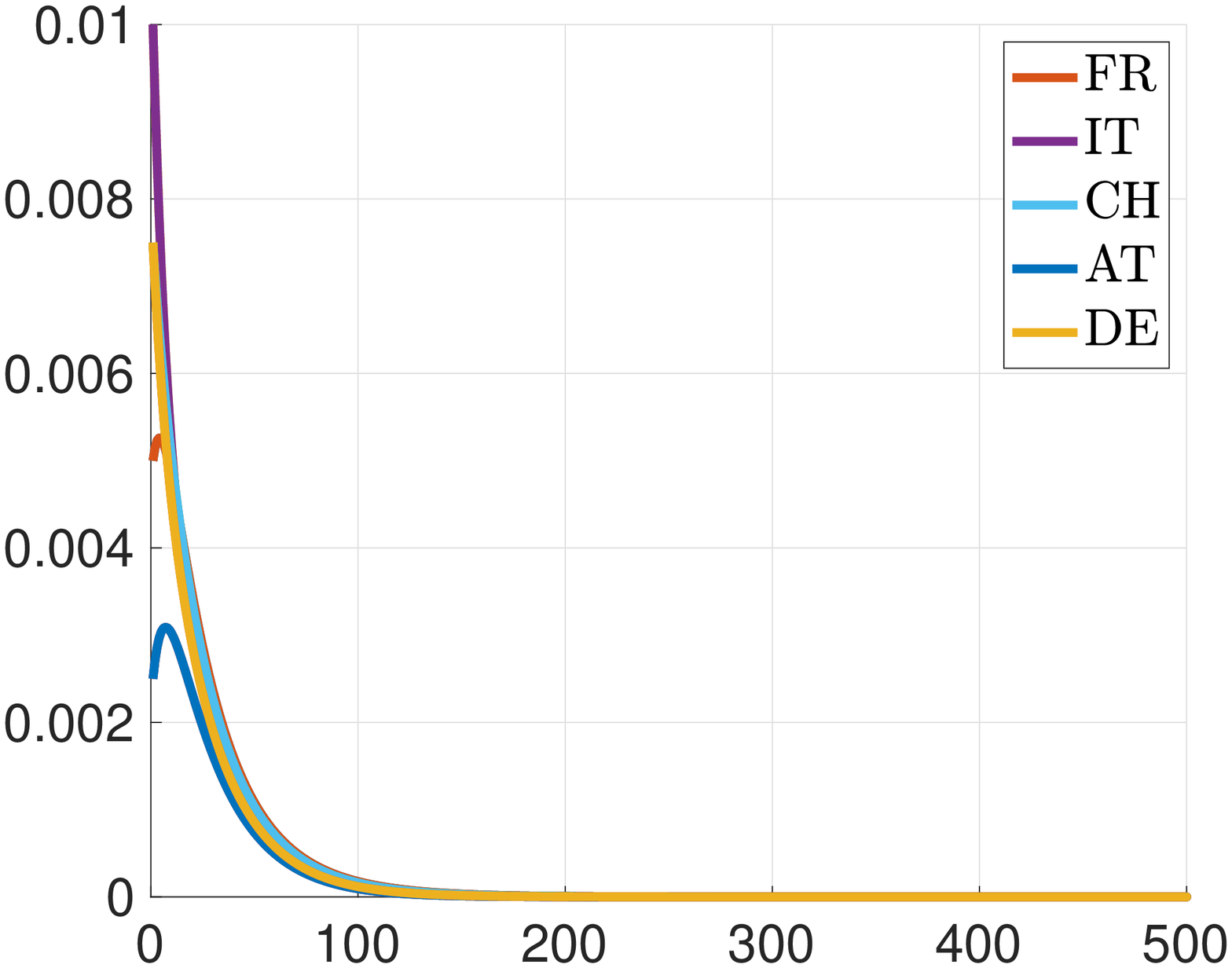}
     \put(-2,37){{\parbox{0.75\linewidth}\footnotesize \rotatebox{90}{\footnotesize$x^1$}
     }}
     \put(50,1){\footnotesize{\parbox{0.75\linewidth}\footnotesize $t$
     }}\normalsize
   \end{overpic}
   \vspace{1ex}
\begin{overpic}[width =  0.48\columnwidth]{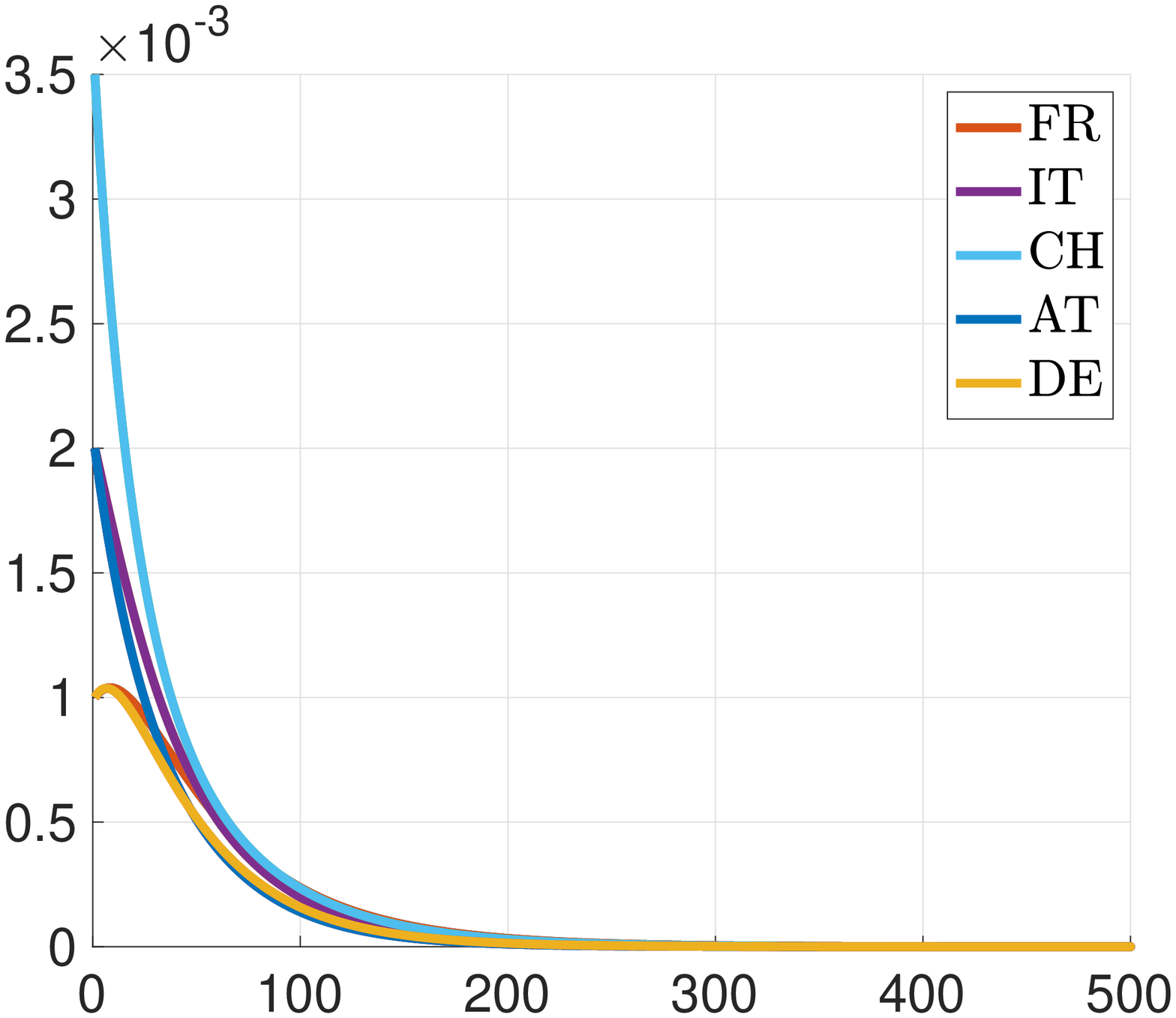}
     \put(-1,37){{\parbox{0.75\linewidth}\footnotesize \rotatebox{90}{\footnotesize$x^2 $}
     }}\normalsize
     \put(50,-1){\footnotesize 
    $t$
     }
   \end{overpic}
\caption{Evolution of infection level of the Omicron variant in each country with the distributed feedback control strategy in Eq.~\eqref{eq:control_scheme_feedback} (left); Evolution of infection level of the Delta variant in each country the distributed feedback control strategy in Eq.~\eqref{eq:control_scheme_feedback} (right).}
\label{fig:Distributed_Control}
\end{figure}

\begin{figure}
\centering
\begin{overpic}[width = 0.48\columnwidth]{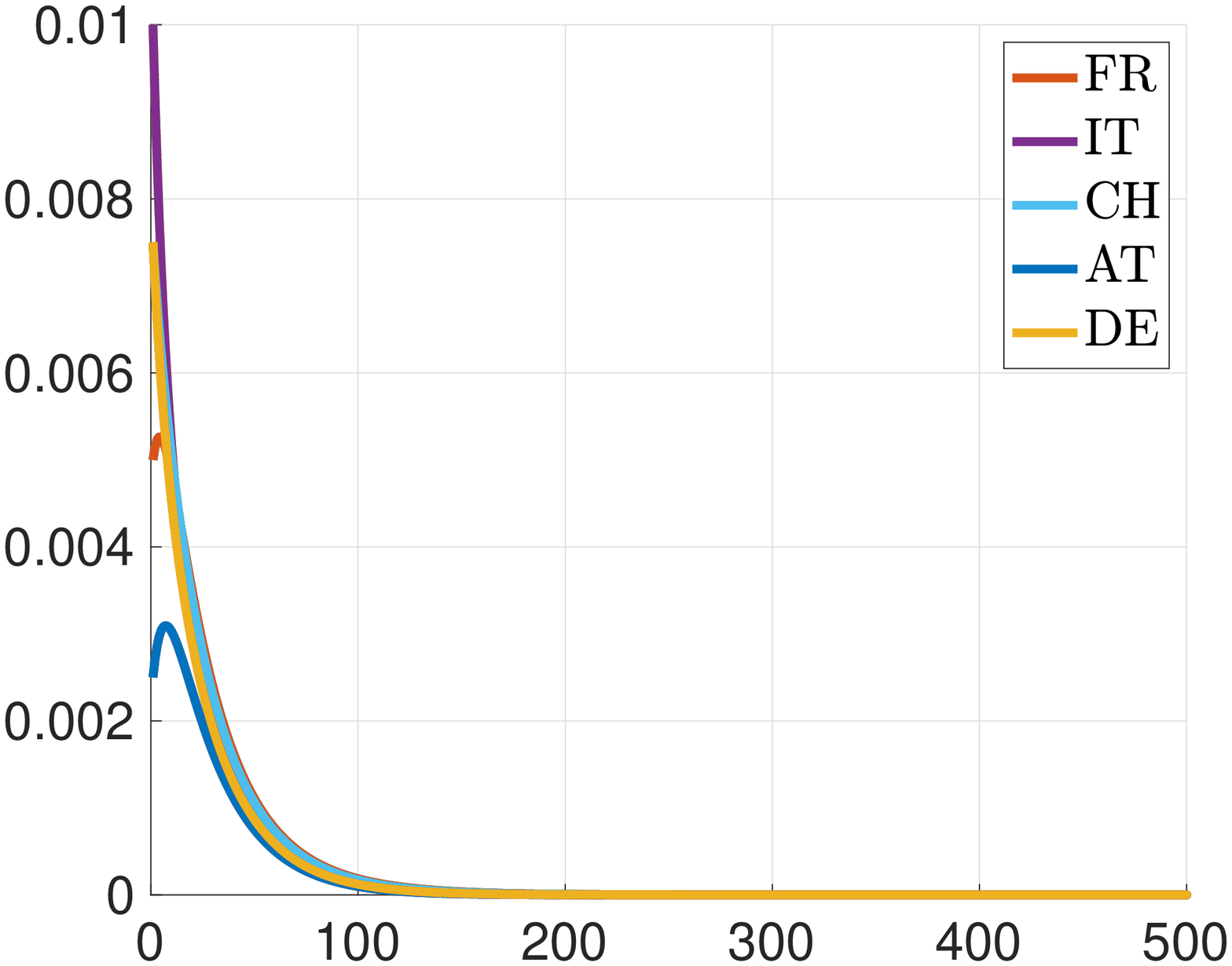}
     \put(-2,37){{\parbox{0.75\linewidth}\footnotesize \rotatebox{90}{\footnotesize$x^1$}
     }}
     \put(50,1){\footnotesize{\parbox{0.75\linewidth}\footnotesize $t$
     }}\normalsize
   \end{overpic}
   \vspace{1ex}
\begin{overpic}[width =  0.48\columnwidth]{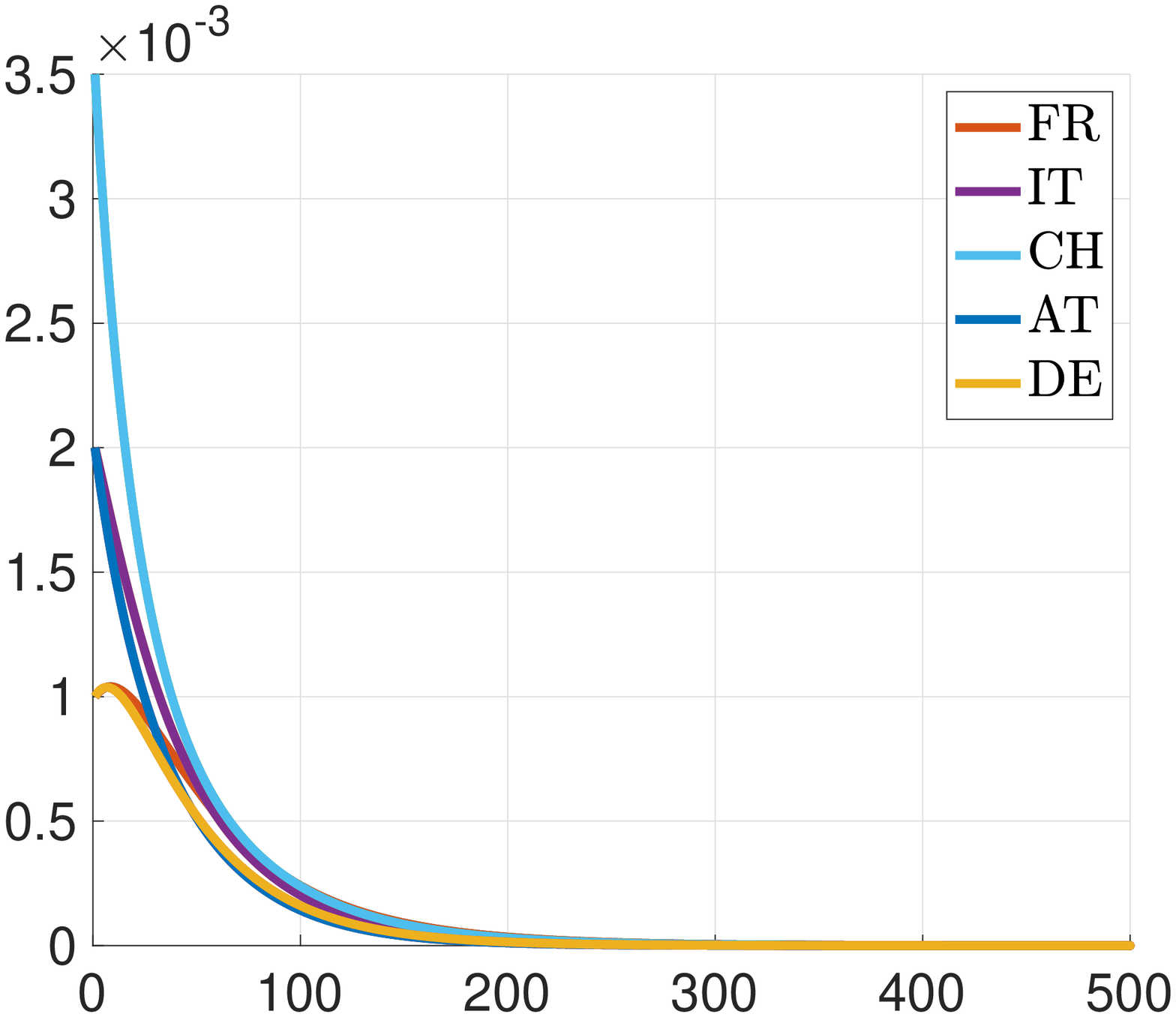}
     \put(-1,37){{\parbox{0.75\linewidth}\footnotesize \rotatebox{90}{\footnotesize$x^2 $}
     }}\normalsize
     \put(50,-1){\footnotesize 
    $t$
     }
   \end{overpic}
\caption{Evolution of the infection level of the Omicron variant in each country with the distributed mitigation strategy with a control input from Eq.~\eqref{eq:feedback_est} (left); Evolution of the infection level of the Delta variant in each country with the distributed mitigation strategy with a control input from Eq.~\eqref{eq:feedback_est} (right).}
\label{fig:Distributed_Control_est}
\end{figure}


In Figure~\ref{fig:Distributed_Control}, we implement the distributed feedback mitigation strategy of Eq.~\eqref{eq:control_scheme_feedback} over
all the countries in our network.
We can see that, 
consistent with Theorem~\ref{thm:control_feedback}, our proposed mitigation strategy is able to eradicate the spread of both viruses at an exponential rate. Moreover, maintaining the mitigation strategy over the network throughout time ensures that no subsequent infection wave occurs. 
We then explore the state feedback controller of Eq.~\eqref{eq:control_scheme_feedback} which utilizes the estimated system states instead of the actual system states, 
i.e., the control input is calculated as:
\begin{equation}
    \overline{u}^k_i[t] = -\hat{s}_i[t]\sum_{j=1}^n \beta_{ij}^k, \label{eq:feedback_est}
\end{equation}
where $\hat{s}_i[t]$ is obtained through the estimation algorithm in Eq.~\eqref{eq:SIR_luenberger_observation}. In Figure~\ref{fig:Distributed_Control_est}, we plot 
the infection level 
for both viruses using the mitigation strategy~\eqref{eq:control_scheme_feedback} with the control input~\eqref{eq:feedback_est} applied. It can be seen that the feedback controller with the estimated states is able to achieve the same goal, namely, eradicating each virus at an exponential rate. Even though 
Theorem~\ref{thm:control_feedback}
did not prove that the feedback controller that uses the estimated susceptible level Eq.~\eqref{eq:feedback_est} eradicates the viruses at an exponential rate, it seems that our estimation algorithm in 
Eq.~\eqref{eq:SIR_luenberger_observation}
is able to reconstruct the susceptible level 
accurately
in order to enable rapid eradication in this case.






\section{Conclusion}\label{sec:conclusion}
This paper has proposed a novel discrete-time networked multi-virus 
SIR
model.
We have 
provided two sufficient conditions for each virus to converge to zero exponentially. Then, we have specified a 
sufficient condition for the system to be strongly locally observable
when there are no susceptible individuals in the network.
We have proposed a distributed Luenberger
observer 
which estimates the system states. We have then provided 
a method for choosing the observer gain such that the estimation error of each virus converges to zero asymptotically.
We have introduced a distributed feedback mitigation strategy that ensured that each virus is eradicated at an exponential rate.
In the simulations, we have utilized the Luenberger state observer proposed to estimate the system states and the results illustrate that the Luenberger observer is suitable for 
state estimation 
of
our model. 

For future work, 
we plan to 
study the observability of the system
by relaxing the assumption that all nodes have zero susceptible individuals and allowing for more realistic and generic susceptible levels.
In addition, we would like to explore the system state estimation problem in the presence of measurement noise to emulate practical scenarios where there is error in the collected data.
We also aim to study more complex models, such as the SEIR and SAIR models, which incorporate additional compartments for exposed or asymptomatic individuals.





\ifCLASSOPTIONcaptionsoff
  \newpage
\fi



%

\normalem
\bibliographystyle{IEEEtran}
\bibliography{reference}

%








\begin{IEEEbiography}[{\includegraphics[width=1in,height=1.25in,
clip,keepaspectratio]{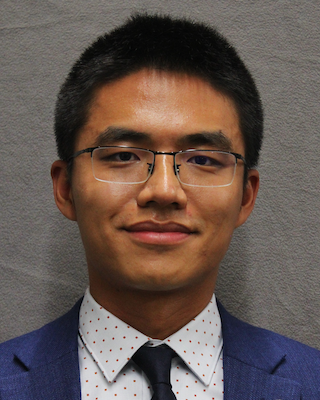}}]{Ciyuan Zhang} is a Ph.D. student in the School of Electrical and Computer Engineering at Purdue University. He received his M.S. in Electrical Engineering from Columbia University in 2018 and his B.S. in Electrical Engineering from Xi'an Jiaotong University in 2016. He is a recipient of Purdue University's Bilsland Dissertation Fellowship. His research interests include estimating and controlling epidemic dynamics on networked systems.
\end{IEEEbiography} 

\begin{IEEEbiography}[{\includegraphics[width=1in,height=1.25in, trim = 450 200 250 0,
clip,keepaspectratio]{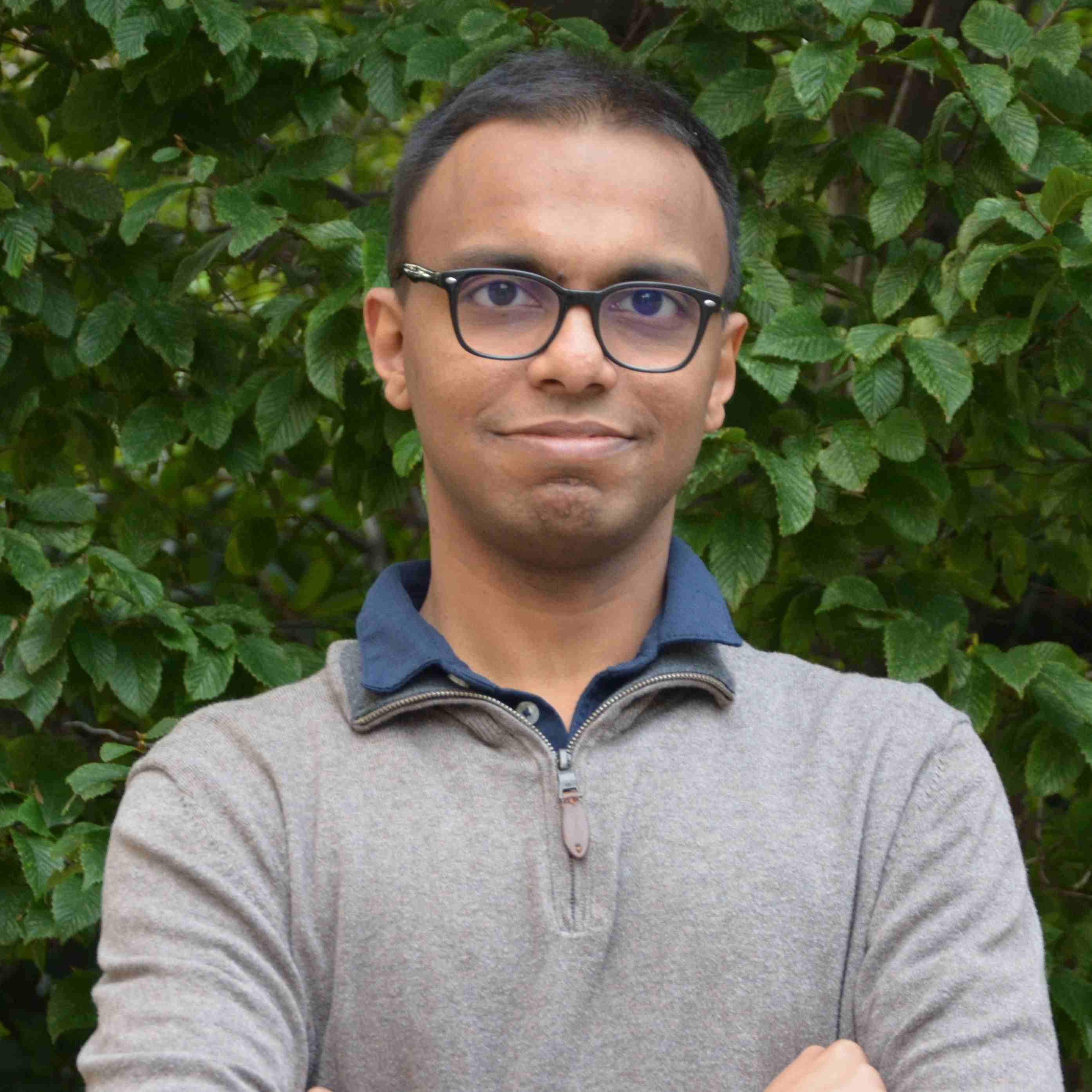}}]{Sebin Gracy} is a 
Post-Doctoral Associate in the Department of Electrical and Computer Engineering at Rice University, Houston, TX. Previously, he was a Post-Doctoral Researcher in the Division of Decision and Control Systems in the School of Electrical Engineering and Computer Science at KTH Royal Institute of Technology. He obtained his Ph.D. degree at Universit\'e Grenoble-Alpes in November, 2018. Prior to that, he obtained his M.S. and B.E. degrees in Electrical Engineering from the University of Colorado at Boulder and the University of Mumbai, in December, 2013 and June 2010, respectively. His research focuses on networked dynamical systems, with a particular emphasis on the theory of spreading processes.
\end{IEEEbiography} 

\begin{IEEEbiography} [{\includegraphics[width=1in,height=1.25in,clip,keepaspectratio]{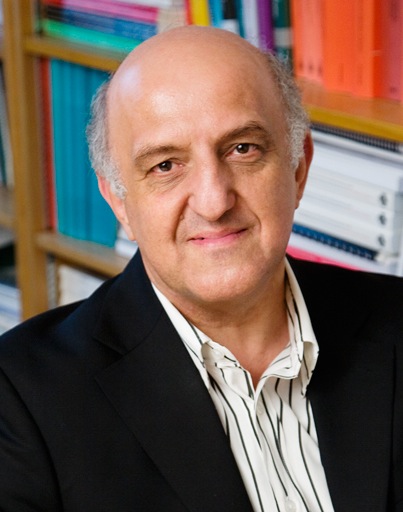}}]
{\bf Tamer Ba\c{s}ar} (S'71-M'73-SM'79-F'83-LF'13) is with the University of Illinois at Urbana-Champaign (UIUC), where he holds the academic positions of
Swanlund Endowed Chair Emeritus;
Center for Advanced Study Professor Emeritus of  Electrical and Computer Engineering;
Research Professor at the Coordinated Science
Laboratory; and Research Professor  at the Information Trust Institute.
He received B.S.E.E. from Robert College, Istanbul,
and M.S., M.Phil, and Ph.D. from Yale University. He is a member of the US National Academy
of Engineering, a member of the American Academy of Arts and Sciences, and Fellow of IEEE, IFAC and SIAM, and has served as president of IEEE CSS,
ISDG, and AACC. He has received several awards and recognitions over the years, including the
highest awards of IEEE CSS, IFAC, AACC, and ISDG; the IEEE Control Systems Award; and a number of international honorary
doctorates and professorships. He has over 1000 publications in systems, control, communications,
and dynamic games, including books on non-cooperative dynamic game theory, robust control,
network security, wireless and communication networks, and stochastic networked control. He was
the Editor-in-Chief of Automatica between 2004 and 2014, and is currently  editor of several book series. His current research interests
include stochastic teams, games, and networks; multi-agent systems and learning; spread of misinformation and epidemics;  security; and cyber-physical systems.
\end{IEEEbiography}

\begin{IEEEbiography}[{\includegraphics[width=1in,height=1.25in,
clip,keepaspectratio]{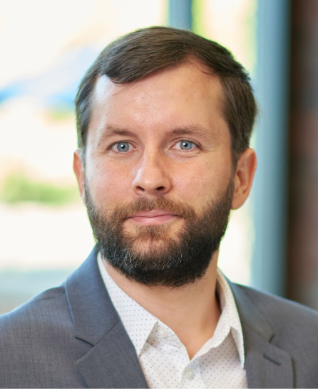}}]{Philip E. Par\'{e}} (Member, IEEE) is an Assistant Professor in the School of Electrical and Computer Engineering at Purdue University. He received his Ph.D. in Electrical and Computer Engineering (ECE) from the University of Illinois at Urbana-Champaign (UIUC) in 2018, after which he went to KTH Royal Institute of Technology in Stockholm, Sweden to be a Post-Doctoral Scholar. 
He received his B.S. in Mathematics with University Honors and his M.S. in Computer Science from Brigham Young University in 2012 and 2014, respectively. 
He is a recipient of the NSF CAREER award and was an inaugural Societal Impact Fellow at Purdue. His research focuses on networked control systems, namely modeling, analysis, and control of virus spread over networks.
\end{IEEEbiography}

\end{document}